\documentclass[10pt]{article}
\usepackage[T1]{fontenc}
\usepackage[utf8]{inputenc}
\usepackage[a4paper,width=150mm,top=25mm,bottom=25mm]{geometry}
\usepackage{macros}

\newcommand{\cF}{\mathcal{F}}

\newcommand{\mcL}{\mathscr{L}}
\newcommand{\mcM}{\mathscr{M}}
\newcommand{\mm}{\mathbf{m}}
\newcommand{\ind}{\mathbf{1}}
\newcommand{\PROD}{\mathsf{prod}}
\newcommand{\infl}{\Psi}
\newcommand{\eps}{\varepsilon}
\renewcommand{\tilde}{\widetilde}

\usepackage{quoting,xparse}

\bibliography{ssm}

\title{Counterexamples to a Weitz-Style Reduction for Multispin Systems}
\author{Kuikui Liu\thanks{email: \texttt{liukui@mit.edu}, 
Massachusetts Institute of Technology} \and Nitya Mani\thanks{Email: \texttt{nmani@mit.edu}, 
Massachusetts Institute of Technology. Mani was supported by the Hertz Graduate Fellowship and the NSF Graduate Research Fellowship Program.} \and Francisco Pernice\thanks{email: \texttt{fpernice@mit.edu}, 
Massachusetts Institute of Technology}}
\date{February 2025}

\begin{document}
\maketitle
\begin{abstract}
In a seminal paper, Weitz \cite{Wei06} showed that for two-state spin systems, such as the Ising and hardcore models from statistical physics, correlation decay on trees implies correlation decay on arbitrary graphs. The key gadget in Weitz's reduction has been instrumental in recent advances in approximate counting and sampling, from analysis of local Markov chains like Glauber dynamics to the design of deterministic algorithms for estimating the partition function. A longstanding open problem in the field has been to find such a reduction for more general multispin systems like the uniform distribution over proper colorings of a graph.

In this paper, we show that for a rich class of multispin systems, including the ferromagnetic Potts model, there are fundamental obstacles to extending Weitz's reduction to the multispin setting. A central component of our investigation is establishing nonconvexity of the image of the belief propagation functional, the standard tool for analyzing spin systems on trees. On the other hand, we provide evidence of convexity for the antiferromagnetic Potts model.

\end{abstract}

\section{Introduction}
Given a graph $G=(V,E)$ and an alphabet of size $q$, a $q$-\emph{spin system} (or \emph{pairwise Markov random field}) is a probability distribution over configurations $\sigma \in [q]^V$ where the probabilities are determined by the pairwise interactions of states $\sigma(u)$, $\sigma(v)$ for neighboring vertices $(u,v)\in E$ (see \cref{defn:mrf}). Important examples include the Ising model from statistical physics, and random solutions to well-studied constraint satisfaction problems like independent sets and proper colorings in graphs. Given such a distribution $\mu$, the following three algorithmic tasks, which are essentially equivalent \cite{JVV86}, are of fundamental importance in statistics and computer science:
\begin{tasks}
    \item\label{task:sample} \textbf{Sampling:} Generate a random $\sigma \in [q]^{d}$ such that $\Law(\sigma) \approx \mu$. %$\norm{\Law(\sigma) - \mu}_{\TV} \leq \epsilon$.
    \item\label{task:count} \textbf{Counting:} Estimate the normalizing constant for the distribution $\mu$, i.e. the \emph{partition function} (see \cref{defn:mrf}).
    \item\label{task:marginal} \textbf{Marginalization:} Given a vertex $v \in V$, estimate the marginal law of $\sigma(v)$ for $\sigma \sim \mu$.
\end{tasks}
One of the main problems in the field of counting and sampling has been to determine for a given $q$-spin system whether the algorithmic tasks above are computationally tractable. In this direction, a powerful intuition coming from statistical physics has been that such questions should be reducible, at least in many important cases, to analogous questions on trees which are typically easier to analyze. This has led to the following meta-question, which has been pursued by many works \cite{Wei06, GK07, BG08, Sly10, LLY13, SST14, GKM15}:
\begin{metaquestion}\label{metaQ:main}
    When can one reduce algorithmic questions about \cref{task:sample,task:count,task:marginal} on general graphs to structural questions about the corresponding distributions on trees?
\end{metaquestion}

The general idea above is best illustrated with an example. Given a \emph{fugacity} parameter $\lambda > 0$, the \emph{hardcore model} on a graph $G=(V,E)$ is the probability distribution $\mu = \mu_{G,\lambda}$ on $\{0,1\}^V$ (or equivalently, subsets of $V$) defined by
\begin{align*}
    \mu(I)&\propto \lambda^{|I|},
\end{align*}
for all independent sets $I\subseteq V$. One of the early seminal results in this area \cite{Wei06, Sly10} showed that \cref{task:sample,task:count,task:marginal} are computationally tractable for the hardcore model on the class of graphs of maximum degree $\Delta$ if and only if the hardcore model on the infinite $\Delta$-regular tree exhibits a certain structural property called \emph{correlation decay} (see \cref{defn:ssm} for a precise statement). An essential part of the algorithmic side of this program was a result due to Weitz \cite{Wei06}, which reduced the computation of marginal laws of vertices in general graphs to analogous computations on trees. On a tree, these marginal laws can be computed exactly via a recursive algorithm known as \emph{belief propagation} (or \emph{tree recursion}), which can be analyzed directly. Weitz also suggested his construction could be extended to the class of all $2$-spin systems, i.e. all pairwise Markov random fields where each vertex can be in one of at most two possible states. This extension was later used by \cite{LLY13} to obtain algorithmic results for a broad class of $2$-spin systems. It has been an outstanding open problem in the field to find an analogous reduction for $q$-spin systems when $q>2$.

While 2-spin theory is by now relatively well-developed, there is a scarcity of techniques in the multispin ($q>2$) setting. The following major conjecture makes this particularly evident.
\begin{conjecture}[Informal; Jerrum \cite{Jer95}]\label{c:jerrum}
Fix $\Delta,q \in \N$. If $q \geq \Delta + 1$, then \cref{task:sample,task:count,task:marginal} admit polynomial-time algorithms for the uniform distribution over proper $q$-colorings of any graph of maximum degree $\Delta$.
%    Fix $\Delta, q \in \mathbb{N}$. Consider the problem of producing an approximate sample from the uniform distribution over proper $q$-colorings of a graph of maximum degree $\Delta$. This problem can be solved in polynomial time if and only if $q\geq \Delta + 1.$
\end{conjecture}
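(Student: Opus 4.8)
Since \cref{c:jerrum} is a notoriously open problem, the realistic goal is to outline the most promising line of attack, namely the two-state template behind Weitz's theorem \cite{Wei06} and its algorithmic consequences \cite{Sly10,LLY13}. The plan is in three stages. First, one would establish a combinatorial reduction — an analogue of Weitz's self-avoiding-walk tree — expressing the marginal law of $\sigma(v)$ in a bounded-degree graph $G$ as the corresponding marginal in a finite tree $T_v$ rooted at $v$, with suitably modified vertex activities / external fields. Second, on $T_v$ one runs belief propagation: the root marginal is an explicit function of the messages arriving from the subtrees, and one analyzes the contraction of this recursion to prove strong spatial mixing (correlation decay, cf.\ \cref{defn:ssm}) for the uniform distribution on proper $q$-colorings whenever $q \geq \Delta+1$. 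Third, correlation decay on the trees $T_v$ plus the reduction yields a deterministic polynomial-time approximation of every marginal, hence — via self-reducibility \cite{JVV86} — an FPTAS for the partition function and an efficient sampler, settling \cref{task:sample,task:count,task:marginal}.

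The three stages are not equally hard. The tree recursion for colorings in stage two is by now well understood: potential-function / amortized-contraction arguments establish correlation decay on the $\Delta$-regular tree in the regime $q \geq \Delta+1$ (indeed somewhat below it), so I would treat this as available input rather than the bottleneck. The crux is stage one. Weitz's construction works not just as a graph gadget but because in the two-state case the belief-propagation message is a single real number (a ratio of marginals), conditioning on a neighbor's spin acts by a Möbius map, and — crucially — the set of achievable messages is an interval, hence \emph{convex}: this is exactly what lets one telescope a general graph down to a tree while only ever passing through states that are genuine tree marginals. For $q>2$ the message lies in the interior of the probability simplex on $[q]$, conditioning acts by a more complicated projective map, and the role of the ``interval'' is played by the image of the belief-propagation functional. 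I expect the main obstacle to be precisely that this image need not be convex; a Weitz-style reduction of the kind above needs it to be, and — as this paper demonstrates — for a rich class of multispin systems, the ferromagnetic Potts model among them, convexity fails, which is a genuine obstruction to this route.

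To push the argument through for colorings specifically, one would therefore need to exploit antiferromagnetic structure rather than run a generic reduction: the natural sub-goal is to prove that the image of the belief-propagation functional \emph{is} convex — or convex enough to support the telescoping — for the antiferromagnetic Potts model and, in the zero-temperature limit, for proper $q$-colorings with $q \geq \Delta+1$, for which this paper already provides supporting evidence. Failing that, one abandons the Weitz gadget and routes through the modern spectral/entropic machinery (spectral independence and its relatives), deriving the needed $O(1)$ bounds on pairwise influences directly from the coloring tree recursion and bootstrapping to rapid mixing of Glauber dynamics; but the tree recursion at the threshold $q=\Delta+1$ is delicate, and controlling that boundary case uniformly over all graphs of maximum degree $\Delta$ is where I expect the real difficulty to lie.
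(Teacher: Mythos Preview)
Your write-up is not a proof, and you correctly say so: \cref{c:jerrum} is a conjecture, and the paper does not prove it either. There is therefore no ``paper's own proof'' to compare against. What you have written is a research outline, and as such it is well-aligned with the paper's perspective: the two-step program (\cref{step:reduce,step:decay-trees}), the identification of \cref{step:reduce} as the bottleneck, the role of convexity of the image of $F$, and the suggestion that antiferromagnetic structure may help are exactly the themes the paper develops.

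That said, be careful not to present this as a proof strategy with only technical details missing. The paper's main theorem (\cref{t:main}) shows that a Weitz-style reduction in the sense of \cref{q:main} \emph{fails} for a broad class of multispin systems, and \cref{thm:antiferro-potts-cvx-bulk} gives only partial evidence (not a proof) that the antiferromagnetic Potts image is convex. Your stage-two claim that correlation decay on the tree is ``available input'' at $q = \Delta+1$ is also optimistic: \cite{CLMM23} gets down to $q \geq \Delta+3$, not $\Delta+1$, and the boundary case remains open even on trees. So the honest assessment is that both steps of the program have genuine gaps at the conjectured threshold, not just stage one.
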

It is known that if $q \leq \Delta$ (and $q$ is even), no efficient algorithm for approximately counting and sampling proper $q$-colorings exists unless $\NP = \RP$ \cite{GSV15}. However, the algorithmic side of this question remains open. As in the hardcore model, the conjectured threshold $\Delta + 1$ for the number of colors is the critical value of $q$ above which the uniform distribution over proper $q$-colorings exhibits correlation decay in the infinite $\Delta$-regular tree. This motivates a natural two-step approach to the conjecture:
\begin{steps}
    \item\label{step:reduce} Reduce the problem to proving (a sufficiently strong version of) correlation decay on trees.
    \item\label{step:decay-trees} Establish this correlation decay property on trees by analyzing belief propagation.
\end{steps}
The second part of this program was nearly resolved in recent work \cite{CLMM23} after earlier, weaker results due to \cite{Jon02, EGHSV19}. More specifically, a strong version of correlation decay called \emph{strong spatial mixing} (see \cref{defn:ssm}) was established on trees of maximum degree $\Delta$ whenever $q \geq \Delta + 3$. However, efficient algorithms for sampling proper $q$-colorings on general graphs of maximum degree $\Delta$ are only known for $q\geq 1.809\Delta$ \cite{Jer95, Vig00, CDMPP19,CV24}. This discrepancy is due to the absence of a suitable reduction for general $q$-spin systems that realizes the first step of the above approach. In this work, we show that there are fundamental barriers to building such a reduction using existing techniques. Our results can be interpreted as the following:
\begin{quoting}[indentfirst=true, leftmargin=\parindent, rightmargin=\parindent]\itshape
For general $q$-spin systems with $q > 2$, there is no Weitz-style gadget that reduces computation of vertex marginals in general graphs to trees.
%There is no Weitz-style reduction to compute the marginal laws of vertices from general graphs to trees for general $q$-spin systems if $q>2$.
\end{quoting}
Our results suggest that a genuinely new idea is required to resolve \cref{metaQ:main} for multispin systems. Let us now make more precise what we mean.

\subsection{Background and Main Results}\label{sec:2.1}
We begin by defining the relevant objects, reinterpreting Weitz's reduction, and formally stating our results.
\paragraph{Notation.} Throughout the rest of the paper, we let $G=(V,E)$ be a graph of maximum degree $\Delta$, and we use $d$ to refer to the degree of a specific vertex in question; hence $d\leq \Delta$. If $v \in V$, we let $G-v$ denote the graph where vertex $v$ has been removed together with all its incident edges. We will write $[q]$ for the space of colors (or spins), and use letters $\mfa,\mfb,\mfc$ to denote elements of $[q]$. The space of probability distributions over $[q]$ will be denoted $\triangle_{q}$. If $\mfc \in [q]$ is a color, we let $\delta_\mfc \in \triangle_{q}$ be the probability distribution on $[q]$ that is a point mass at $\mfc$. We let $\ind$ denote the all ones vector.

Let $\mu$ be a probability distribution over $[q]^{\mathcal{U}}$ for a finite set $\mathcal{U}$. If $S \subseteq \mathcal{U}$, we let $\mu_{S}$ be the marginal law of $\sigma_{S} \defeq \wrapp{\sigma(u)}_{u \in S}$ where $\sigma \sim \mu$; if $i \in \mathcal{U}$, we write $\mu_{i}$ instead of $\mu_{\{i\}}$. A \emph{pinning} is a partial assignment $\tau:\Lambda \to [q]$ on $\Lambda\subseteq \mathcal{U}$ such that $\mu_{\Lambda}(\tau)>0$; we let $\mu^\tau$ be the law of $\sigma \sim \mu$ conditioned on $\sigma_\Lambda = \tau$. If $\mu,\nu$ are distributions on state spaces $\Omega,\Sigma$, respectively, then we write $\mu \otimes \nu$ for the product distribution over $\Omega \times \Sigma$ given by $(\mu \otimes \nu)(\omega,\sigma) = \mu(\omega)\nu(\sigma)$ for all $(\omega,\sigma) \in \Omega \times \Sigma$. Finally, if $\mu$ is a probability distribution on $[q]^\mathcal{U}$ and $\bm{\lambda} \in \R^{\mathcal{U}\times [q]}_{\geq 0}$ is an \emph{external field}, we let $\bm{\lambda} \star \mu$ be the distribution on $[q]^{\mathcal{U}}$ defined as follows:
\begin{align*}
    (\bm{\lambda} \star \mu)(\sigma) &\propto \mu(\sigma) \cdot \prod_{i \in \mathcal{U}} \bm{\lambda}_{i, \sigma(i)}, \qquad \forall \sigma \in [q]^{\mathcal{U}}.
\end{align*}
Since we are working with graphical models, depending on the context, we will typically use $\mathcal{U} = V$ or $\mathcal{U} = [d]$, the latter of which should be thought of as the neighborhood of a vertex. If we wish to highlight the dependence of $\mu$ on $G$, we will write $\mu=\mu_G$.
\begin{definition}\label{defn:mrf}
    Given a graph $G = (V,E)$, a number of colors $q$, a symmetric \emph{interaction matrix} $A\in \R^{q\times q}_{\geq0}$, we define the Gibbs distribution of the associated \emph{$q$-spin system} as the following probability distribution $\mu$ on $[q]^V$:
    \begin{align*}
        \mu(\sigma) &= \frac{1}{Z} \prod_{uv \in E}A_{\sigma(u), \sigma(v)}
    \end{align*}
    where $Z$ is the normalizing constant guaranteeing $\sum_{\sigma} \mu(\sigma) = 1$; we call $Z$ the \emph{partition function}.
\end{definition}
Oftentimes, an external field will also be applied to these distributions. We remark that the hardcore model over independent sets and the uniform distribution over proper colorings are examples of spin systems, with interaction matrices $A = \begin{bmatrix} 0 & 1 \\ 1 & 1 \end{bmatrix}$ and $A = \ind\ind^{\top} - I$, respectively. %(respectively) the following parameters:
%\begin{align*}
%    q=2,\qquad A = \begin{bmatrix}
%    1 & 1 \\
%    1 & 0
%\end{bmatrix}, 
%\qquad \bm{\lambda}_u = (1,  \lambda) \;\; \forall u\in V
%\end{align*}
%and
%\begin{align*}
%    A = \ind\ind^\top - I_{q\times q}, 
%\qquad \bm{\lambda}_u = \ind \;\;\forall u\in V.
%\end{align*}

Given a general $q$-spin system on a graph $G$ and a vertex $v$ with neighborhood $N(v)$, one can compute the marginal law $\mu_v$ as a function of the joint law $\mu_{N(v)}$ in the graph where $v$ has been removed.
The following is a well-known generalization of the so-called belief propagation equations to general graphs (see, e.g. \cite[Proposition 6]{GK12}).
\begin{lemma}[Recursion for General Graphs]\label{lem:recursion-general}
For every $q \in \N$, symmetric interaction matrix $A \in \R_{\geq0}^{q \times q}$, graph $G=(V,E)$, and vertex $v \in V$ with $d$ neighbors $N(v) = \{u_{1},\dots,u_{d}\}$, we have
\begin{align*}
    \mu_{G,v} = F_{A,d}\wrapp{\mu_{G-v, N(r)}},
\end{align*}
where the function $F_{A,d}$ is defined as follows: for an arbitrary distribution $\nu$ over $[q]^{d}$, we have
\begin{align}
    F_{A,d,\mfc}(\nu) &\defeq \frac{G_{\mfc}(\nu)}{\sum_{\mfb \in [q]} G_{\mfb}(\nu)}, \qquad \forall \mfc \in [q] \label{eq:recursion-general} \\
    G_{A,d,\mfc}(\nu) &\defeq \E_{\tau \sim \nu}\wrapb{\prod_{i=1}^{d} A_{\mfc, \tau(i)}}, \qquad \forall \mfc \in [q]. \label{eq:unnorm-recursion-general}
\end{align}
\end{lemma}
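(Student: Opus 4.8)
The plan is to compute the marginal $\mu_{G,v}(\mfc)$ directly from the definition of the Gibbs distribution and recognize the resulting expression as $F_{A,d,\mfc}$ applied to the marginal $\mu_{G-v,N(r)}$. First I would fix a color $\mfc \in [q]$ and write out $\mu_{G,v}(\mfc) = \sum_{\sigma : \sigma(v) = \mfc} \mu_G(\sigma)$, expanding $\mu_G(\sigma)$ via the product $\frac{1}{Z}\prod_{uv \in E} A_{\sigma(u),\sigma(v)}$. The key observation is that the edge set of $G$ partitions into the edges incident to $v$ — namely $\{v u_i : i \in [d]\}$ — and the edges of $G - v$. This lets me factor, for each $\sigma$ with $\sigma(v) = \mfc$,
\begin{align*}
    \prod_{uv \in E} A_{\sigma(u),\sigma(v)} = \left(\prod_{i=1}^{d} A_{\mfc, \sigma(u_i)}\right) \cdot \prod_{e \in E(G-v)} A_{\sigma(u),\sigma(v)}.
\end{align*}

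Next I would sum over all $\sigma$ restricted to $V \setminus \{v\}$, grouping terms by the values $\tau = (\sigma(u_1),\dots,\sigma(u_d))$ on the neighborhood $N(r)$. Writing $Z'$ for the partition function of the spin system on $G - v$, the inner sum over configurations on $V \setminus \{v\}$ that agree with a fixed $\tau$ on $N(r)$ equals $Z' \cdot \mu_{G-v,N(r)}(\tau)$, since the factor $\prod_{i} A_{\mfc,\tau(i)}$ only depends on $\tau$ and pulls out. Thus $\mu_{G,v}(\mfc) = \frac{Z'}{Z} \sum_{\tau \in [q]^d} \mu_{G-v,N(r)}(\tau) \prod_{i=1}^{d} A_{\mfc,\tau(i)} = \frac{Z'}{Z}\, G_{A,d,\mfc}\wrapp{\mu_{G-v,N(r)}}$, using the definition of $G_{A,d,\mfc}$ as an expectation under $\nu = \mu_{G-v,N(r)}$. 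Finally, since $\sum_{\mfc} \mu_{G,v}(\mfc) = 1$, the constant $Z'/Z$ must equal $1 / \sum_{\mfb} G_{A,d,\mfb}\wrapp{\mu_{G-v,N(r)}}$, which gives exactly $\mu_{G,v}(\mfc) = F_{A,d,\mfc}\wrapp{\mu_{G-v,N(r)}}$ after matching with \eqref{eq:recursion-general}.

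There is no serious obstacle here; this is a standard marginalization computation. The only point requiring mild care is the bookkeeping of normalizing constants — verifying that the ratio $Z'/Z$ is exactly the reciprocal of the normalization appearing in $F_{A,d}$ — but this follows automatically from the fact that both $\mu_{G,v}$ and $F_{A,d}(\nu)$ are probability distributions on $[q]$, so one never actually needs to track $Z$ and $Z'$ explicitly; normalizing at the end suffices. One should also note the mild abuse of notation in the statement (writing $N(r)$ and $\mu_{G,v}$ where $r$ should be $v$), but this does not affect the argument. I would also remark that if an external field is present on $G$, the same computation goes through with the field on $v$ contributing an extra factor $\bm{\lambda}_{v,\mfc}$ inside $G_{A,d,\mfc}$, though the statement as given is for the field-free case.
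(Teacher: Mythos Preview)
Your proposal is correct and follows essentially the same route as the paper: both arguments split the edge product into edges incident to $v$ and edges of $G-v$, group configurations by their restriction $\tau$ to $N(r)$, recognize the inner sum as $Z_{G-v}\cdot\mu_{G-v,N(r)}(\tau)$, and then cancel the normalizing constants by dividing numerator and denominator by $Z_{G-v}$. The paper also carries an external field $\bm{\lambda}_{r,\mfc}$ through the computation, which you correctly anticipate in your final remark.
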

\begin{remark}
One can also establish an analogous recursion in the presence of external fields. Throughout this paper, we drop the subscripts $A,d$ when they are clear from context. In particular, even though belief propagation for the interaction matrix $A$ is formally described by the \emph{family} of functions $\{F_{A,d}\}_{d \in \N}$, we sometimes abuse notation and simply write $F$. We apply the same convention to $G$.
\end{remark}
To keep this paper self-contained, a short proof of \cref{lem:recursion-general} is provided in \cref{app:BP-general}. Note that when $T$ is a tree rooted at $r$, with corresponding subtrees $T_{i}$ rooted at $u_{i}$ for each $i = 1,\dots,d$, then $\mu_{G-r,N(r)}$ factorizes as the product measure $\bigotimes_{i=1}^{d} \mu_{T_{i},u_{i}}$. In particular, restricting \cref{eq:recursion-general} to product measures $\bigotimes_{i=1}^{d} \nu_{i}$ recovers the more familiar form
\begin{equation}\label{e:f-product}
    F_{\mfc}(\nu_{1},\dots,\nu_{d}) = \frac{\prod_{i=1}^{d} \sum_{\mfb \in [q]} A_{\mfc,\mfb} \nu_{i}(\mfb)}{\sum_{\mfa \in [q]} \prod_{i=1}^{d} \sum_{\mfb \in [q]} A_{\mfc,\mfa} \nu_{i}(\mfa)}
\end{equation}
of the belief propagation functional. It is well-known that correlation decay on trees (see \cref{step:decay-trees}) can be readily deduced from contraction of $F$ in the space of product measures with respect to a suitable (pseudo)metric. In terms of $F$, the goal of obtaining a Weitz-style reduction for \cref{step:reduce} can be viewed as reducing computation of $F$ on a general measure $\mu$ to computing $F$ on product measures whose marginals are obtained from $\mu$ by conditioning and marginalization. This leads to the following abstract reintepretation of Weitz's general reduction.

\begin{proposition}[\cite{Wei06}]\label{prop:weitz-reinterpret}
Let $A \in \R_{\geq0}^{2 \times 2}$ be an arbitrary $2$-spin interaction matrix with spin set $\{0, 1\}$. For every $d \in \N$, there exists a single \emph{universal} collection of external fields $\bm{\lambda}^{(1)},\dots,\bm{\lambda}^{(d)} \in \R_{\geq0}^{[d] \times \{0,1\}}$ such that for every distribution $\mu$ on $\{0,1\}^{d}$, we have that
\begin{align*}
    F(\mu) = F\wrapp{\bigotimes_{i=1}^{d} \wrapp{\bm{\lambda}^{(i)} \star \mu}_{i}}.
\end{align*}
\end{proposition}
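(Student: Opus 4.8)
The plan is to reduce everything to a single scalar ratio, exploiting the defining feature of the two‑state setting: since $q=2$, the output $F(\mu)\in\triangle_{2}$ is completely determined by $R(\mu)\defeq G_{0}(\mu)/G_{1}(\mu)$, via $F_{0}(\mu)=R(\mu)/(1+R(\mu))$ and $F_{1}(\mu)=1/(1+R(\mu))$. Thus it suffices to exhibit universal external fields $\bm{\lambda}^{(1)},\dots,\bm{\lambda}^{(d)}\in\R_{\geq0}^{[d]\times\{0,1\}}$, depending only on $A$ and $d$, for which $R(\mu)=R\bigl(\bigotimes_{i=1}^{d}\nu_{i}\bigr)$ where $\nu_{i}\defeq(\bm{\lambda}^{(i)}\star\mu)_{i}$. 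The idea is to telescope $R(\mu)$ into $d$ factors, the $k$-th of which only ``sees'' coordinate $k$ of a suitably reweighted copy of $\mu$.

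Concretely, I would introduce, for $0\leq k\leq d$, the partial sums
$$P_{k}\defeq\E_{\tau\sim\mu}\!\left[\left(\prod_{i=1}^{k}A_{0,\tau(i)}\right)\!\left(\prod_{i=k+1}^{d}A_{1,\tau(i)}\right)\right],$$
so that $P_{0}=G_{1}(\mu)$, $P_{d}=G_{0}(\mu)$, and therefore $R(\mu)=\prod_{k=1}^{d}P_{k}/P_{k-1}$. I then define $\bm{\lambda}^{(k)}$ by $\bm{\lambda}^{(k)}_{i,\mfb}=A_{0,\mfb}$ for $i<k$, $\bm{\lambda}^{(k)}_{i,\mfb}=A_{1,\mfb}$ for $i>k$, and $\bm{\lambda}^{(k)}_{k,\mfb}=1$; these are nonnegative since $A\geq0$, and manifestly independent of $\mu$. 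The key observation is that the reweighted measure $\bm{\lambda}^{(k)}\star\mu$ has density proportional to $\tau\mapsto\mu(\tau)\prod_{i<k}A_{0,\tau(i)}\prod_{i>k}A_{1,\tau(i)}$, which is exactly the common ``weight'' appearing in both $P_{k}$ and $P_{k-1}$; splitting each of $P_{k}$ and $P_{k-1}$ according to the value of $\tau(k)$ then gives $P_{k}/P_{k-1}=\bigl(\sum_{\mfb}A_{0,\mfb}\nu_{k}(\mfb)\bigr)/\bigl(\sum_{\mfb}A_{1,\mfb}\nu_{k}(\mfb)\bigr)$ with $\nu_{k}=(\bm{\lambda}^{(k)}\star\mu)_{k}$. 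Recognizing the right‑hand side as the product‑measure form \eqref{e:f-product} of belief propagation, I multiply over $k$ to obtain $R(\mu)=\prod_{k}\bigl(\sum_{\mfb}A_{0,\mfb}\nu_{k}(\mfb)\bigr)/\bigl(\sum_{\mfb}A_{1,\mfb}\nu_{k}(\mfb)\bigr)=G_{0}(\nu_{1},\dots,\nu_{d})/G_{1}(\nu_{1},\dots,\nu_{d})=R\bigl(\bigotimes_{i}\nu_{i}\bigr)$, and since $q=2$ this forces $F(\mu)=F\bigl(\bigotimes_{i}\nu_{i}\bigr)$.

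The only genuinely delicate point — the main obstacle to a fully clean write‑up — is degeneracy: if some $P_{k-1}=0$ (equivalently, the conditioning event implicit in $\bm{\lambda}^{(k)}\star\mu$ has $\mu$‑probability zero), the intermediate ratios and some $\nu_{k}$ are undefined. In that situation one checks $G_{0}(\mu)=0$ as well, so $F(\mu)=\delta_{1}$, and one recovers the claim either by restricting the fields to the support of the relevant conditional marginals or by a continuity argument in $(A,\mu)$ on the locus where $F$ is defined. I would also emphasize why this argument is special to two states: for $q\geq3$ the value $F(\mu)$ is governed by the $q-1$ independent ratios $G_{\mfc}(\mu)/G_{1}(\mu)$, and there is no telescoping that simultaneously collapses all of them into single‑coordinate contributions of a product measure — which is precisely the phenomenon the rest of the paper makes rigorous.
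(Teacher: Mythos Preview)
Your proof is correct and follows essentially the same approach as the paper: the same telescoping of $G_{0}(\mu)/G_{1}(\mu)$ into $\prod_{k}P_{k}/P_{k-1}$, the same choice of external fields $\bm{\lambda}^{(k)}_{i,\mfb}=A_{0,\mfb}$ for $i<k$, $A_{1,\mfb}$ for $i>k$, and $1$ for $i=k$, and the same identification of $P_{k}/P_{k-1}$ with the $k$-th factor of the product-measure recursion. The paper does not separately address the degenerate case where some $P_{k-1}=0$; your continuity remark is a reasonable way to handle it, though the specific claim that $P_{k-1}=0$ forces $G_{0}(\mu)=0$ is not correct in general (take $A=I_{2}$, $\mu=\delta_{(0,0)}$).
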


We postpone the proof to \cref{app:BP-general}, which essentially just follows \cite{Wei06}. \cref{prop:weitz-reinterpret} is the crucial gadget which allows one to reduce the existence of efficient algorithms for general graphs to correlation decay on trees for $2$-spin systems like the hardcore and Ising models. In view of \cref{prop:weitz-reinterpret}, it's natural to ask whether the same statement holds for general $q$-spin systems with $q > 2$. Below, we consider an even weaker version of this question, to which an affirmative answer would nonetheless yield efficient algorithms, and in particular a resolution of \cref{c:jerrum}. A positive answer to the following question (in the $\varphi=\id$ case) could be interpreted as the existence of a \emph{randomized} construction of a Weitz-style computation tree to calculate the marginal law of a given vertex in a general graph for general $q$-spin systems. We relax our question to allow potential functions $\varphi$ apart from the identity (e.g. $\varphi = \log$ or $\varphi = \sqrt{\cdot}$). As we show in \cref{sec:main-conj-implies-SSM,sec:main-conj-implies-SI}, one can obtain many of the same implications as from~\cref{prop:weitz-reinterpret} by applying the \emph{potential method}, a widely used tool (see, e.g., \cite{GK07, LLY13, RSTVY13}).

\begin{question}\label{q:main}
Is it the case that for every pair of positive integers $q \ge 3, d \ge 2$ and every $q$-spin interaction matrix $A \in \RR^{q \times q}_{\geq 0}$, there exists a univariate monotone potential function $\varphi: [0,1] \to \RR$ and a \emph{universal} distribution $\xi$ on $d$-tuples of external fields $(\bm{\lambda}^{(1)}, \ldots, \bm{\lambda}^{(d)})$, where $\bm{\lambda}^{(i)} \in \R_{\geq0}^{d \times q}$ for each $i\in [d]$, such that for every distribution $\mu$ on $[q]^d$ we have that 
\begin{equation}\label{e:round-to-product}
\varphi(F_A(\mu)) = \mathbb{E}_{(\bm{\lambda}^{(1)}, \ldots, \bm{\lambda}^{(d)}) \sim \xi} \left[ \varphi \left( F_A \left( \bigotimes_{i = 1}^d \wrapp{\bm{\lambda}^{(i)} \star \mu}_i \right) \right) \right],
\end{equation}
where $\wrapp{\bm{\lambda}^{(i)} \star \mu}_i$ is obtained by considering the ``tilted'' $\bm{\lambda}^{(i)} \star \mu$ and marginalizing out $[d] \backslash \{i\}$?
\end{question}
The condition that the potential $\varphi$ must be monotone is essentially without loss of generality, since $\varphi$ must be invertible and it is natural to assume that it is continuous. Moreover, all previously used potentials in the study of $q$-spin systems have been monotone, including $\varphi = \log$ and $\varphi = \sqrt{\cdot}$~\cite{GKM15,GK12,LLY12,LLY13,CLMM23}. % Monotonicity also facilitates a short reduction from~\cref{q:main} to strong spatial mixing and $O(1)$-spectral independence, as elaborated upon in~\cref{app:main-q-consequences}.
In~\cref{lem:image-F-cvxhull} below, we show that \cref{e:round-to-product} is always achievable with $\varphi = \id$ if the mixture measure $\xi$ is allowed to depend on the input measure $\mu$. If $\xi$ could be made universal as in \cref{q:main} (i.e. not depending on the input measure $\mu$), it would imply that one can reduce strong spatial mixing from general graphs to trees. It would also imply spectral independence and hence optimal mixing of Glauber dynamics on general graphs whenever belief propagation is a contraction on trees. We discuss these implications in detail in \cref{app:main-q-consequences}.

In this work, we construct a wide class of spin systems for arbitrary $q \ge 3, d \ge 2$ where the answer to~\cref{q:main} is negative. 
\begin{theorem}\label{t:main}
For every pair of positive integers $q \ge 3, d \ge 2$, there exists a family of symmetric interaction matrices where the answer to~\cref{q:main} is negative for any monotone potential $\varphi$.
% any symmetric interaction matrix $A \in \RR^{q \times q}_{\geq0}$, and any $\beta \ge 1$, define new symmetric interaction matrix  
% $$B = \begin{bmatrix}
% \beta & \ind_q^\top \\
% \ind_q & A 
% \end{bmatrix} \in \RR_{\geq0}^{(q+1) \times (q+1)}.$$
\end{theorem}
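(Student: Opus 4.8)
The plan is to take $A = A_b$ to be the $q \times q$ ferromagnetic Potts interaction matrix with edge weight $b$, i.e.\ $A_{\mfa\mfa} = 1$ and $A_{\mfa\mfb} = b$ for $\mfa \neq \mfb$, where $b$ is any parameter in $(0,1)$; I claim every such $A$ refutes \cref{q:main}. Fix once and for all a color $\mfc_0 \in [q]$ distinct from the colors $1$ and $2$ (this is where $q \geq 3$ is used), and let $\sigma_0 = (1,\dots,1)$ and $\sigma_1 = (2,\dots,2)$ in $[q]^d$. The key object is the segment of distributions $\mu_t \defeq (1-t)\delta_{\sigma_0} + t\delta_{\sigma_1}$, $t \in [0,1]$, which is not a product distribution when $d \geq 2$. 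The whole argument rests on two facts: (i) $F_A$ sends this segment into the hyperplane $\{x : x_{\mfc_0} = m\}$, where $m$ is the \emph{minimum} value of the $\mfc_0$-th coordinate of $F_A$; and (ii) the set of \emph{product} measures $\nu$ with $F_{A,\mfc_0}(\nu) = m$ is \emph{finite}, with $q - 1 \geq 2$ elements. Fact (ii) is the promised nonconvexity of the image of belief propagation: it says that the face of $\mathrm{conv}(F_A(\text{products}))$ on which coordinate $\mfc_0$ is minimized---a full $(q-2)$-dimensional simplex---is met by the image of belief propagation only in its vertices.

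For fact (i), a one-line computation using $A_{\mfc_0,1} = A_{\mfc_0,2} = b$ gives $G_{\mfc_0}(\mu_t) = (1-t)b^d + tb^d = b^d$ and $\sum_{\mfc\in[q]}G_\mfc(\mu_t) = (1-t)\sum_\mfc (A_{\mfc,1})^d + t\sum_\mfc (A_{\mfc,2})^d = 1 + (q-1)b^d =: N$, so $F_{A,\mfc_0}(\mu_t) \equiv b^d/N$ for all $t$. For fact (ii) I would argue that $b^d/N$ is in fact the global minimum of $F_{A,\mfc_0}$ over all distributions: for any $\mu$ the numerator $G_{\mfc_0}(\mu) = \E_{\sigma\sim\mu}\bigl[\prod_i A_{\mfc_0,\sigma(i)}\bigr] \geq b^d$ since every factor lies in $\{b,1\}$, while $\sum_\mfc G_\mfc(\mu) = \E_{\sigma\sim\mu}\bigl[\sum_\mfc\prod_i A_{\mfc,\sigma(i)}\bigr] \leq N$, because for fixed $\sigma$ the quantity $\sum_\mfc\prod_i A_{\mfc,\sigma(i)} = \sum_\mfc b^{d - m_\mfc(\sigma)}$, with $m_\mfc(\sigma) = \#\{i : \sigma(i) = \mfc\}$, is a sum of convex functions of the occupancy vector $(m_\mfc(\sigma))_\mfc$ and hence, over $\sum_\mfc m_\mfc = d$, is maximized only when $\sigma$ is constant, where it equals $N$. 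Thus $F_{A,\mfc_0}(\mu) \geq b^d/N$ always, with equality exactly when $\mu$ is supported on constant configurations $\delta_{(\mfa,\dots,\mfa)}$ with $\mfa \neq \mfc_0$; among product measures this forces $\nu = \delta_\mfa^{\otimes d}$ for a single such $\mfa$, giving the finite set $S = \{\delta_\mfa^{\otimes d} : \mfa \in [q]\setminus\{\mfc_0\}\}$ whose images $F_A(S)$ are $q - 1$ distinct points of $\triangle_q$.

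Now suppose toward a contradiction that \cref{q:main} held for $A$, with a monotone potential $\varphi$ and a universal mixture $\xi$. Two harmless reductions: $\varphi$ may be taken strictly increasing (negate it if decreasing; it is invertible by hypothesis), and $\xi$ must be supported on strictly positive external fields, since otherwise the right-hand side of \cref{e:round-to-product} is undefined at some point-mass input $\mu = \delta_\sigma$. Evaluate \cref{e:round-to-product} at $\mu = \mu_t$ and read off coordinate $\mfc_0$: the left side is the constant $\varphi(b^d/N)$ by fact (i), and each summand on the right is $\varphi$ applied to $F_{A,\mfc_0}$ of a \emph{product} measure, hence is $\geq \varphi(b^d/N)$ by fact (ii) and monotonicity. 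An average attaining its pointwise lower bound must be constant, so $F_{A,\mfc_0}\bigl(\bigotimes_i(\bm{\lambda}^{(i)}\star\mu_t)_i\bigr) = b^d/N$ for $\xi$-almost every $\bm{\lambda}$ and, taking a countable dense set of $t$ and invoking continuity, for \emph{all} $t \in [0,1]$ simultaneously. By fact (ii) this means $\bigotimes_i(\bm{\lambda}^{(i)}\star\mu_t)_i \in S$ for every $t$. But $t \mapsto \bigotimes_i(\bm{\lambda}^{(i)}\star\mu_t)_i$ is continuous on $[0,1]$---here is where strict positivity of the fields is used, so that no tilting normalizer or belief-propagation denominator vanishes along the path---so $t \mapsto F_A\bigl(\bigotimes_i(\bm{\lambda}^{(i)}\star\mu_t)_i\bigr)$ is a continuous map from the connected interval $[0,1]$ into the finite set $F_A(S)$, hence constant. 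Yet at $t = 0$ it equals $F_A(\delta_1^{\otimes d})$ and at $t = 1$ it equals $F_A(\delta_2^{\otimes d})$, and these differ in coordinates $1$ and $2$. This is the desired contradiction, and it is insensitive to $\varphi$ because monotonicity entered only through the trivial fact that a strictly monotone function cannot compress a strict inequality to an equality.

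I expect the only real work to be fact (ii): pinning down, uniformly in $q \geq 3$ and $d \geq 2$, that the coordinate-minimum face of the belief-propagation image is hit at only finitely many points. The occupancy-vector convexity computation sketched above should give this cleanly for the whole ferromagnetic Potts family and any $b \in (0,1)$; note that the hypothesis $d \geq 2$ enters exactly there, since for $d = 1$ the occupancy vector is a single standard basis vector, the argument degenerates (the minimum is then attained on a whole face), and indeed \cref{q:main} has a positive answer when $d = 1$. The remaining ingredients---the reductions on $\varphi$ and $\xi$, the coordinatewise comparison in \cref{e:round-to-product}, and the connectedness argument---are short and do not interact with the particular form of $A$.
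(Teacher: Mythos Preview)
Your proof is correct and follows essentially the same strategy as the paper's \cref{t:technical}. Both pick a segment $\mu_t$ supported on constant configurations, show that one coordinate of $F$ is constantly at its global minimum along the segment, prove that the product-measure minimizers form the finite set $\{\delta_\mfa^{\otimes d}\}$, and then derive a contradiction from universality of $\xi$. In fact your ferromagnetic Potts matrix is (after rescaling so that the off-diagonal entries equal $1$) exactly the paper's $B(\beta,A')$ with $A'$ the $(q-1)$-spin ferromagnetic Potts matrix, and your color $\mfc_0$ plays the role of the paper's special spin $\mfc^*$; your convexity-of-occupancy computation is the Potts instance of the paper's \cref{l:l1}.

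Two differences are worth recording. First, the paper keeps the augmented-spin presentation because it wants a \emph{family} of counterexamples realizing every signature with at least two positive eigenvalues (\cref{prop:signature}); you exploit $q\geq 3$ to locate $\mfc_0$ directly inside Potts, which is tidier for the bare statement of \cref{t:main} but does not extend to the signature analysis. Second, the final contradiction is reached differently: the paper observes that for full-support $\zeta$ the event $\bigl\{\bigotimes_i(\bm{\lambda}^{(i)}\star\mu^{(\zeta)})_i=\delta_\mfa^{\otimes d}\bigr\}$ depends only on which products $\prod_j \bm{\lambda}^{(i)}_{j,\mfc}$ vanish, hence only on $\bm{\lambda}$, so the induced $w_\xi$ is $\zeta$-independent; you instead reduce to strictly positive fields and use continuity of $t\mapsto \bigotimes_i(\bm{\lambda}^{(i)}\star\mu_t)_i$ into the finite set $S$. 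Your route avoids having to track how tilts act on full-support mixtures, at the cost of the strict-positivity reduction; the paper's route accommodates zero fields directly. Either way the contradiction is the same.
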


In \cref{sec:counterexample}, we explicitly describe a class of spin systems for which \cref{q:main} has a negative answer. Our counterexamples include the ferromagnetic Potts model and a natural model of partial coloring in a $q$-spin system, with a special $(q+1)$'st non-interacting spin $\mfc^*$. We give a technical description of this class of spin systems in~\cref{t:technical}. As we show in~\cref{prop:signature}, our class of counterexamples includes $(q + 1) \times (q + 1)$ interaction matrices with $k$ positive eigenvalues for any $2 \le k \le q+1$.

To certify a negative answer to \cref{q:main} for a specific interaction matrix $A$, we construct a pair of probability distributions $\mu,\nu$ for which there is no such $\xi$ which satisfies \cref{e:round-to-product} for both distributions $\mu,\nu$ simultaneously. We highlight that the witnesses $\mu,\nu$ we construct are not contrived. They are quite reasonable from the perspective of marginal bounds, which are the most widely-used devices to restrict the domain of $F$ under consideration. While they also exhibit very strong correlations, in some sense the point of reductions like \cref{prop:weitz-reinterpret} and \cref{q:main} is to tame strongly correlated distributions. Indeed, in general graphs, the vertices of the neighborhood $N(r)$ of a vertex $r$ can be close to one another even after deleting $r$.

% \subsection{The image of product measures under \textsf{BP}}
% \nitya{Where should this stuff be? I'm not convinced here is correct.} \fran{I was thinking of including this at the end of section 1.1 (without it having its own section 1.2).}

\paragraph{On the Image of Belief Propagation}
Our results also have consequences for the structure of the image of the set of product measures under $F$.
\begin{definition}\label{defn:calF-and-calFprod}
For integers $d,q \geq 2$ and a symmetric interaction matrix $A \in \R_{\geq0}^{q \times q}$, we define  
\begin{align}
    \cF_{A,d}^{\PROD} \defeq \left\{ F_A(\nu) : \nu \text{ is a product measure over } [q]^d \right\} \subseteq \triangle_q, \label{eq:F-prod} \\
    \cF_{A,d} \defeq \left\{ F_A(\mu) : \mu \text{ is an arbitrary measure over } [q]^d \right\} \subseteq \triangle_q. \label{eq:F-arbitrary}
\end{align}
\end{definition}
As we show in~\cref{lem:image-F-cvxhull}, these two sets are intimately related, with the latter being the convex hull of the former, i.e. $ \cF_{A,d} = \conv\wrapp{\cF_{A,d}^{\PROD}}$. These families of measures $\cF_{A,d}^{\PROD}, \cF_{A,d}$ naturally appear in our study of Weitz-style reductions for $q$-spin systems, with the measures appearing on the right hand side of~\cref{e:round-to-product} lying in $\cF_{A, d}^{\PROD}$. A key ingredient in our main result,~\cref{t:technical}, is to certify that for $B$ as in~\cref{e:b}, the set $\cF_{B,d}^{\PROD}$ is not convex and thus for such spin systems, $\cF_{B,d} \supsetneq \cF_{B,d}^{\PROD}$; see \cref{l:l1,rmk:nonconvex}. In particular, there are distributions $\mu$ on $[q]^d$ where $F_B(\mu)$ cannot be achieved as $F_B(\nu)$ by any single product measure $\nu$. This lack of convexity is crucial for \cref{t:main}.

The interaction matrices we construct to prove \cref{t:main} all have at least two positive eigenvalues. In other words, the presence of ferromagnetic interactions obstructs convexity of the set $\mathcal{F}_{A,d}^{\PROD}$ and yields a negative answer to \cref{q:main}. In light of this, it is natural to wonder if repulsion restores convexity of $\mathcal{F}_{A,d}^{\PROD}$ (i.e. equality of sets $\mathcal{F}_{A,d} = \mathcal{F}_{A,d}^{\PROD}$), which could then lead to a positive answer to \cref{q:main} for antiferromagnetic models like colorings. We present evidence that this is the case by showing that for the finite-temperature \emph{antiferromagnetic Potts model}, ``most'' distributions $\mu$ on $[q]^{d}$ one would encounter ``in the wild'' are such that $F(\mu) = F(\nu)$ for some product measure $\nu$. One can interpret this as saying that for the antiferromagnetic Potts interaction matrix $A = \allone\allone^{\top} - (1 - \beta)I$ where $0 \leq \beta \leq 1$, the set $\mathcal{F}_{A,d}^{\mathsf{prod}}$ at least has a ``large convex bulk''. For convenience, we will work in the range of temperatures $\beta$ in which the resulting Gibbs distribution exhibits correlation decay on the infinite $(d+1)$-regular tree, although our result below can be made to accommodate a larger range of temperatures.
\begin{theorem}\label{thm:antiferro-potts-cvx-bulk}
Fix an integer $q \geq 2$, and suppose $d \geq 2q$. Consider the antiferromagnetic Potts interaction matrix $A = \allone_{q}\allone_{q}^{\top} - (1 - \beta)I_{q}$ in the uniqueness regime with respect to the infinite $(d+1)$-regular tree, i.e. $\max\wrapc{0, 1 - \frac{q}{d + 1}} \leq \beta \leq 1$. If $\mu$ is any probability distribution over $[q]^{d}$ satisfying the lower tail bounds
\begin{align}\label{eq:color-density-lower-tail}
    \forall \mfc \in [q], \qquad \Pr_{\tau \sim \mu}\wrapb{\#\{i \in [d] : \tau(i) = \mfc\} \leq \gamma_{*} \cdot \frac{d}{q}} \leq \epsilon,
\end{align}
where $\gamma_{*} = \min\wrapc{1 - \sqrt{\beta} + \epsilon, \frac{1}{10}}$ and $\epsilon \leq \frac{1}{4}$ is arbitrary, then there exists a product measure $\nu$ over $[q]^{d}$ such that $F(\mu) = F(\nu)$.
\end{theorem}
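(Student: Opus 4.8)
The plan is to construct the product measure $\nu$ explicitly by "decoupling" $\mu$ in a way that preserves the quantities on which $F$ depends. The key observation is that from \eqref{e:f-product}, when evaluated on a product measure $\nu = \bigotimes_{i=1}^{d}\nu_{i}$, the belief propagation functional for the Potts interaction matrix $A = \allone\allone^{\top} - (1-\beta)I$ depends on the $\nu_i$ only through the products $\prod_{i=1}^{d}\bigl(1 - (1-\beta)\nu_{i}(\mfc)\bigr)$ over $\mfc \in [q]$; indeed $\sum_{\mfb}A_{\mfc,\mfb}\nu_i(\mfb) = 1 - (1-\beta)\nu_i(\mfc)$. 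More generally, for an arbitrary $\mu$ on $[q]^d$, one has $G_{\mfc}(\mu) = \E_{\tau\sim\mu}\bigl[\prod_{i=1}^{d}(1 - (1-\beta)\ind\{\tau(i)=\mfc\})\bigr] = \E_{\tau\sim\mu}\bigl[\beta^{\,\#\{i:\tau(i)=\mfc\}}\bigr]$. So to match $F(\mu) = F(\nu)$ it suffices to find a product measure $\nu$ with $G_{\mfc}(\nu) = G_{\mfc}(\mu)$ for all $\mfc \in [q]$, i.e. (up to a common scalar) to realize the vector of "color-generating-function" values $\bigl(\E_{\tau\sim\mu}[\beta^{N_{\mfc}(\tau)}]\bigr)_{\mfc\in[q]}$ by a product measure, where $N_\mfc(\tau) = \#\{i : \tau(i) = \mfc\}$.

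The first step is therefore to reduce to a one-dimensional problem for each color. For a product measure $\nu$, $G_{\mfc}(\nu) = \prod_{i=1}^{d}\bigl(1 - (1-\beta)\nu_i(\mfc)\bigr)$, and a natural symmetric ansatz is to take $\nu_i = \nu_0$ identical across $i$, so that $G_\mfc(\nu) = \bigl(1 - (1-\beta)\nu_0(\mfc)\bigr)^d$. We then need to choose $\nu_0 \in \triangle_q$ with $\bigl(1 - (1-\beta)\nu_0(\mfc)\bigr)^{d} \propto \E_{\tau\sim\mu}[\beta^{N_\mfc(\tau)}]$ for every $\mfc$; equivalently, defining $g_\mfc := \bigl(\E_{\tau\sim\mu}[\beta^{N_\mfc(\tau)}]\bigr)^{1/d}$ and requiring $1 - (1-\beta)\nu_0(\mfc) = c\, g_\mfc$ for a single normalizing constant $c>0$, which forces $\nu_0(\mfc) = \frac{1 - c\,g_\mfc}{1-\beta}$. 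Summing over $\mfc$ and using $\sum_\mfc \nu_0(\mfc) = 1$ pins down $c = \frac{q - (1-\beta)}{\sum_{\mfc} g_\mfc}$. The whole construction succeeds precisely when the resulting $\nu_0(\mfc)$ lie in $[0,1]$, i.e. when $c\, g_\mfc \in [\beta, 1]$ for all $\mfc$, which amounts to controlling the spread of the $g_\mfc$'s: we need $\max_\mfc g_\mfc / \min_\mfc g_\mfc$ to be close enough to $1$ (bounded in terms of $\beta$ and $q$).

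The second, and main, step is to verify this spread condition using the hypothesis \eqref{eq:color-density-lower-tail}. The quantity $g_\mfc^d = \E_{\tau\sim\mu}[\beta^{N_\mfc(\tau)}]$ is always at most $1$ (since $\beta \le 1$), giving the easy upper bound $g_\mfc \le 1$. For the lower bound, split the expectation according to whether $N_\mfc(\tau) \le \gamma_* d/q$ or not: on the event of large color count the contribution is tiny, at most $\beta^{\gamma_* d/q}$, but this is raised only to the $1/d$ power so contributes $\beta^{\gamma_*/q}$ to $g_\mfc$ — which is bounded below by a constant depending on $\beta, q$ — while the low-count event has probability at most $\epsilon \le 1/4$. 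Concretely $g_\mfc^d \ge (1-\epsilon)\beta^{\gamma_* d/q}$, so $g_\mfc \ge (1-\epsilon)^{1/d}\beta^{\gamma_*/q} \ge (3/4)^{1/d}\beta^{\gamma_*/q}$, and combined with $g_\mfc \le 1$ this bounds the ratio. The delicate part is that we need $c\, g_\mfc \ge \beta$, i.e. $g_\mfc \ge \beta \cdot \frac{\sum_\mfb g_\mfb}{q - (1-\beta)} \ge \beta \cdot \frac{q \cdot \min_\mfb g_\mfb \,\cdot\, (\text{something})}{q-(1-\beta)}$; here the choice $\gamma_* = \min\{1 - \sqrt\beta + \epsilon, 1/10\}$ and the regime $d \ge 2q$ are exactly calibrated so that $\beta^{\gamma_*/q} \ge \sqrt\beta^{\,2/q \cdot \text{(bounded)}}$ stays above the threshold $\beta \cdot q/(q - (1-\beta))$ after accounting for the $(3/4)^{1/d}$ and $c$ factors. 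This is where the hypothesized relationship between $\gamma_*$, $\beta$, $\epsilon$, and the uniqueness condition $\beta \ge 1 - q/(d+1)$ gets used, and where I expect the bulk of the technical bookkeeping to lie: showing that the worst case (all mass of $\mu$ placed so as to minimize one $g_\mfc$ and maximize another, subject to \eqref{eq:color-density-lower-tail}) still keeps every $\nu_0(\mfc)$ in $[0,1]$.

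Finally, having produced $\nu_0 \in \triangle_q$ with $1 - (1-\beta)\nu_0(\mfc) = c\,g_\mfc$, set $\nu = \nu_0^{\otimes d}$; then $G_\mfc(\nu) = (c\, g_\mfc)^d = c^d\, \E_{\tau\sim\mu}[\beta^{N_\mfc(\tau)}] = c^d\, G_\mfc(\mu)$, so after normalization $F(\nu) = F(\mu)$, as desired. The main obstacle, to reiterate, is the interval-membership verification in step two — everything else is a direct computation flowing from the identity $\sum_\mfb A_{\mfc,\mfb}\nu_i(\mfb) = 1 - (1-\beta)\nu_i(\mfc)$ specific to the Potts matrix.
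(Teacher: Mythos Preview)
Your reduction to the i.i.d.\ ansatz $\nu = \nu_0^{\otimes d}$ and the derivation of the membership condition --- that $\nu_0(\mfc) = (1-cg_\mfc)/(1-\beta)$ with $c = (q-(1-\beta))/\sum_\mfb g_\mfb$ must lie in the simplex --- is exactly the content of the paper's \cref{lem:iid-Fmu-criterion} specialized to antiferromagnetic Potts, so the setup is correct and matches the paper.

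The verification step, however, contains two genuine errors. First, you misidentify the binding constraint: once all $\nu_0(\mfc) \geq 0$ and they sum to $1$, each is automatically $\leq 1$, so the condition $c g_\mfc \geq \beta$ comes for free. The constraint that actually needs work is $c\,\max_\mfc g_\mfc \leq 1$, equivalently $\sum_\mfc g_\mfc \geq (q-(1-\beta))\max_\mfc g_\mfc$ --- this is exactly the paper's \cref{eq:antiferro-potts-iid-criterion}. Second, your claimed lower bound $g_\mfc^d \geq (1-\epsilon)\beta^{\gamma_* d/q}$ has the inequality reversed: on the high-probability event $\{N_\mfc > \gamma_* d/q\}$ one has $\beta^{N_\mfc} \leq \beta^{\gamma_* d/q}$ (since $\beta\le 1$), so the tail hypothesis yields the \emph{upper} bound $g_\mfc^d \leq \epsilon + (1-\epsilon)\beta^{\gamma_* d/q}$, not a lower bound. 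The hypothesis gives no pointwise lower bound on an individual $g_\mfc$ beyond the trivial $g_\mfc \geq \beta$, and that is too weak to control the spread.

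The paper avoids bounding individual $g_\mfc$'s from below altogether. Instead it lower bounds the \emph{average} $\tfrac{1}{q-1}\sum_{\mfc \neq \mfc^*} g_\mfc$ via two Jensen steps (concavity of $x\mapsto x^{1/d}$, then convexity of $x\mapsto\beta^x$) together with the identity $\sum_\mfc \alpha(\mfc)=d$, which converts the sum over $\mfc\neq\mfc^*$ into an expression depending only on $\alpha(\mfc^*)$. Both sides of the resulting inequality then involve only the law of $\alpha(\mfc^*)$, and the tail bound --- now used in the correct (upper-bound) direction --- controls both simultaneously. This coupling of the colors through the constraint $\sum_\mfc \alpha(\mfc)=d$ is the missing idea in your sketch.
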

One should interpret $\gamma_{*}$ as an ``effective lower bound'' on the number of neighbors colored $\mfc$ as a fraction of the completely balanced allocation $d/q$, for any given color $\mfc \in [q]$. Also, note that $1 - \sqrt{\beta} \leq 1 - \beta \leq \frac{q}{d+1}$, which should be thought of as small since we fix $q$ and let $d \geq 2q$ grow.

\begin{remark}
We emphasize that these tail bounds are quite weak. To illustrate the applicability of \cref{thm:antiferro-potts-cvx-bulk}, observe that if $\mu = \mu_{G-r,N(r)}$ is the marginal distribution over the neighborhood of a vertex $r$ in a graph $G$ of maximum degree $d + 1$, then $\mu$ must satisfy the marginal lower bound
\begin{align*}
    \mu_{i}(\mfc) \geq \frac{\beta^{d}}{q - 1 + \beta^{d}}
\end{align*}
for each neighbor $u_{i} \in N(r)$, which is attained by pinning the entire neighborhood of $u_{i}$ to the color $\mfc$. Hence, the law of $\alpha(\mfc) \defeq \#\{i \in [d] : \tau(i) = \mfc\}$ stochastically dominates the random variable $\sum_{i=1}^{d} X_{i}$, where $X_{i} = 1$ with probability $\frac{\beta^{d}}{q - 1 + \beta^{d}}$ and $X_{i} = 0$ with probability $\frac{q - 1}{q - 1 + \beta^{d}}$. In particular, by taking $\eps = c/ d$ for some sufficiently small $c < 1$, we have that $$\gamma_* \le 1 - \sqrt{\beta} + \eps \le 1 - \beta + \eps \le \frac{q}{d+1} + \frac{c}{d},$$
so that $\gamma_{*} \cdot \frac{d}{q} < 1$. %, which is possible since $1 - \sqrt{\beta} \leq 1 - \beta \le \frac{q}{d+1}$ and we can take $\epsilon = c/d$ with a sufficiently small leading constant $c < 1$, then
This yields
\begin{align*}
    \Pr_{\tau \sim \mu}\wrapb{\alpha(\mfc) \leq \gamma_{*} \cdot \frac{d}{q}} = \Pr_{\tau \sim \mu}\wrapb{\alpha(\mfc) = 0} \leq \wrapp{1 - \frac{\beta^{d}}{q - 1 + \beta^{d}}}^{d}.
\end{align*}
Since $q$ is fixed and $\beta \geq 1 - \frac{q}{d + 1}$, we have $1 - \frac{\beta^{d}}{q - 1 + \beta^{d}} \leq c_{q}$ for a constant $0 < c_{q} < 1$ independent of $d$, and so we certainly have $\Pr_{\tau \sim \mu}\wrapb{\alpha(\mfc) = 0} < \epsilon$ even for moderate $d$ (and certainly for large $d$).
\end{remark}

\subsection{Prior Works}
As mentioned in the introduction, the relation between the belief propagation equations and complexity-theoretic questions for general graphs has been heavily studied in the literature. After Weitz's seminal paper \cite{Wei06}, Sly \cite{Sly08} gave an example of a $q$-spin system where $q > 2$ for which a weak form of correlation decay known as \emph{weak spatial mixing} holds in the infinite $\Delta$-regular tree, but does not hold on some $\Delta$-regular graphs. We remark that a Weitz-style reduction would show (as it does in the $2$-spin case) that \emph{strong spatial mixing} (see \cref{defn:ssm}) on the infinite $\Delta$-regular tree implies strong spatial mixing on all $\Delta$-regular graphs. Hence, as Sly points out in his paper, his counterexample does not already rule out a Weitz-style reduction, because both the assumption and the conclusion of such a reduction would be stronger than what his example considers. For this reason, the results of \cite{Sly08} are incomparable to ours.

Li, Lu and Yin \cite{LLY13}, in the course of giving a characterization of all antiferromagnetic 2-spin systems that exhibit strong spatial mixing, gave an example of a 2-spin system for which strong spatial mixing on the $\Delta$-regular tree is not monotone in $\Delta$. In other words, there are 2-spin systems for which strong spatial mixing holds in the $\Delta$-regular tree but fails in the $k$-regular tree for some $k<\Delta.$ Sinclair, Srivastava and Thurley \cite{SST14} proved that, in the antiferromagnetic Ising model, correlation decay implies strong spatial mixing, and showed that the monotonicity in $\Delta$ for 2-spin systems is recovered under a different parametrization than the one used by \cite{LLY13}.

Important works \cite{Sly10, SS14, GSV15, GSV16, GSVY16} have also shown that failure of strong spatial mixing on trees successfully predicts algorithmic hardness for counting and sampling in antiferromagnetic spin systems. This, combined with known algorithmic results, suggests a tantalizing picture which has been confirmed in a number of important cases: for antiferromagnetic systems, algorithmic tractability of \cref{task:sample,task:count,task:marginal} in general graphs is \emph{equivalent} to strong spatial mixing on trees.

In \cite{CLMM23}, in addition to establishing strong spatial mixing for colorings on trees down to $q\geq \Delta + 3$ for any $\Delta\geq 3$, the authors gave a general reduction from rapid mixing of Glauber dynamics on large-but-constant girth graphs to strong spatial mixing on trees. This reduction is via a recursive coupling argument and is of a different nature to Weitz's combinatorial construction.

\paragraph{On Previous Computation Trees for Multispin Systems} Gamarnik and Katz \cite{GK07}, building on earlier ideas of Gamarnik and Bandyopadhyay \cite{BG08}, developed an extension of Weitz's combinatorial reduction that works for arbitrary $q$-spin systems and any $q \geq 2$, and applied it to proper colorings. However, crucially, their construction expresses the marginal law of a vertex in a general graph of maximum degree $\Delta$ as the result of a recursion analogous to \cref{eq:recursion-general} \emph{on a computation tree of maximum degree} $\Delta \cdot q$. This blowing up of the degree prevents a direct reduction from strong spatial mixing on general graphs to trees, but it still enables one to build deterministic approximate counting algorithms which work for $q \geq 2.58071\Delta + 1$ \cite{GK07, LY13}; see also \cite{LSS20, BBR24} for recent progress on deterministic counting algorithms using related ideas.

Gamarnik, Katz and Misra \cite{GKM15} used this construction to derive improved bounds for strong spatial mixing for random colorings on triangle-free graphs. One major limitation of this approach is that in order to derive algorithmic consequences, it does not suffice to establish contraction of belief propagation on trees of maximum degree $\Delta$ under any metric. Instead, one needs to show contraction \emph{with respect to an ``$\ell_\infty$-type'' metric}. On the other hand, for instance in the case of random colorings, the best known strong spatial mixing results on trees \cite{CLMM23} crucially rely on an $\ell_2$-type norm.

% \section{Preliminaries}

% \subsection{Notation}
% Fix $q \geq 2$ and let $A \in \R_{\geq0}^{q \times q}$ be a symmetric interaction matrix. For a graph $G=(V,E)$ and a collection of vertex-dependent external fields $\bm{\lambda} = \{\bm{\lambda}_{v}\}_{v \in V}$, where $\bm{\lambda}_{v} \in \R_{\geq0}^{q}$ for all $v \in V$, the \emph{Gibbs distribution} over $[q]^{V}$ of the associated \emph{spin system} is given by
% \begin{align}\label{eq:gibbs-dist}
%     \mu_{G}(\sigma) = \mu_{G,A,\bm{\lambda}}(\sigma) \propto \prod_{uv \in E} A_{\sigma(u),\sigma(v)} \prod_{v \in V} \bm{\lambda}_{v,\sigma(v)}, \qquad \forall \sigma : V \to [q].
% \end{align}
% The associated normalizing constant, i.e. the \emph{partition function}, is precisely the quantity
% \begin{align}
%     Z_{G} = Z_{G,A}(\bm{\lambda}) = \sum_{\sigma : V \to [q]} \prod_{uv \in E} A_{\sigma(u),\sigma(v)} \prod_{v \in V} \bm{\lambda}_{v,\sigma(v)}.
% \end{align}

% \subsection{\textsf{BP} on general graphs}
% A classical result in the theory of approximate counting and sampling is that approximately computing $Z_{G}$ is equivalent to approximately sampling from the Gibbs distribution $\mu_{G}$ \cite{JVV??}. Moreover, computing the partition function $Z_{G}$ can be reduced to estimating the \emph{marginal distributions} $\mu_{G,r}^{\tau}$ over $[q]$ of each vertex $r$, perhaps conditioned on various boundary conditions $\tau : S \to [q]$ where $S \subseteq V \setminus \{r\}$. Belief propagation provides a recursive way of computing these marginal distribution in trees. 

% 

\section{Proof of \texorpdfstring{\cref{t:main}}{MainThm}}\label{sec:counterexample}
We establish the following precise version of~\cref{t:main}, answering~\cref{q:main} in the negative.
\begin{defn}\label{d:m-ad}
Given positive integers $q \ge 3, d \ge 2$ and a symmetric interaction matrix $A \in \RR_{\geq0}^{q \times q}$ with spin set $[q]$, let 
$$\mcM(d, A) \defeq \arg\max_{\mfb \in [q]} \sum_{\mfc \in [q]} A_{\mfc, \mfb}^d.$$
\end{defn}

\begin{theorem}\label{t:technical}
Fix positive integers $q \ge 2, d \ge 2$. Let $A \in \RR^{q \times q}_{\geq0}$ be any symmetric interaction matrix with spin set $[q]$ that satisfies the following pair of conditions:
\begin{enumerate}[(a)]
    \item Entrywise, we have $A \ge \ind_{q} \ind_{q}^{\top}$ with strict inequality holding for at least one entry.
    \item We have $|\mcM(d, A)| \ge 2$, and there exists  $\mfc \in [q]$ and $\mfa_{1},\mfa_{2} \in \mcM$ such that $A_{\mfc,\mfa_{1}} \neq A_{\mfc,\mfa_{2}}$.
\end{enumerate}
Then, for any $\beta \ge 1$ and any monotone potential $\varphi$, there does not exist $\xi$ that satisfies~\cref{e:round-to-product} with respect to the symmetric interaction matrix  
\begin{equation}\label{e:b}
B = B(\beta, A) \defeq \begin{bmatrix}
\beta & \ind_q^\top \\
\ind_q & A 
\end{bmatrix} \in \RR_{\geq0}^{(q+1) \times (q+1)}
\end{equation}
for all probability distributions $\mu$ over $\wrapp{[q] \cup \{\mfc^{*}\}}^{d}$, where $[q]$ denotes the $q$ spins originating from interaction matrix $A$, and $\mfc^*$ denotes the special spin introduced in constructing auxiliary interaction matrix $B$.
\end{theorem}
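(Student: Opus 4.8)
## Proof proposal

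The plan is to build a pair of witness distributions $\mu, \nu$ over $\wrapp{[q] \cup \{\mfc^*\}}^d$ for which no single universal mixture measure $\xi$ can satisfy~\cref{e:round-to-product} simultaneously. The guiding intuition is that~\cref{e:round-to-product} forces $\varphi(F_B(\mu))$ and $\varphi(F_B(\nu))$ to lie in the closed convex hull (in the $\varphi$-transformed coordinates) of the set $\{\varphi(F_B(\pi)) : \pi \in \cF_{B,d}^{\PROD}\}$; more precisely, since $\xi$ is \emph{universal}, the \emph{same} $\xi$ must realize both expectations. I would first exhibit the nonconvexity phenomenon flagged in the introduction: the key structural lemma is that $\cF_{B,d}^{\PROD}$ is not convex, and in fact there is a ``dent'' near the corner corresponding to the special spin $\mfc^*$. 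Condition (a) ($A \geq \ind_q\ind_q^\top$, strictly somewhere) ensures that the $q$ spins of $A$ have strictly stronger ferromagnetic self-interaction than their interaction with $\mfc^*$, while condition (b) ($|\mcM(d,A)| \geq 2$ with an asymmetric row) guarantees there are at least two "dominant" spins $\mfa_1, \mfa_2$ that $F_B$ treats differently from the viewpoint of some spin $\mfc$, which is what breaks the symmetry needed to separate $\mu$ from $\nu$.

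Concretely, I would take $\nu$ to be (close to) a product measure all of whose marginals put almost all their mass on $\mfc^*$, possibly tilted slightly; this makes $F_B(\nu)$ sit near the $\mfc^*$-corner of $\triangle_{q+1}$, on the boundary of $\cF_{B,d}^{\PROD}$, and importantly it should be an \emph{exposed} or extreme-type point so that~\cref{e:round-to-product} forces $\xi$ to be supported (in the relevant coordinates) near that corner — i.e. $\xi$-almost-every product measure on the right-hand side must itself produce an $F_B$ value near the $\mfc^*$-corner. For $\mu$ I would take a highly correlated (non-product) distribution: for instance a mixture of point masses on the all-$\mfa_1$ and all-$\mfa_2$ configurations (plus a controlled amount of mass on $\mfc^*$-heavy configurations to respect the same marginal constraints as $\nu$). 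Because $G_{B,\mfc}$ is multilinear in the single-coordinate marginals, $F_B(\mu)$ for this correlated $\mu$ will reflect $\E[\prod_i B_{\mfc,\tau(i)}]$ with the expectation "inside" the product — and by condition (b), the ratio $\frac{G_{B,\mfc}(\mu)}{\sum_\mfb G_{B,\mfb}(\mu)}$ at the relevant spins $\mfa_1,\mfa_2,\mfc$ lands at a point that no product measure with $\mfc^*$-heavy marginals can reach. The role of $\beta \geq 1$ is to keep the $\mfc^*\mfc^*$ block from ever dominating, so that the geometry near the corner is controlled purely by $A$ via conditions (a)–(b).

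The main steps, in order: (1) record the convexity/multilinearity facts — $\cF_{B,d} = \conv(\cF_{B,d}^{\PROD})$ from~\cref{lem:image-F-cvxhull}, and multilinearity of $G_{B,d,\mfc}$ in the coordinate marginals of a product measure; (2) compute $F_B$ on the $\mfc^*$-corner family of product measures explicitly and show this family, near the corner, traces out a curve that is strictly concave (bulging away from $\triangle_{q+1}$) in the directions indexed by $\mfa_1$ vs.\ $\mfa_2$, using conditions (a) and (b) — this is the nonconvexity of $\cF_{B,d}^{\PROD}$; (3) translate~\cref{e:round-to-product} with a universal $\xi$ into the statement that $\varphi(F_B(\nu))$ being near an extreme point of the transformed set pins down the support of $\xi$ near the corner, and hence constrains $\varphi(F_B(\mu))$ to lie in the $\varphi$-image of a small convex neighborhood of the corner curve; (4) show $F_B(\mu)$ for the correlated witness violates this, i.e.\ lies strictly outside; (5) check monotonicity of $\varphi$ is not enough to rescue it — here I'd argue that any monotone (hence injective, and WLOG continuous) $\varphi$ applied coordinatewise cannot turn a strictly non-convex boundary piece into a convex one in a way that re-includes $F_B(\mu)$, because the separation we exhibit is robust: $F_B(\mu)$ and the corner curve differ at \emph{first order} in a controlled parameter, and $\varphi$ being monotone preserves the relevant ordering of coordinates. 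Finally (6) verify both witnesses can be taken to satisfy whatever domain restrictions (marginal lower bounds) one wants, so the counterexample is not "contrived."

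I expect step (5) — ruling out \emph{all} monotone potentials $\varphi$, not just $\varphi = \id$ — to be the main obstacle. For $\varphi = \id$ the argument is pure convex geometry, but a general monotone $\varphi$ acts coordinatewise on $\triangle_{q+1}$ and can curve the image set. The resolution I would pursue is to make the witnesses a one-parameter family $\mu_t, \nu_t$ and extract the obstruction in the limit $t \to 0$: near the $\mfc^*$-corner, $F_B$ values have one coordinate tending to $1$ and the others tending to $0$, and a monotone $\varphi$ — whatever its behavior — must respect the induced order and the fact that $\varphi$ composed with an affine-to-first-order map is still convex-to-first-order in the limit, so the strict first-order separation between $F_B(\mu_t)$ and the product-measure corner curve survives. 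In other words, the crux is to show the nonconvexity is "visible at the linearization," where monotone reparametrizations are powerless. If that linearized argument is too lossy, the fallback is to separate $\mu$ and $\nu$ using \emph{two} different corners (say $\mfa_1$-heavy and $\mfa_2$-heavy), forcing $\xi$ to be simultaneously concentrated near two incompatible regions, which gives a contradiction robust to any fixed $\varphi$.
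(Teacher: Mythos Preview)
Your witness $\mu$ is right: the paper also uses mixtures $\mu^{(\zeta)} = \sum_{\mfa \in \mcM} \zeta(\mfa)\,\delta_{\mfa}^{\otimes d}$. But your second witness $\nu$ and the ``near the $\mfc^*$-corner'' geometry are pointed in the wrong direction, and this is why step~(5) becomes hard for you. The decisive fact (the paper's Lemma~l:l1) is that over \emph{all} product measures $\pi$ on $([q]\cup\{\mfc^*\})^d$, the coordinate $F_{B,\mfc^*}(\pi)$ is \emph{minimized}, and minimized \emph{only}, at $\pi = \delta_{\mfa}^{\otimes d}$ for $\mfa \in \mcM(d,A)$. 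By a short computation (Lemma~l:expand-fb) the non-product witness $\mu^{(\zeta)}$ has $F_{B,\mfc^*}(\mu^{(\zeta)})$ equal to that same minimum value, for every $\zeta$. Now monotonicity of $\varphi$ handles your step~(5) in one stroke: the $\mfc^*$-coordinate of the right-hand side of~\cref{e:round-to-product} is an average of $\varphi(F_{B,\mfc^*}(\pi))$ over product measures $\pi$, and this average can equal the extreme value $\varphi(\min F_{B,\mfc^*})$ only if $\xi$-almost every product measure on the right is some $\delta_{\mfa}^{\otimes d}$ with $\mfa \in \mcM$. No asymptotics, no linearization, no ``near a corner'' analysis is needed.

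With the support of $\xi$ thus forced onto $\{\delta_{\mfa}^{\otimes d} : \mfa \in \mcM\}$, the right-hand side becomes $\sum_{\mfa \in \mcM} w_\xi(\mfa)\,\varphi(F_B(\delta_{\mfa}^{\otimes d}))$ for some fixed weights $w_\xi$ independent of $\zeta$ (universality of $\xi$), while the left-hand side is $\varphi\bigl(\sum_{\mfa} \zeta(\mfa)\,F_B(\delta_{\mfa}^{\otimes d})\bigr)$. Condition~(b) guarantees $F_{B,\mfc}(\delta_{\mfa_1}^{\otimes d}) \neq F_{B,\mfc}(\delta_{\mfa_2}^{\otimes d})$, so varying $\zeta$ moves the left-hand side while the right stays constant---contradiction. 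In short, your separate witness $\nu$ is unnecessary; the family $\mu^{(\zeta)}$ alone does the work, and the missing idea is to exploit the exact \emph{minimum} of $F_{B,\mfc^*}$ (away from the $\mfc^*$-corner, not near it), which is what makes the argument robust to arbitrary monotone $\varphi$.
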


% We will denote by $[q]$ the $q$ spins originating from interaction matrix $A$, and by $\mfc^*$ the special spin introduced in constructing auxiliary interaction matrix $B$, with respect to which we will obtain a negative answer to \cref{q:main}.

A well-known example of such an interaction matrix is the ferromagnetic Potts interaction matrix, where $A = \ind_{q}\ind_{q}^{\top} + (\beta - 1)I_{q}$ and $\beta > 1$. In \cref{prop:signature}, we construct interaction matrices $A$ satisfying the assumptions of \cref{t:technical} which demonstrate that $B$ can have exactly $k$ positive eigenvalues for any $2 \leq k \leq q + 1$. We note that all such matrices necessarily have at least two positive eigenvalues, as we show in \cref{lem:atleasttwo-poseig}. Hence, our examples notably exclude the antiferromagnetic setting, where the interaction matrix has exactly one positive eigenvalue.

Our strategy to prove~\cref{t:technical} will be to use the extremizers of $F_{B,\mfc^*}(\cdot)$ to force any candidate distribution $\xi$ to depend on the input distribution $\mu$ over $[q]^d$. To do this, we characterize those distributions $\nu$ for which $F_{B, \mfc^*}(\nu)$ is maximized.

\begin{lemma} \label{l:l1}%[Marginal upper bounds] 
For an interaction matrix \( A \in \mathbb{R}^{q \times q}_{\geq0} \) satisfying the assumptions of \cref{t:technical}, \( \beta \geq 1 \), and \( B \) as in~\cref{e:b}, we have the bound
\[
F_{B,\mfc^*}(\nu) \geq \frac{1}{1 + \max_{\mfb \in [q]} \sum_{\mfc \in [q]} A^d_{\mfc,\mfb}}
\]
for all product measures \( \nu \) over \( ([q] \cup \{\mfc^*\})^d \). 
Moreover, equality is attained if and only if $\nu = \delta_{\mfb^*}^{\otimes d}$ for some $\mfb^* \in \mcM(d, A)$ as defined in~\cref{d:m-ad}.
\end{lemma}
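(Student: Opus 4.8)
The plan is to work with the explicit product-measure form of belief propagation. Writing $\nu = \nu_1 \otimes \cdots \otimes \nu_d$ and specializing \cref{e:f-product} (equivalently \cref{eq:unnorm-recursion-general} to products), set
\[
G_{\mfc^*}(\nu) \defeq \prod_{i=1}^d \Bigl(\beta\,\nu_i(\mfc^*) + \sum_{\mfb\in[q]}\nu_i(\mfb)\Bigr), \qquad G_{\mfc}(\nu) \defeq \prod_{i=1}^d\Bigl(\nu_i(\mfc^*) + \sum_{\mfb\in[q]}A_{\mfc,\mfb}\,\nu_i(\mfb)\Bigr) \;\; (\mfc\in[q]),
\]
so that $F_{B,\mfc^*}(\nu) = \bigl(1 + \Phi(\nu)\bigr)^{-1}$ with $\Phi(\nu) \defeq \sum_{\mfc\in[q]} G_{\mfc}(\nu)/G_{\mfc^*}(\nu)$. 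Since $\beta\ge 1$ and each $\nu_i$ is a probability distribution, every factor of $G_{\mfc^*}(\nu)$ is at least $1$, so $\Phi$ is well-defined, and the lemma is equivalent to the claim $\Phi(\nu)\le M \defeq \max_{\mfb\in[q]}\sum_{\mfc\in[q]}A_{\mfc,\mfb}^d$ for all product measures $\nu$, with equality holding exactly at $\nu = \delta_{\mfb^*}^{\otimes d}$, $\mfb^*\in\mcM(d,A)$ (as in \cref{d:m-ad}).

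First I would reduce to point masses. Holding all coordinates but $\nu_i$ fixed, both $\sum_{\mfc}G_{\mfc}(\nu)$ and $G_{\mfc^*}(\nu)$ are linear in $\nu_i$, so $\Phi$ is a linear-fractional function of $\nu_i$ with strictly positive denominator, hence attains its maximum over the probability simplex of $[q]\cup\{\mfc^*\}$ at a vertex. Sweeping $i=1,\dots,d$ and replacing each $\nu_i$ by an optimal vertex (which never decreases $\Phi$) shows that the supremum of $\Phi$ over product measures equals its maximum over products of point masses $\bigotimes_i\delta_{\mfb_i}$. Next I would eliminate the special spin: if some $\mfb_i = \mfc^*$, replacing $\delta_{\mfc^*}$ by $\delta_{\mfb}$ in that coordinate multiplies $G_{\mfc^*}$ by $1/\beta\le 1$ and each $G_{\mfc}$ by $A_{\mfc,\mfb}\ge 1$, so never decreases $\Phi$; choosing $\mfb$ with $A_{\mfc_0,\mfb}>1$ for the entry $\mfc_0$ witnessing condition~(a) of \cref{t:technical} makes the swap \emph{strictly} increasing (using $G_{\mfc_0}(\nu)\ge 1>0$). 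Hence the maximum over products of point masses occurs only when all $\mfb_i\in[q]$, where $G_{\mfc^*}(\nu)=1$ and $\Phi(\nu) = \sum_{\mfc\in[q]}\prod_{i=1}^d A_{\mfc,\mfb_i}$.

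The inequality is then closed by AM--GM applied to the $d$ numbers $A_{\mfc,\mfb_1}^d,\dots,A_{\mfc,\mfb_d}^d$, which gives $\prod_i A_{\mfc,\mfb_i}\le \tfrac1d\sum_i A_{\mfc,\mfb_i}^d$ for each $\mfc$, so that $\sum_{\mfc}\prod_i A_{\mfc,\mfb_i}\le \tfrac1d\sum_i\sum_{\mfc}A_{\mfc,\mfb_i}^d\le \tfrac1d\sum_i M = M$; the value $M$ is attained at $\nu=\delta_{\mfb^*}^{\otimes d}$ for any $\mfb^*\in\mcM(d,A)$. For the equality case I would trace equality back through this chain: equality in the last display forces $\sum_{\mfc}A_{\mfc,\mfb_i}^d = M$ for every $i$ (so each $\mfb_i\in\mcM(d,A)$) and equality in every AM--GM instance (so $A_{\mfc,\mfb_1}=\cdots=A_{\mfc,\mfb_d}$ for every $\mfc$, i.e.\ the columns $A_{\cdot,\mfb_1},\dots,A_{\cdot,\mfb_d}$ coincide). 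Condition~(b) of \cref{t:technical}, which asserts precisely that the columns of $A$ indexed by $\mcM(d,A)$ are not all equal, is then what collapses these constraints to $\nu=\delta_{\mfb^*}^{\otimes d}$.

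I expect the equality characterization, not the inequality, to be the main obstacle. The two reductions above (to vertices, and removing $\mfc^*$) are one-directional, so upgrading to an ``if and only if'' requires verifying that neither step admits spurious maximizers. The linear-fractional maximization in the first step can have non-vertex maximizers exactly when two colors yield equal coefficient ratios — which corresponds to $A$ having repeated columns — so the careful argument must track these ties: one shows the maximizers of $\Phi$ are precisely the products of point masses supported on a single ``column-equivalence class'' of $A$ lying in $\mcM(d,A)$, and invokes condition~(b) to reduce this to a single spin $\mfb^*$ (for instance, for the ferromagnetic Potts matrix the columns are pairwise distinct, so this is immediate). Making the strictness in the $\mfc^*$-elimination step robust against the case $\beta = 1$ is where condition~(a)'s strict-inequality clause is essential.
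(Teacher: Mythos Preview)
Your argument is correct and follows essentially the same route as the paper: both reduce the maximization of $(1-F_{B,\mfc^*})/F_{B,\mfc^*}$ to products of point masses supported on $[q]$ (you via coordinate-wise linear-fractional quasiconvexity; the paper first sets $\nu_i(\mfc^*)=0$ by monotonicity, then relaxes to arbitrary measures and uses linearity), and both finish with AM--GM. Your discussion of the equality case---tracking column-equivalence classes and possible non-vertex ties in the linear-fractional step---is in fact more careful than the paper's one-line appeal to assumption~(b).
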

\begin{remark}\label{rmk:nonconvex}
Already, this lemma certifies that $\mathcal{F}_{B,d}^{\PROD}$ is not a convex set and hence, by \cref{lem:image-F-cvxhull}, $\mathcal{F}_{B,d}^{\PROD} \subsetneq \mathcal{F}_{B,d}$. Indeed, this lemma shows that the subset $\argmin_{\bm{p} \in \mathcal{F}_{B,d}^{\PROD}} \langle \allone_{\mfc^{*}}, \bm{p} \rangle$ is a discrete set with at least two points, which cannot be convex.
\end{remark}

\begin{proof}
Let $B = B(\beta, A)$ as in~\cref{e:b} and $\mcM = \mcM(d, A)$ as in~\cref{d:m-ad}.
Define 
\begin{equation}\label{e:ratio-c}
\widetilde{R}_A(\nu) \defeq \frac{1 - F_{B, \mfc^*}(\nu)}{F_{B, \mfc^*}(\nu)} = \sum_{\mfc \in [q]} \prod_{i=1}^{d} \left(\frac{ \nu_i(\mfc^*) + \sum_{\mfb \in [q]} A_{\mfc,\mfb} \nu_i(\mfb)}{1 + (\beta - 1)\nu_i(\mfc^*)} \right),
\end{equation}
where in the final equality, we apply~\cref{e:f-product}.
Since $A \geq \ind_q \ind_q^\top$ entrywise and 
\( \beta \geq 1 \), \( \widetilde{R}_A(\nu) \) is maximized when \( \nu_i(\mfc^*) = 0 \) for all \( i \in [d] \). Hence,
\[
\inf_{\nu \text{ over } ([q] \cup \{\mfc^*\})^d} F_{B,\mfc^*}(\nu) = \frac{1}{1 + \sup_{\nu\text{ over }[q]^d}  \underbrace{\sum_{\mfc \in [q]} \prod_{i=1}^{d} \sum_{\mfb \in [q]} A_{\mfc,\mfb} \nu_i(\mfb)}_{=: Z_A(\nu)} } = \frac{1}{1 + \sup_{\nu} Z_A(\nu)}.
\]
Thus, to prove the claim we wish to show that for all product distributions $\nu$ over $[q]^d$, 
$Z_A(\nu) \leq \max_{\mfb \in [q]} \sum_{\mfc \in [q]} A^d_{\mfc,\mfb}$, with equality if and only if $\nu = \delta_{\mfb^*}^{\otimes d}$ for some $\mfb^* \in \mcM$.
To see this, observe that
\begin{align*}
Z_A(\nu) &\leq \sup_{\mu \in \triangle_{[q]^d}} \mathbb{E}_{\tau \sim \mu} \left[ \sum_{\mfc \in [q]} \prod_{i=1}^{d} A_{\mfc,\tau(i)} \right] \\
&= \max_{\tau \in [q]^d} \sum_{\mfc \in [q]} \prod_{i=1}^{d} A_{\mfc,\tau(i)} \tag{linearity of \( \mu \mapsto \mathbb{E}_{\tau \sim \mu} \left[ \sum_{c \in [q]} \prod_{i=1}^{d} A_{c,\tau(i)} \right] \)} \\
&\leq \sup_{\alpha \in \triangle_q} \sum_{\mfc \in [q]} \prod_{\mfb \in [q]} A_{\mfc,\mfb}^{\alpha(\mfb) \cdot d} \tag{symmetry of $A$, with $\alpha(\mfb) := \frac{|\{ i \in [d] : \tau(i) = \mfb \}|}{d}$} \\
&\leq \sup_{\alpha \in \triangle_{q}} \sum_{\mfb \in [q]} \alpha(\mfb) \sum_{\mfc \in [q]} A_{\mfc,\mfb}^{d} \tag{AMGM Inequality} \\
&\leq \max_{\mfb \in [q]} \sum_{\mfc \in [q]} A^d_{\mfc,\mfb}. \tag{linearity of $\alpha \mapsto \sum_{\mfb \in [q]} \alpha(\mfb) \sum_{\mfc \in [q]} A_{\mfc,\mfb}^{d}$}
\end{align*}
Our second assumption on $A$ precludes $A$ being proportional to the all-ones matrix $\allone_{q}\allone_{q}^{\top}$, and so the inequalities above all become equalities if and only if \( \nu = \delta_{\mfb^*}^{\otimes d} \) for some $\mfb^* \in \mcM$.
\end{proof}

\begin{lemma}\label{l:expand-fb}
Let \( A \in \mathbb{R}^{q \times q}_{\geq 0} \) be an interaction matrix satisfying the assumptions of~\cref{t:technical}, \( \beta \geq 1 \), and \( B \) as in~\cref{e:b} and $\mcM$ as in~\cref{d:m-ad}. Let \( \zeta \) be a probability distribution over \( \mcM \defeq \mcM(d, A) \), and define the distribution 
\( \mu^{(\zeta)} \) on \( [q]^d \) by
\[
\mu^{(\zeta)}  \defeq \sum_{\mfa \in \mcM} \zeta(\mfa) \cdot \delta_{\mfa}^{\otimes d}.
\]
Then, we have that
\[
F_B\wrapp{\mu^{(\zeta)}} = \sum_{\mfa \in \mcM} \zeta(\mfa) \cdot F_B\wrapp{\delta_{\mfa}^{\otimes d}}.
\]
\end{lemma}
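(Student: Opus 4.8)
The plan is to exploit the fact that the \emph{unnormalized} belief propagation map $\mfc \mapsto G_{B,\mfc}$ is affine in the mixing weights $\zeta$, while the normalizing constant $\sum_{\mfb} G_{B,\mfb}$ is \emph{constant} over the support $\mcM$ of $\zeta$ --- the latter being essentially the defining property of $\mcM$. Consequently $F_B = G_B \big/ \sum_{\mfb} G_{B,\mfb}$, which is certainly not affine in general, nevertheless behaves affinely on the family $\{\delta_{\mfa}^{\otimes d} : \mfa \in \mcM\}$, which is exactly what the lemma asserts.

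First I would unwind the definitions. Since $\mu^{(\zeta)} = \sum_{\mfa \in \mcM} \zeta(\mfa)\,\delta_{\mfa}^{\otimes d}$ is a mixture of point masses at the constant configurations $\tau \equiv \mfa$, for every spin $\mfc \in [q] \cup \{\mfc^{*}\}$, equation~\cref{eq:unnorm-recursion-general} gives
\[
G_{B,\mfc}\wrapp{\mu^{(\zeta)}} = \E_{\tau \sim \mu^{(\zeta)}}\wrapb{\prod_{i=1}^{d} B_{\mfc,\tau(i)}} = \sum_{\mfa \in \mcM} \zeta(\mfa)\, B_{\mfc,\mfa}^{d} = \sum_{\mfa \in \mcM} \zeta(\mfa)\, G_{B,\mfc}\wrapp{\delta_{\mfa}^{\otimes d}},
\]
so $G_B$ restricted to this family is affine in $\zeta$.

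Next I would compute the normalizing constant and verify that it is the same for every $\mfa \in \mcM$. Reading off the $\mfa$-th column of $B$ for $\mfa \in \mcM \subseteq [q]$ --- namely $B_{\mfc^{*},\mfa} = 1$ and $B_{\mfc,\mfa} = A_{\mfc,\mfa}$ for $\mfc \in [q]$ --- yields
\[
\sum_{\mfb \in [q] \cup \{\mfc^{*}\}} G_{B,\mfb}\wrapp{\delta_{\mfa}^{\otimes d}} = \sum_{\mfb \in [q] \cup \{\mfc^{*}\}} B_{\mfb,\mfa}^{d} = 1 + \sum_{\mfc \in [q]} A_{\mfc,\mfa}^{d} = 1 + M,
\]
where $M \defeq \max_{\mfb \in [q]} \sum_{\mfc \in [q]} A_{\mfc,\mfb}^{d}$; the last equality is precisely the statement that $\mfa \in \mcM = \arg\max_{\mfb \in [q]} \sum_{\mfc \in [q]} A_{\mfc,\mfb}^{d}$, as in~\cref{d:m-ad}. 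Combining this with the affinity of $G_B$ above, the normalizing constant of $\mu^{(\zeta)}$ is $\sum_{\mfa \in \mcM} \zeta(\mfa)(1 + M) = 1 + M$ as well.

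Finally I would assemble the pieces: for every $\mfc \in [q] \cup \{\mfc^{*}\}$,
\[
F_{B,\mfc}\wrapp{\mu^{(\zeta)}} = \frac{G_{B,\mfc}\wrapp{\mu^{(\zeta)}}}{\sum_{\mfb} G_{B,\mfb}\wrapp{\mu^{(\zeta)}}} = \frac{\sum_{\mfa \in \mcM} \zeta(\mfa)\, B_{\mfc,\mfa}^{d}}{1 + M} = \sum_{\mfa \in \mcM} \zeta(\mfa)\, \frac{B_{\mfc,\mfa}^{d}}{1 + M} = \sum_{\mfa \in \mcM} \zeta(\mfa)\, F_{B,\mfc}\wrapp{\delta_{\mfa}^{\otimes d}},
\]
which is the claimed identity. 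There is essentially no technical obstacle here; the only point worth flagging is that $F_B$ is genuinely nonlinear due to normalization, so the identity truly hinges on all the $\delta_{\mfa}^{\otimes d}$ with $\mfa \in \mcM$ sharing a common normalizing constant --- this fails for point masses at colors outside $\mcM$, which is exactly why $\mcM$ rather than all of $[q]$ appears in the statement. Note also that the hypotheses $\beta \ge 1$ and conditions (a),(b) on $A$ from~\cref{t:technical} are not needed for this particular identity; they matter only for the surrounding argument.
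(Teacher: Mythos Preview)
Your proof is correct and follows essentially the same approach as the paper: linearity of $G_B$ in the input measure together with the observation that the normalizing constant $\sum_{\mfb} G_{B,\mfb}(\delta_{\mfa}^{\otimes d})$ is the same for every $\mfa \in \mcM$. Your version is slightly more explicit in computing this constant as $1+M$, whereas the paper simply notes it is constant and invokes linearity, but the argument is the same.
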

\begin{proof}
Note that 
$$F_{B, \mfc}\wrapp{\mu^{(\zeta)}} = \frac{G_{B, \mfc}\wrapp{\mu^{(\zeta)}}}{\sum_{\mfb \in \{[q] \cup \{\mfc^*\}\}} G_{B, \mfb}\wrapp{\mu^{(\zeta)}}},$$
and that the functions $G_{B, \mfc}(\nu)$ as defined in~\cref{eq:unnorm-recursion-general} are linear in the vector $\nu \in \triangle_{[q]^d}$. Note that for any $\mfa \in \mcM$ the quantity $\sum_{\mfb \in [q] \cup \{\mfc^*\}} G_{B,\mfb}(\delta_{\mfa}^{\otimes d})$ is a constant. Therefore, since $\zeta$ is a probability distribution, the quantity 
$$\sum_{\mfb \in [q] \cup \{\mfc^*\}} G_{B,\mfb}\wrapp{\mu^{(\zeta)}}$$
is a constant independent of $\zeta$. The result then follows by linearity of $G$.
\end{proof}

\begin{proof}[Proof of~\cref{t:technical}]
As before, we take $B = B(\beta, A)$ per~\cref{e:b}.
Suppose for contradiction that there was some monotone potential function $\varphi: [0,1] \to \RR$ such that~\cref{e:round-to-product} held with respect to $B$, so that there is some distribution $\xi$, not depending on $\zeta$, where
\begin{equation}\label{e:contradict}
    \varphi\wrapp{F_B\wrapp{\mu^{(\zeta)}}} = \mathbb{E}_{\wrapp{\bm{\lambda}^{(1)}, \ldots, \bm{\lambda}^{(d)}} \sim \xi} \left[ \varphi \left( F_B \left( \bigotimes_{i = 1}^d \wrapp{\bm{\lambda}^{(i)} \star \mu^{(\zeta)}}_i \right) \right) \right].
\end{equation}
Let $\mcM = \mcM(d, A)$ per~\cref{d:m-ad}.
By~\cref{l:expand-fb}, we have that 
$$\varphi\wrapp{F_B\wrapp{\mu^{(\zeta)}}} = \varphi \left(\sum_{\mfa \in \mcM} \zeta(\mfa) \cdot F_B\wrapp{\delta_{\mfa}^{\otimes d}} \right).$$
By~\cref{l:l1} and the observation that \[F_{B, \mfc^*}\wrapp{\mu^{(\zeta)}} = \frac{1}{1 + \max_{\mfb \in [q]} \sum_{\mfc \in [q]} A^d_{\mfc,\mfb}}, \]
the only way for~\cref{e:contradict} to hold for monotone $\varphi$ is if 
$$ 
\bigotimes_{i = 1}^d \wrapp{\bm{\lambda}^{(i)} \star \mu^{(\zeta)}}_i  = \delta_{\mfa}^{\otimes d} \text{ for some } \mfa \in \mcM
$$
for each collection $\wrapp{\bm{\lambda}^{(1)}, \ldots, \bm{\lambda}^{(d)}}$ in the support of $\xi$. In particular, since $\xi$ was assumed to be independent of $\zeta$, consider the following probability distribution induced over $\mcM$ by $\xi$:
\begin{align*}
    w_{\xi}(\mfa) \defeq \Pr_{\wrapp{\bm{\lambda}^{(1)},\dots,\bm{\lambda}^{(d)}} \sim \xi}\wrapb{\bigotimes_{i = 1}^d \wrapp{\bm{\lambda}^{(i)} \star \mu^{(\zeta)}}_i  = \delta_{\mfa}^{\otimes d}}, \qquad \forall \mfa \in \mcM.
\end{align*}
Then \cref{e:contradict} says that
\begin{align*}
    \varphi\wrapp{\sum_{\mfa \in \mcM} \zeta(\mfa) \cdot F_{B}\wrapp{\delta_{\mfa}^{\otimes d}}} = \sum_{\mfa \in \mcM} w_{\xi}(\mfa) \cdot \varphi\wrapp{F_{B}\wrapp{\delta_{\mfa}^{\otimes d}}}.
\end{align*}
On the one hand, the right-hand side is constant independent of $\zeta$ from our assumption that $\xi$ is universal. On the other hand, our second assumption on $A$ implies that $F_{B,\mfc}\wrapp{\delta_{\mfa_{1}}^{\otimes d}} \neq F_{B,\mfc}\wrapp{\delta_{\mfa_{2}}^{\otimes d}}$. Combined with monotonicity of $\varphi$, we see that the left-hand side is a nonconstant function of $\zeta$. Hence, we have a contradiction.

\end{proof}

\section{Evidence for Convexity of \texorpdfstring{$\mathcal{F}_{A,d}^{\PROD}$}{Fprod} for Antiferromagnetic Potts}
Our goal in this section is to prove \cref{thm:antiferro-potts-cvx-bulk}. We will construct the requisite product measure using the following technical lemma. Throughout this section, for a configuration $\tau \in [q]^{d}$, we write $\sgn(\tau) \in \Z_{\geq0}^{q}$ for the \emph{signature} of $\tau$, given by $\sgn(\tau)(\mfc) = \#\{i \in [d] : \tau(i) = \mfc\}$ for all $\mfc \in [q]$.
\begin{lemma}\label{lem:iid-Fmu-criterion}
Suppose $A = \bm{v}\bm{v}^{\top} - D$ for a positive vector $\bm{v} \in \R_{>0}^{q}$ and a diagonal matrix $D$ with entries $0 \leq D(\mfc,\mfc) \leq \bm{v}(\mfc)^{2}$. Fix a probability measure $\mu$ on $[q]^{d}$. Then the following are equivalent.
\begin{enumerate}
    \item[(1)] There exists an i.i.d. measure $\nu = \bm{p}^{\otimes d}$ for $\bm{p} \in \triangle_{[q]}$ such that $F(\mu) = F(\nu)$.
    \item[(2)] Let $\bm{y} \in \R_{\geq0}^{q}$ be given by $\bm{y}(\mfc) = \frac{D(\mfc,\mfc)}{\bm{v}(\mfc)^{2}} \in [0,1]$. Define the distribution $\xi$ on the slice $\wrapc{\alpha \in \Z_{\geq0}^{q} : \abs{\alpha} = d}$ by
    \begin{align*}
        \xi(\alpha) \propto \prod_{\mfb \in [q]} \bm{v}(\mfb)^{\alpha(\mfb)} \cdot \mu\wrapp{\wrapc{\tau \in [q]^{d} : \sgn(\tau) = \alpha}}.
    \end{align*}
    Then
    \begin{align}\label{eq:iid-Fmu-criterion}
        \sum_{\mfc \in [q]} \frac{1}{\bm{y}(\mfc)} \cdot \E_{\alpha \sim \xi}\wrapb{\wrapp{1 - \bm{y}(\mfc)}^{\alpha(\mfc)}}^{1/d} \geq \wrapp{-1 + \sum_{\mfc \in [q]} \frac{1}{\bm{y}(\mfc)}} \cdot \max_{\mfc \in [q]} \E_{\alpha \sim \xi}\wrapb{\wrapp{1 - \bm{y}(\mfc)}^{\alpha(\mfc)}}^{1/d}.
    \end{align}
\end{enumerate}
\end{lemma}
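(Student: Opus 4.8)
The plan is to make the belief-propagation map $F$ completely explicit on both sides of the claimed equality, exploiting the rank-one-plus-diagonal structure $A = \bm{v}\bm{v}^{\top} - D$, and thereby reduce the existence of the product measure $\nu$ to the solvability of a small linear system subject to a single nonnegativity constraint.

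First I would compute $G_{\mfc}(\mu)$ from \cref{eq:unnorm-recursion-general}. Writing $\bm{y}(\mfc) = D(\mfc,\mfc)/\bm{v}(\mfc)^{2} \in [0,1]$ and using that $A_{\mfc,\mfb} = \bm{v}(\mfc)\bm{v}(\mfb)$ for $\mfb \neq \mfc$ while $A_{\mfc,\mfc} = \bm{v}(\mfc)^{2}(1-\bm{y}(\mfc))$, one checks the pointwise identity
\begin{align*}
\prod_{i=1}^{d} A_{\mfc,\tau(i)} = \bm{v}(\mfc)^{d} \cdot \left(\prod_{\mfb \in [q]} \bm{v}(\mfb)^{\sgn(\tau)(\mfb)}\right) \cdot \bigl(1 - \bm{y}(\mfc)\bigr)^{\sgn(\tau)(\mfc)}.
\end{align*}
Grouping configurations $\tau$ by their signature and recognizing the normalizing constant of $\xi$, this gives $G_{\mfc}(\mu) \propto \bm{v}(\mfc)^{d} M_{\mfc}$ with $M_{\mfc} \defeq \E_{\alpha \sim \xi}\wrapb{(1-\bm{y}(\mfc))^{\alpha(\mfc)}}$, the proportionality constant being independent of $\mfc$. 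The analogous computation for an i.i.d. measure $\nu = \bm{p}^{\otimes d}$, after setting $t_{\mfc} \defeq \bm{v}(\mfc)\bm{p}(\mfc)$ and $S \defeq \sum_{\mfc} t_{\mfc}$, gives $G_{\mfc}(\nu) = \bm{v}(\mfc)^{d}(S - \bm{y}(\mfc)t_{\mfc})^{d}$, since $\sum_{\mfb} A_{\mfc,\mfb}\bm{p}(\mfb) = \bm{v}(\mfc)(S - \bm{y}(\mfc)t_{\mfc})$. Because $F(\cdot)$ records only the direction of the nonnegative vector $(G_{\mfc}(\cdot))_{\mfc}$, the equality $F(\mu) = F(\nu)$ is equivalent to $M_{\mfc} \propto (S - \bm{y}(\mfc)t_{\mfc})^{d}$ as $\mfc$ ranges over $[q]$; taking positive $d$-th roots (legitimate since $S - \bm{y}(\mfc)t_{\mfc} = \bm{v}(\mfc)^{-1}\sum_{\mfb} A_{\mfc,\mfb}\bm{p}(\mfb) \geq 0$) this reads $S - \bm{y}(\mfc)t_{\mfc} = r\,m_{\mfc}$ for all $\mfc$, where $m_{\mfc} \defeq M_{\mfc}^{1/d}$ and $r > 0$ is a free scalar.

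It then remains to decide when the linear system consisting of the $q$ equations $S - \bm{y}(\mfc)t_{\mfc} = r m_{\mfc}$ together with $S = \sum_{\mfc} t_{\mfc}$ admits a solution with all $t_{\mfc} \geq 0$ and $(t_{\mfc})_{\mfc} \neq 0$; any such solution rescales to make $\bm{p}(\mfc) = t_{\mfc}/\bm{v}(\mfc)$ a probability vector, and conversely every i.i.d. $\nu$ with $F(\mu)=F(\nu)$ yields one. Solving the first $q$ equations gives $t_{\mfc} = \bm{y}(\mfc)^{-1}(S - r m_{\mfc})$, and plugging this into $S = \sum_{\mfc} t_{\mfc}$ pins down $r/S = \bigl(\sum_{\mfc}\bm{y}(\mfc)^{-1} - 1\bigr)\big/\bigl(\sum_{\mfc} m_{\mfc}\bm{y}(\mfc)^{-1}\bigr)$, which is positive because $\bm{y}(\mfc)\le 1$ forces $\sum_{\mfc}\bm{y}(\mfc)^{-1} \geq q \geq 2$. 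Hence (after fixing, say, $S=1$) the system has a unique solution, and the required $\nu$ exists if and only if $t_{\mfc} = \bm{y}(\mfc)^{-1}S\bigl(1 - (r/S)m_{\mfc}\bigr) \geq 0$ for every $\mfc$, i.e. $(r/S)\max_{\mfc} m_{\mfc} \leq 1$; substituting the expression for $r/S$ gives precisely \cref{eq:iid-Fmu-criterion}.

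I expect the main obstacle to be bookkeeping the degenerate cases rather than the core computation: one must ensure $F(\mu)$ and $F(\nu)$ are well-defined (the vectors $(G_{\mfc})_{\mfc}$ are nonzero), that each $m_{\mfc} > 0$ so the proportionality and the $d$-th roots are unambiguous, and that the coordinates $\bm{y}(\mfc)$ appearing in the denominators are strictly positive — all of which hold in the antiferromagnetic Potts uniqueness regime targeted by \cref{thm:antiferro-potts-cvx-bulk}, where $\bm{y}(\mfc) = 1-\beta \in (0,1)$, with the remaining corner $\beta = 1$ being trivial since $F$ is then constant. A secondary point is to phrase the equivalence in terms of the unnormalized vector $(t_{\mfc})_{\mfc}$ and normalize only at the very end, since $F$ depends on $\bm{p}$ only through $\nu = \bm{p}^{\otimes d}$ and the scale of $\bm{p}$ is immaterial to the proportionality condition.
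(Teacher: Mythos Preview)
Your proposal is correct and follows essentially the same approach as the paper: both compute $G_{\mfc}(\mu) \propto \bm{v}(\mfc)^{d}\,\E_{\alpha\sim\xi}[(1-\bm{y}(\mfc))^{\alpha(\mfc)}]$ via the signature distribution, reduce $F(\mu)=F(\nu)$ to the proportionality $G(\bm{p}^{\otimes d})\propto G(\mu)$, and then characterize nonnegativity of the resulting solution. The only cosmetic difference is that the paper phrases the linear step as the condition $A^{-1}G(\mu)^{1/d}\in\R_{\geq 0}^{q}$ and inverts $\allone\allone^{\top}-\diag(\bm{y})$ via the Sherman--Morrison formula, whereas you solve the equivalent system $S-\bm{y}(\mfc)t_{\mfc}=r\,m_{\mfc}$, $S=\sum_{\mfc}t_{\mfc}$ by direct substitution; these yield the same nonnegativity criterion.
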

It is known that \cref{eq:iid-Fmu-criterion} does not hold for arbitrary probability distributions $\mu$ over $[q]^{d}$, even for the antiferromagnetic Potts model in the correlation decay regime. For example, one can take $\mu = \delta_{\tau}$ where $\tau \in [q]^{d}$ is chosen so that for some color $\mfc \in [q]$, $\alpha(\mfc) = \#\{i \in [d] : \tau(i) = \mfc\} = 0$. Hence, to certify that $\mathcal{F}_{A,d}^{\PROD} = \mathcal{F}_{A,d}$ and that $\mathcal{F}_{A,d}^{\PROD}$ is convex, one needs to look beyond i.i.d. measures $\nu$. However, we show that even with weak lower tail bounds on the random variables $\alpha(\mfc)$ for $\mfc \in [q]$, \cref{eq:iid-Fmu-criterion} holds. We now use \cref{lem:iid-Fmu-criterion} to prove \cref{thm:antiferro-potts-cvx-bulk}; a proof of \cref{lem:iid-Fmu-criterion} is provided at the end of this section.
\begin{proof}[Proof of \cref{thm:antiferro-potts-cvx-bulk}]
To construct our desired product measure $\nu$, we will verify (2) of \cref{lem:iid-Fmu-criterion} and apply the lemma to obtain $\nu$. In the setting of \cref{lem:iid-Fmu-criterion}, we take $v = \allone$, $D = (1 - \beta)I$ and $y = (1 - \beta) \allone$, in which case \cref{eq:iid-Fmu-criterion} simplifies drastically to
\begin{align}\label{eq:antiferro-potts-iid-criterion}
    \sum_{\mfc \in [q]} \E_{\alpha \sim \xi}\wrapb{\beta^{\alpha(\mfc)}}^{1/d} \geq (q - (1 - \beta)) \cdot \max_{\mfc \in [q]} \E_{\alpha \sim \xi}\wrapb{\beta^{\alpha(\mfc)}}^{1/d}
\end{align}
where $\xi(\alpha) = \mu\wrapp{\wrapc{\tau \in [q]^{d} : \sgn(\tau) = \alpha}}$ is the distribution over signatures induced by $\mu$ via marginalization. We will establish \cref{eq:antiferro-potts-iid-criterion} using our assumed tail bounds \cref{eq:color-density-lower-tail}, as well as the fact that we are in the uniqueness regime.

Let $\mfc^{*}$ attain the maximum in the right-hand side of \cref{eq:antiferro-potts-iid-criterion}. Rearranging, the desired inequality is equivalent to
\begin{align*}
    \frac{1}{q - 1} \sum_{\mfc \neq \mfc^{*}} \frac{\E_{\alpha \sim \xi}\wrapb{\beta^{\alpha(\mfc)}}^{1/d}}{\E_{\alpha \sim \xi}\wrapb{\beta^{\alpha(\mfc^{*})}}^{1/d}} \geq 1 - \frac{1 - \beta}{q - 1}.
\end{align*}
To verify this inequality, observe that
\begin{align*}
    \frac{1}{q - 1} \sum_{\mfc \neq \mfc^{*}} \E_{\alpha \sim \xi}\wrapb{\beta^{\alpha(\mfc)}}^{1/d} &\geq \frac{1}{q - 1} \sum_{\mfc \neq \mfc^{*}} \E_{\alpha \sim \xi}\wrapb{\beta^{\alpha(\mfc)/d}} \tag{Jensen $+$ concavity of $x \mapsto x^{1/d}$} \\
    &\geq \E_{\alpha \sim \xi}\wrapb{\beta^{\frac{1}{q-1}\sum_{\mfc \neq \mfc^{*}} \frac{\alpha(\mfc)}{d}}} \tag{Jensen + convexity of $x \mapsto \beta^{x}$} \\
    &= \beta^{\frac{1}{q-1}} \cdot \E_{\alpha \sim \xi}\wrapb{\beta^{-\frac{1}{q-1} \cdot \frac{\alpha(\mfc^{*})}{d}}}. \tag{Using $\sum_{\mfc \in [q]} \alpha(\mfc) = d$}
\end{align*}
Hence, it suffices to verify the inequality
\begin{align*}
    \frac{\E_{\alpha \sim \xi}\wrapb{\beta^{-\frac{1}{q-1} \cdot \frac{\alpha(\mfc^{*})}{d}}}}{\E_{\alpha \sim \xi}\wrapb{\beta^{\alpha(\mfc^{*})}}^{1/d}} \geq \beta^{-\frac{1}{q-1}}\wrapp{1 - \frac{1-\beta}{q-1}}.
\end{align*}
Plugging in our assumptions, we have that the left-hand side above is lower bounded as
\begin{align*}
    \frac{\epsilon + (1 - \epsilon) \beta^{-\frac{1}{q-1} \cdot \frac{\gamma_{*}}{q}}}{\wrapp{\epsilon + (1 - \epsilon)\beta^{\gamma_{*} \cdot \frac{d}{q}}}^{1/d}} &\geq \beta^{-\frac{\gamma_{*}}{q-1}} \cdot \frac{\epsilon \beta^{\frac{1}{q-1} \cdot \frac{\gamma_{*}}{q}} + (1 - \epsilon)}{\wrapp{\epsilon \beta^{-\gamma_{*} \cdot \frac{d}{q}} + (1 - \epsilon)}^{1/d}} \\
    &\geq \beta^{-\frac{\gamma_{*}}{q-1}} \cdot \frac{\epsilon \beta^{\frac{1}{10q(q-1)}} + (1 - \epsilon)}{\wrapp{\epsilon \beta^{-\frac{d}{10q}} + (1 - \epsilon)}^{1/d}}. \tag{$\gamma_{*} \leq \frac{1}{10}$ and monotonicity in $\gamma_{*}$}
\end{align*}
Hence, we require that $\gamma_{*},\epsilon,\beta$ satisfy
\begin{align*}
    \gamma_{*} \geq 1 - \frac{q-1}{\log \frac{1}{\beta}} \log\frac{q-1}{(q - 1) - (1 - \beta)} + \underset{\defeq \Xi_{d,q,\epsilon}(\beta)}{\underbrace{\frac{q-1}{\log \frac{1}{\beta}}\log \frac{\wrapp{\epsilon \beta^{-\frac{d}{10q}} + (1 - \epsilon)}^{1/d}}{\epsilon \beta^{\frac{1}{10q(q-1)}} + (1 - \epsilon)}}}
\end{align*}
This inequality follows immediately by combining the following numerical inequalities, whose proofs are deferred to \cref{sec:numerical}.
\begin{claim}\label{claim:weird}
For every $q \geq 2$ and every $0 \leq \beta \leq 1$, we have
\begin{align*}
    \frac{q-1}{\log \frac{1}{\beta}} \log\frac{q-1}{(q - 1) - (1 - \beta)} \geq \sqrt{\beta}.
\end{align*}
\end{claim}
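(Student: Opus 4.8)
The plan is to reduce Claim~\cref{claim:weird} to a routine one-variable inequality. Write $m = q-1 \ge 1$. For $\beta \in (0,1)$ the left-hand side of the claim equals
\[
\frac{-m\log\!\left(1 - \frac{1-\beta}{m}\right)}{-\log\beta},
\]
a ratio of two positive quantities; the endpoints $\beta \in \{0,1\}$ are handled by continuity, the case $\beta = 1$ giving equality. The first step is to bound the numerator from below using the elementary estimate $1 + x \le e^{x}$ with $x = -\frac{1-\beta}{m}$, which gives $-\log\!\left(1 - \frac{1-\beta}{m}\right) \ge \frac{1-\beta}{m}$ and hence $-m\log\!\left(1 - \frac{1-\beta}{m}\right) \ge 1 - \beta$. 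In particular the dependence on $q$ disappears, and it suffices to prove the $m$-free inequality
\[
\frac{1-\beta}{-\log\beta} \ge \sqrt{\beta}, \qquad\text{equivalently}\qquad 1 - \beta \ge \sqrt{\beta}\,\log\tfrac{1}{\beta}.
\]

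The second step is to verify this last inequality by calculus. Substituting $\beta = r^{2}$ with $r \in (0,1]$, it becomes $h(r) := 1 - r^{2} + 2r\log r \ge 0$. One checks $h(1) = 0$ and $h(r) \to 1$ as $r \to 0^{+}$, and computes $h'(r) = 2\bigl(1 - r + \log r\bigr)$. Since $\frac{d}{dr}\bigl(1 - r + \log r\bigr) = \frac1r - 1 \ge 0$ on $(0,1]$ while $1 - r + \log r$ vanishes at $r = 1$, we get $1 - r + \log r \le 0$ on $(0,1]$, so $h$ is nonincreasing there and therefore $h(r) \ge h(1) = 0$. Unwinding the substitution and recombining with the numerator bound proves the claim.

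I expect the only real obstacle to be spotting the correct reduction: a direct attempt to lower-bound the ratio is unpleasant because the inequality is tight at $\beta = 1$ (after the substitution, $h$ vanishes to third order at $r = 1$), so some care is needed not to lose too much. Peeling off the factor $-\log\!\left(1 - \frac{1-\beta}{m}\right) \ge \frac{1-\beta}{m}$ is exactly the move that removes the $q$-dependence and leaves a clean monotonicity argument; after that the computation is mechanical, with the boundary values $\beta \in \{0,1\}$ requiring only a brief limiting remark.
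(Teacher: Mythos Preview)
Your proof is correct and follows essentially the same two-step structure as the paper: first reduce to the $q$-free inequality $\frac{1-\beta}{\log(1/\beta)} \ge \sqrt{\beta}$, then verify that one-variable bound. The only difference is in how the $q$-dependence is removed: the paper shows the left-hand side is monotone decreasing in $q$ and computes the limit $q \to \infty$ via L'H\^{o}pital, whereas you apply $-\log(1-u) \ge u$ directly, which is cleaner. For the second step the paper cites $\log x \le \frac{x-1}{\sqrt{x}}$ for $x \ge 1$, which after the substitution $x = 1/r^{2}$ is exactly your inequality $h(r) = 1 - r^{2} + 2r\log r \ge 0$; you simply reprove it rather than quote it.
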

\begin{claim}\label{claim:insane-function}
Suppose $d \geq 2q$ and $\epsilon \leq 1/4$. Then $\Xi_{d,q,\epsilon}(\beta)$ is a decreasing function in $\beta$ on the interval $\wrapb{1 - \frac{q}{d + 1}, 1}$, and
\begin{align*}
    \Xi_{d,q,\epsilon}\wrapp{1 - \frac{q}{d+1}} \leq \epsilon.
\end{align*}
\end{claim}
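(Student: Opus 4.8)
The final statement comprises two numerical claims. For Claim~\ref{claim:weird}, I would substitute $t = 1 - \beta \in [0,1]$ so the inequality reads $\frac{q-1}{-\log(1-t)}\log\frac{q-1}{q-1-t} \ge \sqrt{1-t}$. Writing $a = q-1 \ge 1$, I want $\frac{a}{-\log(1-t)}\bigl(\log a - \log(a - t)\bigr) \ge \sqrt{1-t}$, equivalently $a\log\frac{a}{a-t} \ge -\sqrt{1-t}\,\log(1-t)$. The plan is to compare the two sides as functions of $a \ge 1$ for fixed $t$: the left side $a \log\frac{a}{a-t} = -a\log(1 - t/a)$ is \emph{decreasing} in $a$ (since $-x^{-1}\log(1-x)$ is increasing in $x \in (0,1)$, applied with $x = t/a$), so its infimum over $a \ge 1$ is the limit as $a \to \infty$, namely $t$. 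Hence it suffices to show $t \ge -\sqrt{1-t}\log(1-t)$ for $t \in [0,1]$, i.e.\ $-\log(1-t) \le t/\sqrt{1-t}$. Setting $u = 1-t \in (0,1]$, this is $-\log u \le (1-u)/\sqrt{u} = \sqrt{u}^{-1} - \sqrt u$; with $s = \sqrt u$, it becomes $-2\log s \le s^{-1} - s$, i.e.\ $g(s) := s^{-1} - s + 2\log s \ge 0$ on $(0,1]$. Since $g(1) = 0$ and $g'(s) = -s^{-2} - 1 + 2/s = -(s^{-1}-1)^2 \le 0$, $g$ is decreasing, so $g(s) \ge g(1) = 0$ on $(0,1]$. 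This closes Claim~\ref{claim:weird}; the monotonicity-in-$a$ step is the only nonobvious ingredient, and it is a one-variable calculus fact.

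For Claim~\ref{claim:insane-function}, recall $\Xi_{d,q,\epsilon}(\beta) = \frac{q-1}{\log(1/\beta)}\log\frac{(\epsilon\beta^{-d/(10q)} + (1-\epsilon))^{1/d}}{\epsilon\beta^{1/(10q(q-1))} + (1-\epsilon)}$. For the monotonicity assertion, I would rewrite $\Xi$ as a difference of two pieces divided by $\log(1/\beta) = -\log\beta > 0$: the numerator of the log splits as $\frac1d\log(\epsilon\beta^{-d/(10q)} + 1-\epsilon) - \log(\epsilon\beta^{1/(10q(q-1))} + 1-\epsilon)$. The cleanest route is to show that as $\beta \to 1^-$ the ratio is well-behaved and that the logarithmic derivative works out; more concretely, substitute $\beta = e^{-s}$ with $s \ge 0$ small, so $\Xi = \frac{q-1}{s}\bigl[\frac1d\log(\epsilon e^{sd/(10q)} + 1-\epsilon) - \log(\epsilon e^{-s/(10q(q-1))} + 1-\epsilon)\bigr]$, and analyze $h(s) := \frac1s[\cdots]$ for $s \in (0, s_0]$ where $s_0 = -\log(1 - q/(d+1)) \le \frac{q}{d+1} + O((q/d)^2)$, i.e.\ $s_0$ is small (order $q/d$). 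Expanding the bracket in $s$: the leading term is $\frac{s}{10q}(\frac{1}{d} \cdot d \cdot \frac{1}{?})$... more carefully, $\frac1d\log(\epsilon e^{sd/(10q)}+1-\epsilon) = \frac{\epsilon s}{10q} + \frac1d\log(\text{correction}) \to$ a term that is $\frac{\epsilon s}{10q}$ to first order only when $sd/(10q)$ is small, which need \emph{not} hold since $sd \sim q$. So I would instead keep $\frac1d\log(\epsilon e^{sd/(10q)}+1-\epsilon)$ intact and bound it crudely: it lies between $\frac1d\log(1-\epsilon) + \frac{\epsilon s}{10q}$ (dropping $1-\epsilon$ inside, no—) — here the honest approach is to bound $\epsilon e^{x} + 1 - \epsilon \le e^{\epsilon x}$ is false for the sign we want; rather use $\epsilon e^x + 1-\epsilon \ge \max(1-\epsilon, \epsilon e^x)$ giving $\frac1d\log(\epsilon e^{sd/(10q)}+1-\epsilon) \ge \frac{\log \epsilon}{d} + \frac{s}{10q}$, and $\le \frac1d\log(e^{sd/(10q)}) = \frac{s}{10q}$ when... these two-sided estimates, plus $\epsilon e^{-s/(10q(q-1))} + 1-\epsilon \in [1 - \epsilon\frac{s}{10q(q-1)}, 1]$, should pin $\Xi_{d,q,\epsilon}(\beta)$ between explicit quantities that are manifestly at most $\epsilon$ at $\beta = 1 - q/(d+1)$ given $d \ge 2q$ and $\epsilon \le 1/4$, and manifestly monotone.

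**Main obstacle.** The hard part will be Claim~\ref{claim:insane-function}, specifically getting clean two-sided control of $\Xi_{d,q,\epsilon}(\beta)$ near $\beta = 1$. The term $\beta^{-d/(10q)}$ is \emph{not} close to $1$ (since $d/(10q) \cdot (q/d) = 1/10$ is a constant, so $\beta^{-d/(10q)}$ can be bounded away from $1$), whereas $\beta^{1/(10q(q-1))}$ \emph{is} close to $1$; this asymmetry is exactly why the bound $\Xi \le \epsilon$ holds with room to spare, but it also means one cannot naively Taylor-expand everything. I expect the monotonicity claim (that $\Xi$ is decreasing on $[1 - q/(d+1), 1]$) to require the most care: one likely wants to show $\frac{d}{d\beta}\Xi \le 0$, which after clearing the positive factor $(\log(1/\beta))^2/(q-1)$ reduces to an inequality of the form $(\text{something})\cdot \log(1/\beta) \le \log(\text{ratio})$, and verifying this may itself need the smallness of $1 - \beta$. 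A robust fallback, if the derivative computation is unwieldy, is to upper bound $\Xi_{d,q,\epsilon}(\beta)$ for all $\beta$ in the interval by its value (or a clean over-estimate) at the left endpoint directly, using that the numerator's logarithm and $\log(1/\beta)$ are each monotone in $\beta$ with controllable ratio; this sidesteps monotonicity of $\Xi$ itself, though the problem statement as written does assert that monotonicity, so I would first attempt the direct derivative argument and only retreat to endpoint bounds if needed. The remaining inequality $\Xi_{d,q,\epsilon}(1 - q/(d+1)) \le \epsilon$ is then a finite computation using $\log(1/\beta) \ge 1 - \beta = q/(d+1)$ (or a sharper lower bound), $d \ge 2q$, and $\epsilon \le 1/4$, which I expect to go through with a comfortable margin.
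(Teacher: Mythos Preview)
Your treatment of Claim~\ref{claim:weird} (not the statement under review, but you include it) is correct and essentially identical to the paper's: both reduce to monotonicity in $q$ (equivalently your $a$), take the limit, and verify $\log x \le (x-1)/\sqrt{x}$.

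For Claim~\ref{claim:insane-function} itself, however, your proposal has a genuine gap on the monotonicity half. You try a Taylor expansion in $s = -\log\beta$ and correctly abandon it because $sd/(10q)$ is not small; you then attempt crude term-by-term bounds on the logarithms and abandon those too (``is false for the sign we want''); finally you say you would ``first attempt the direct derivative argument'' without any indication of how the computation closes. The paper does take the derivative route, but the way it tames the computation is not visible in your sketch: writing $\Xi = (q-1)f/g$ with $g(\beta) = \log(1/\beta)$, one uses $g'(\beta) = -1/\beta$ to rewrite $h(\beta) := \beta(f'g - fg')$ as $\beta f'(\beta)\log(1/\beta) + f(\beta)$. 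This form makes $h(1)=0$ obvious and, crucially, in $h'(\beta)$ the stray $f'(\beta)$ that arises from differentiating $f(\beta)$ cancels exactly against the $1/\beta$-term produced by differentiating $\log(1/\beta)$ in the product. What survives is $-\log(1/\beta)$ times the $\beta$-derivative of a sum of two sigmoid-like expressions, and showing that derivative has the right sign reduces to a single clean inequality of the shape $d(q-1)^2 \ge (\text{something bounded by } (1+0.15\epsilon)^2/(1-\epsilon)^2)$, which follows from $d\ge 2q$, $q\ge 2$, $\epsilon\le 1/4$. Without spotting this cancellation, the raw derivative of $\Xi$ is indeed unwieldy, and none of your fallback estimates supply a substitute.

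Your plan for the endpoint bound $\Xi_{d,q,\epsilon}(1-q/(d+1)) \le \epsilon$ is in the right spirit and would go through: the paper uses exactly the ingredients you list---$\log(1/\beta) \ge 1-\beta$ gives $(q-1)/\log(1/\beta) \le d+1$, and the observation that $(1-q/(d+1))^{-d/(10q)}$ is bounded by a constant close to $1$ (the paper gets $\le 1.15$) handles the numerator---so this part is only missing the arithmetic, not an idea.
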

\end{proof}

\begin{proof}[Proof of \cref{lem:iid-Fmu-criterion}]
Since $F(\mu) \propto G(\mu)$, $F(\nu) \propto G(\nu)$, and the unnormalized recursion $G(\cdot)$ is homogeneous, (1) holds if and only if there exists a nonzero nonnegative vector $\bm{x} \in \R_{\geq0}^{q}$ such that $G\wrapp{\bm{x}^{\otimes d}} \propto G(\mu)$, i.e.
\begin{align*}
    \langle \bm{x}, A(\mfc,\cdot) \rangle^{d} = s \cdot G_{\mfc}(\mu), \qquad \forall \mfc \in [q],
\end{align*}
for some common scalar $s > 0$ independent of $\mfc$. Indeed, if such a $\bm{x}$ exists, then we can take $\bm{p} = \bm{x} / \norm{\bm{x}}_{1}$ and $\nu = \bm{p}^{\otimes d}$ to obtain $F(\mu) = F(\nu)$.

By rearranging, the above is equivalent to the claim that
\begin{align*}
    A^{-1}G(\mu)^{1/d} \in \R_{\geq0}^{q},
\end{align*}
where as usual, $G(\mu)^{1/d}$ means we raise each entry of $G(\mu)$ to the $1/d$ power. Now observe that if $\zeta$ denotes the distribution over $\{\alpha \in \Z_{\geq0}^{q} : \abs{\alpha} = d\}$ given by $\zeta(\alpha) = \mu\wrapp{\wrapc{\tau \in [q]^{d} : \sgn(\tau) = \alpha}}$, then
\begin{align*}
    G_{\mfc}(\mu) &= \E_{\tau \sim \mu}\wrapb{\prod_{i=1}^{d} A_{\mfc,\tau(i)}} \tag{Definition} \\
    &= \E_{\alpha \sim \zeta}\wrapb{\prod_{\mfb \in [q]} A_{\mfc,\mfb}^{\alpha(\mfb)}} \\
    &= \E_{\alpha \sim \zeta}\wrapb{\prod_{\mfb \neq \mfc} \bm{v}(\mfb)^{\alpha(\mfb)}\bm{v}(\mfc)^{\alpha(\mfb)} \cdot \wrapp{\bm{v}(\mfc)^{2} - D(\mfc,\mfc)}^{\alpha(\mfc)}} \tag{Assumption $A = \bm{v}\bm{v}^{\top} - D$} \\
    &= \bm{v}(\mfc)^{d} \cdot \E_{\alpha \sim \zeta}\wrapb{\prod_{\mfb \in [q]} \bm{v}(\mfb)^{\alpha(\mfb)} \cdot \wrapp{1 - \frac{D(\mfc,\mfc)}{\bm{v}(\mfc)^{2}}}^{\alpha(\mfc)}} \\
    &\propto \bm{v}(\mfc)^{d} \cdot \E_{\alpha \sim \xi}\wrapb{\wrapp{1 - \bm{y}(\mfc)}^{\alpha(\mfc)}}. \tag{Definition of $\xi$ and $\bm{y}$}
\end{align*}
In particular,
\begin{align*}
    A^{-1}G(\mu)^{1/d} \propto A^{-1}\diag(\bm{v}) \wrapb{\E_{\alpha \sim \xi}\wrapb{\wrapp{1 - \bm{y}(\mfc)}^{\alpha(\mfc)}}^{1/d}}_{\mfc \in [q]}.
\end{align*}
It follows that (1) is equivalent to the right-hand side lying in $\R_{\geq0}^{q}$. Now observe that we can write
\begin{align*}
    A^{-1}\diag(\bm{v}) &= \diag(\bm{v})^{-1} \diag(\bm{v}) \wrapp{\bm{v}\bm{v}^{\top} - D}^{-1} \diag(\bm{v}) \\
    &= \diag(\bm{v})^{-1} \wrapp{\allone\allone^{\top} - \diag(\bm{y})}^{-1} \\
    &= \diag(\bm{v})^{-1} \wrapp{-\diag(1/\bm{y}) - \frac{\diag(1/\bm{y})\allone\allone^{\top}\diag(1/\bm{y})}{1 - \sum_{\mfc \in [q]} \bm{y}(\mfc)^{-1}}}. \tag{Sherman--Morrison Formula}
\end{align*}
Hence, (1) is equivalent to
\begin{align*}
    \wrapp{-\diag(1/\bm{y}) - \frac{\diag(1/\bm{y})\allone\allone^{\top}\diag(1/\bm{y})}{1 - \sum_{\mfc \in [q]} \bm{y}(\mfc)^{-1}}} \cdot \wrapb{\E_{\alpha \sim \xi}\wrapb{\wrapp{1 - \bm{y}(\mfc)}^{\alpha(\mfc)}}^{1/d}}_{\mfc \in [q]} \geq 0
\end{align*}
entrywise. Rearranging precisely yields (2).
\end{proof}

\section{Future Directions}
In view of \cref{t:main}, it's natural to consider possible alternatives to a Weitz-style reduction that might still hold. An essential part of what makes Weitz's argument work in the 2-spin case is the essentially one-dimensional nature of the image of $F$. Indeed, in the $2$-spin case, $F$ outputs a distribution on $\{0,1\}$, which can be parameterized by a single number. In the language of \cref{defn:calF-and-calFprod}, since $\cF_{A,d}^{\PROD}$ and $\cF_{A,d} = \conv\wrapp{\cF_{A,d}^{\PROD}}$ can both be viewed as subsets of $[0,1]$, a trivial continuity argument shows that $\cF_{A,d} = \cF_{A,d}^{\PROD}.$ As we have shown in this paper, this argument necessarily breaks down in the multispin setting. However, in existing analyses, the \emph{reason} one wants to reduce to the tree is to perform a contraction analysis there with respect to a given (pseudo)metric. Hence, at the end of the day, we are primarily interested in a one-dimensional functional of the image of $F,$ some distance function. This motivates the following question.
\begin{question}
    Is it the case that for every pair of positive integers $q\geq 3,d\geq 2$ and every $q$-spin interaction matrix $A \in \R^{q\times q}_{\geq 0},$ there exists a bounded potential (see \cref{defn:bdd-potential}) $\varphi:[0,1]\to\R$, a universal distribution $\xi$ on $d$-tuples of external fields $\wrapp{\bm{\lambda}^{(1)}, \dots, \bm{\lambda}^{(d)}}$, where $\bm{\lambda}^{(i)}\in \R_{\geq0}^{d\times q}$ for each $i\in [d]$, and a (pseudo)metric $\mathscr{D}$ on $\R^q$ such that for all probability measures $\mu$ and $\nu$ on $[q]^d$,
    \begin{align*}
        \mathscr{D}(\varphi(F_A(\mu)), &\varphi(F_A(\nu))) \\
        &\leq \E_{(\bm{\lambda}^{(1)}, \dots, \bm{\lambda}^{(d)})\sim \xi}\left[\mathscr{D}\left( \varphi \left( F_A \left( \bigotimes_{i = 1}^d \wrapp{\bm{\lambda}^{(i)} \star \mu}_i \right) \right) ,  \varphi \left( F_A \left( \bigotimes_{i = 1}^d \wrapp{\bm{\lambda}^{(i)} \star \nu}_i \right) \right) \right)\right]\text{?}
    \end{align*}
\end{question}
An affirmative answer to this question would let one convert contraction with respect to $\mathscr{D}(\cdot,\cdot)$ under potential $\varphi$ on trees to strong spatial mixing for general graphs via a similar argument to that in~\cref{app:main-q-consequences}.

A related interesting direction would be to more closely link the agreement of $\cF_{A,d}^{\PROD}$ and  $\cF_{A,d}$ to the existence of universal fields as in~\cref{prop:weitz-reinterpret}.
\begin{question}\label{q:families}
    Is it the case that, for any $q$-spin system interaction matrix $A$ for which $\cF_{A,d}^{\PROD}= \cF_{A,d}$, there exist universal fields as in \cref{prop:weitz-reinterpret}?
\end{question}

The above question is of particular interest due to~\cref{thm:antiferro-potts-cvx-bulk}, which gives some evidence for $\mathcal{F}_{A,d}^{\PROD} = \mathcal{F}_{A,d}$ for the antiferromagnetic Potts model in large maximum degree graphs. This motivates the following natural question.

\begin{question}\label{q:potts}
Fix an integer $q \ge 3$ and consider the antiferromagnetic Potts interaction matrix $A = \ind_q \ind_q^{\top} - (1 - \beta)I_q$ with $\max \{0, 1 - \frac{q}{d + 1}\} \le \beta \le 1$. Then, is $\cF_{A, d}^{\PROD}$ a convex set?
\end{question}

An affirmative answer to both~\cref{q:potts} and~\cref{q:families} would resolve~\cref{c:jerrum} as formalized in~\cref{app:main-q-consequences}.

\printbibliography

\appendix
\section{Belief propagation on general graphs}\label{app:BP-general}

%\section{Weitz's SAW tree as a random computation tree}\label{app:weitz-reinterpretation}
\begin{proof}[Proof of \cref{prop:weitz-reinterpret}]
Write $\{0,1\}$ for the set of spins. The key idea behind the Weitz trick is to expand the marginal \emph{ratio}, crucially taking advantage of the fact that $q = 2$, so any marginal distribution has the form $[p,1-p]$ for some $p \in [0,1]$. We have
\begin{align*}
    \frac{F_{0}(\mu)}{F_{1}(\mu)} &= \frac{G_{0}(\mu)}{G_{1}(\mu)} \\
    &= \frac{\E_{\tau \sim \mu}\wrapb{\prod_{i=1}^{d} A_{0,\tau(i)}}}{\E_{\tau \sim \mu}\wrapb{\prod_{i=1}^{d} A_{1,\tau(i)}}} \\
    &= \prod_{i=1}^{d} \frac{\E_{\tau \sim \mu}\wrapb{\prod_{j=1}^{i} A_{0,\tau(j)} \prod_{j=i+1}^{d} A_{1,\tau(j)}}}{\E_{\tau \sim \mu}\wrapb{\prod_{j=1}^{i-1} A_{0,\tau(j)} \prod_{j=i}^{d} A_{1,\tau(j)}}} \tag{Telescoping} \\
    &= \frac{\prod_{i=1}^{d} \wrapp{\tilde{\mu}_{i}(0) \cdot A_{0,0} + \tilde{\mu}_{i}(1) \cdot A_{0,1}}}{\prod_{i=1}^{d} \wrapp{\tilde{\mu}_{i}(0) \cdot A_{1,0} + \tilde{\mu}_{i}(1) \cdot A_{1,1}}} \tag{$\ast$} \\
    &= \frac{F_{0}\wrapp{\bigotimes_{i=1}^{d} \tilde{\mu}_{i}}}{F_{1}\wrapp{\bigotimes_{i=1}^{d} \tilde{\mu}_{i}}}. \tag{$\ast\ast$}
\end{align*}
Here, $(\ast)$ is justified by dividing the numerator and denominator of the $i$th term in the product by
\begin{align*}
    \E_{\tau \sim \mu}\wrapb{\prod_{j=1}^{i-1} A_{0,\tau(j)} \prod_{j=i+1}^{d} A_{1,\tau(j)}}.
\end{align*}
In particular, the distribution $\tilde{\mu}_{i}$ on $\{0,1\}$ is defined by beginning with the distribution on $\{0, 1\}^{[d]}$
\begin{align*}
    \wrapp{\bm{\lambda}^{(i)} \star \mu} \propto \mu(\tau) \prod_{j=1}^{i-1} A_{0,\tau(j)} \prod_{j=i+1}^{d} A_{1,\tau(j)}
\end{align*}
on $\{0,1\}^{d}$ and marginalizing out $[d] \setminus \{i\}$. We express this succinctly as $\tilde{\mu}_{i} = \wrapp{\bm{\lambda}^{(i)} \star \mu}_{i}$ for all $i=1,\dots,d$. Here, the external field $\bm{\lambda}^{(i)}$ is given by $\bm{\lambda}_{j,0}^{(i)} = A_{0,0}, \bm{\lambda}_{j,1}^{(i)} = A_{0,1}$ for $j < i$, $\bm{\lambda}^{(i)}_i = \ind$, and $\bm{\lambda}_{j,0}^{(i)} = A_{1,0}, \bm{\lambda}_{j,1}^{(i)} = A_{1,1}$ for $j>i.$\footnote{One can interpret $\tilde{\mu}_{i}$ combinatorially as follows. Suppose we are trying to compute the marginal distribution of a vertex $r$ in a graph. The new marginal distributions $\tilde{\mu}_{i}$ are obtained combinatorially by splitting the vertex $r$ into $d$ copies $r_{1},\dots,r_{d}$, attaching $r_{i}$ to $u_{i}$ as a leaf node, pinning the prefix $r_{1},\dots,r_{i-1} \gets 0$, and pinning the suffix $r_{i+1},\dots,r_{d} \gets 1$. This description is more in line with the original presentation of Weitz's self-avoiding walk tree.} We stress that $\bm{\lambda}^{(i)}$ does not depend on the input distribution $\mu$.

We obtained $(\ast\ast)$ by dividing the numerator and denominator by their sum. It immediately follows that for the product measure $\bigotimes_{i=1}^{d} \tilde{\mu}_{i} = \bigotimes_{i=1}^{d} \wrapp{\bm{\lambda}^{(i)} \star \mu}_{i}$, we have
\begin{align*}
    F(\mu) = F\wrapp{\bigotimes_{i=1}^{d} \wrapp{\bm{\lambda}^{(i)} \star \mu}_{i}}
\end{align*}
as desired.
\end{proof}

\begin{proof}[Proof of \cref{lem:recursion-general}]
At the level of partition functions, writing $Z_{G}^{S \gets \tau}$ for the contribution of all terms $\sigma : V \to [q]$ satisfying $\sigma\mid_{S} = \tau$ for $\tau : S \to [q]$, we have
\begin{align*}
    Z_{G}^{r \gets \mfc} = \bm{\lambda}_{r,\mfc} \sum_{\tau : N(r) \to [q]} \prod_{i=1}^{d} A_{\mfc,\tau(u_{i})} \cdot Z_{G-r}^{N(r) \gets \tau}.
\end{align*}
It follows that
\begin{align*}
    \mu_{G,r}(\mfc) &= \frac{Z_{G}^{r \gets \mfc}}{Z_{G}} = \frac{\bm{\lambda}_{r,\mfc} \sum_{\tau : N(r) \to [q]} \prod_{i=1}^{d} A_{\mfc,\tau(u_{i})} \cdot Z_{G-r}^{N(r) \gets \tau}}{\sum_{\mfa \in [q]} \bm{\lambda}_{r,\mfa} \sum_{\tau : N(r) \to [q]} \prod_{i=1}^{d} A_{\mfa,\tau(u_{i})} \cdot Z_{G-r}^{N(r) \gets \tau}} \\\
    &= \frac{G_{\mfc}\wrapp{\mu_{G-r,N(r)}}}{\sum_{\mfa \in [q]} G_{\mfa}\wrapp{\mu_{G-r,N(r)}}}.
\end{align*}
Here, the final equality follows by dividing by $Z_{G-r}$ in both the numerator and denominator, once for each neighbor $u_{1},\dots,u_{d}$. The claim follows.
\end{proof}

%Recall from~\cref{eq:F-arbitrary} and~\cref{eq:F-prod} that given interaction matrix $A$ with $q \ge 3$ spins and some positive integer $d$, if $\Delta_q$ denote the space of probability distributions over $[q]$ we have 
%\begin{align*}
%    \cF_{A,d}^{\PROD} \defeq \left\{ F_A(\nu) : \nu \text{ is a product measure over } [q]^d \right\} \subseteq \Delta_q,  \\
%    \cF_{A,d} \defeq \left\{ F_A(\mu) : \mu \text{ is an arbitrary measure over } [q]^d \right\} \subseteq \Delta_q. 
%\end{align*}
%We verify that $\conv(\cF^{\PROD}_{A, d} ) = \cF_{A, d}$.
Finally, recall the sets $\cF_{A,d}^{\PROD},\cF_{A,d}$ from \cref{eq:F-prod,eq:F-arbitrary}. We now show that these two sets are intimately related to one another.
\begin{lemma}\label{lem:image-F-cvxhull}
For every $q \geq 2$ and every interaction matrix $A \in \R_{\geq0}^{q \times q}$, $\cF_{A,d}$ is a convex polytope and given by
\begin{align*}
     \cF_{A,d} = \conv\wrapp{\cF_{A,d}^{\PROD}} = \conv\wrapc{F_{A,d}(\delta_{\tau}) : \tau \in [q]^{d}}.
\end{align*}
\end{lemma}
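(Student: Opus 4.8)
\textbf{Proof proposal for \cref{lem:image-F-cvxhull}.}

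The plan is to exploit the fact that $G_{A,d,\mfc}(\nu)$ is a \emph{linear} functional of $\nu \in \triangle_{[q]^d}$, which is the same observation used in the proof of \cref{l:expand-fb}. First I would write $\bm{G}(\nu) \defeq (G_{A,d,\mfc}(\nu))_{\mfc \in [q]} \in \R_{\geq 0}^q$, so that $F_{A,d}(\nu) = \bm{G}(\nu)/\langle \ind_q, \bm{G}(\nu)\rangle$ is the radial projection of $\bm{G}(\nu)$ onto the simplex $\triangle_q$ (this is well-defined since $A \ge 0$ is not identically zero on any relevant row, so $\langle \ind_q, \bm{G}(\nu)\rangle > 0$; I would note this briefly). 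Since $\triangle_{[q]^d}$ is the convex hull of the point masses $\{\delta_\tau : \tau \in [q]^d\}$ and $\bm{G}$ is linear, the image $\{\bm{G}(\nu) : \nu \in \triangle_{[q]^d}\}$ is exactly the polytope $C \defeq \conv\{\bm{G}(\delta_\tau) : \tau \in [q]^d\}$.

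Next I would push this through the radial projection $\pi(\bm{x}) = \bm{x}/\langle \ind_q, \bm{x}\rangle$. The key point is that $\pi$ maps line segments to line segments: if $\bm{x}, \bm{y} \in \R_{>0}^q$ (more precisely, in the relevant cone where the denominator is positive), then $\pi$ restricted to the segment $[\bm{x},\bm{y}]$ is a Möbius-type reparametrization of $[\pi(\bm{x}), \pi(\bm{y})]$, so $\pi([\bm{x},\bm{y}]) = [\pi(\bm{x}),\pi(\bm{y})]$ as sets. Consequently $\pi$ carries the polytope $C$ onto the convex hull of its vertices' images: $\pi(C) = \conv\{\pi(\bm{G}(\delta_\tau)) : \tau\} = \conv\{F_{A,d}(\delta_\tau) : \tau \in [q]^d\}$. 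Since $\pi(C) = \{F_{A,d}(\nu) : \nu \in \triangle_{[q]^d}\} = \cF_{A,d}$, this gives $\cF_{A,d} = \conv\{F_{A,d}(\delta_\tau) : \tau\}$, and it is a polytope as the convex hull of finitely many points. Finally, since each product measure is an arbitrary measure, $\cF_{A,d}^{\PROD} \subseteq \cF_{A,d}$; and each $\delta_\tau = \bigotimes_i \delta_{\tau(i)}$ is itself a product measure, so $\{F_{A,d}(\delta_\tau) : \tau\} \subseteq \cF_{A,d}^{\PROD} \subseteq \cF_{A,d} = \conv\{F_{A,d}(\delta_\tau):\tau\}$; taking convex hulls and using that $\cF_{A,d}$ is already convex squeezes all three sets' convex hulls together, yielding $\cF_{A,d} = \conv(\cF_{A,d}^{\PROD}) = \conv\{F_{A,d}(\delta_\tau):\tau\}$.

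The main obstacle — really the only subtlety — is making the claim ``$\pi$ maps segments to segments'' rigorous while handling the possibility that $\bm{G}(\delta_\tau)$ has some zero coordinates (i.e.\ $\pi$ is only defined on the cone $\{\bm{x} \ge 0 : \langle \ind_q,\bm{x}\rangle > 0\}$, not all of $\R_{\geq 0}^q \setminus \{0\}$ issues do not arise since the denominator is a strictly positive linear functional on that cone, but one should check the whole segment $[\bm{G}(\delta_\tau),\bm{G}(\delta_{\tau'})]$ stays in the cone, which holds because the denominator is positive at both endpoints and linear, hence positive on the segment). I would dispatch this with the explicit formula: for $\bm{x},\bm{y}$ in the cone and $t \in [0,1]$, $\pi(t\bm{x} + (1-t)\bm{y}) = s(t)\,\pi(\bm{x}) + (1-s(t))\,\pi(\bm{y})$ where $s(t) = \frac{t\langle\ind_q,\bm{x}\rangle}{t\langle\ind_q,\bm{x}\rangle + (1-t)\langle\ind_q,\bm{y}\rangle} \in [0,1]$ is a continuous bijection of $[0,1]$ onto itself, which is a one-line computation. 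Everything else is bookkeeping with convex hulls and the linearity of $\bm{G}$.
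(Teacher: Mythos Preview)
Your proposal is correct and follows essentially the same approach as the paper: both exploit linearity of $\bm{G}$ on $\triangle_{[q]^d}$ and then push through the normalization map. The paper phrases the first step slightly more compactly by noting $\cF_{A,d} = \{G(\bm{x}) : \bm{x} \in \R_{\geq0}^{[q]^d}\} \cap \triangle_q$ (a polyhedral cone intersected with the simplex), and for the second step it writes down directly the multi-point analogue of your reparametrization, namely $F(\mu) = \sum_\tau \xi_\mu(\tau)\, F(\delta_\tau)$ with $\xi_\mu(\tau) \propto \mu(\tau)\sum_{\mfc}\prod_i A_{\mfc,\tau(i)}$, which is exactly your $s(t)$ generalized from two points to all $\delta_\tau$'s at once.
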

\begin{proof}
If we extend the domain of $F$ and $G$ to all of $\R_{\geq0}^{[q]^{d}}$ in the straightforward way, then it is clear that
\begin{align*}
    \cF_{A,d} = \wrapc{G\wrapp{\bm{x}} : \bm{x} \in \R_{\geq0}^{[q]^{d}}} \cap \triangle_{q}.
\end{align*}
Since $G$ is a linear function in $\bm{x}$, the first set is a convex polyhedral cone contained in $\R_{\geq0}^{q}$. Intersecting with $\triangle_{q}$ yields a convex polytope, establishing the first claim. For the second claim, observe that for any distribution $\mu$ over $[q]^{d}$,
\begin{align*}
    F(\mu) = \sum_{\tau \in [q]^{d}} \xi_{\mu}(\tau) \cdot F\wrapp{\delta_{\tau}}
\end{align*}
where
\begin{align*}
    \xi_{\mu}(\tau) \propto \mu(\tau) \sum_{\mfc \in [q]} \prod_{i=1}^{d} A_{\mfc,\tau(i)}, \qquad \forall \tau \in [q]^{d}.
\end{align*}
This completes the proof.
\end{proof}

\section{Consequences of a Positive Answer to the Main Question}\label{app:main-q-consequences}
In this section, we show that if \cref{q:main} had a positive answer, then contraction for belief propagation on product measures, i.e. correlation decay on trees, implies a strong form of correlation decay for general graphs, as well as optimal mixing of Glauber dynamics via spectral independence. This motivates \cref{q:main} as a unified approach to tackling \cref{metaQ:main}. Let us first formalize the notion of contraction commonly used. As in prior works (see e.g. \cite{LLY13, RSTVY13, CLV20, CLMM23}), we will use the potential method with respect to a suitably controlled potential function.

\begin{definition}[Bounded Potential]\label{defn:bdd-potential}
We say a function $\varphi : [0,1] \to \R_{\geq0}$ is a \emph{bounded potential} if it is smooth, strictly monotone, and if there exist positive universal constants $L,L' > 0$ such that $\abs{\varphi(p)} \leq L, \abs{\varphi'(p)} \geq L'$ for all $p \in [0,1]$. Throughout, we write $\Phi = \varphi'$. If $\bm{p} \in [0,1]^{q}$ is a vector, then we write $\varphi(\bm{p})$ to mean entrywise application of the potential. We apply the same convention to its inverse $\varphi^{-1}$.
\end{definition}
\begin{remark}
It will be clear in a moment why these boundedness assumptions are necessary. In practice, they often only need to be enforced on a smaller interval in $[0,1]$, which can derived from standard \emph{marginal bounds} for the spin system under consideration. We assume boundedness throughout $[0,1]$ for simplicity.
\end{remark}
\begin{definition}[Contraction]\label{def:contraction}
Fix a symmetric interaction matrix $A \in \R_{\geq0}^{q \times q}$. For $\Delta \in \N$, we say $F_{A}$ is \emph{up-to-$\Delta$ contracting} if there exists a norm $\norm{\cdot}_{K}$ on $\R^{q}$, a bounded potential $\varphi : [0,1] \to \R_{\geq0}$, and $0 < \delta < 1$ such that for every integer $1 \leq d < \Delta$, the function
\begin{align*}
    F_{A,d}^{\circ \varphi}\wrapp{\bm{m}_{1},\dots,\bm{m}_{d}} \defeq \varphi\wrapp{F_{A,d}\wrapp{\bigotimes_{i=1}^{d} \varphi^{-1}(\bm{m}_{i})}}%, \qquad \forall \bm{m}_{1},\dots,\bm{m}_{d} \in \varphi\wrapp{\triangle_{q}}
\end{align*}
has Jacobian $J$ (a matrix in $\R^{q \times dq}$) satisfying the following: For every $\bm{m}_{1},\dots,\bm{m}_{d} \in \varphi\wrapp{\triangle_{q}}$, its induced matrix norm is bounded as
\begin{align}\label{eq:jacobian-contraction}
    \norm{JF_{A,d}^{\circ \varphi}(\bm{m}_{1},\dots,\bm{m}_{d})}_{\norm{\cdot}_{K} \to \norm{\cdot}_{\infty,K}} \leq 1 - \delta,
\end{align}
where we endow the range $\R^{q}$ of matrix $JF_{A,d}^{\circ \varphi}(\bm{m}_{1},\dots,\bm{m}_{d})$ with the given norm $\norm{\cdot}_{K}$, and the domain $\R^{dq} \cong \wrapp{\R^{q}}^{d}$ with the \emph{hybrid} vector norm $\norm{\bm{x}}_{\infty,K} \defeq \max_{i=1,\dots,d} \norm{\bm{x}_{i}}_{K}$.
\end{definition}
\begin{remark}
By the multivariate Mean Value Theorem, the above Jacobian condition readily implies
\begin{align}\label{eq:contraction}
    \norm{\varphi\wrapp{F_{A}(\mu)} - \varphi\wrapp{F_{A}(\nu)}}_{K} \leq (1 - \delta) \max_{i=1,\dots,d} \norm{\varphi\wrapp{\mu_{i}} - \varphi\wrapp{\nu_{i}}}_{K}
\end{align}
for every $1 \leq d < \Delta$ and pair of product measures $\mu = \bigotimes_{i=1}^{d} \mu_{i}, \nu = \bigotimes_{i=1}^{d} \nu_{i}$. All the analysis in this section can be made robust enough to handle weighted variants of $\norm{\cdot}_{\infty,K}$ like those used in \cite{CLMM23}.
\end{remark}

\subsection{Contraction Implies Correlation Decay for General Graphs}\label{sec:main-conj-implies-SSM}
We begin by showing how contraction on trees plus a positive answer to \cref{q:main} together imply a very strong form of correlation decay known as \emph{strong spatial mixing} for \emph{general} bounded-degree graphs.
\begin{defn}[Strong Spatial Mixing]\label{defn:ssm}
We say the Gibbs distribution $\mu$ of a $q$-spin system on a graph $G=(V,E)$ exhibits \emph{strong spatial mixing (SSM)} if there exists $0 < \delta < 1$ and $C > 0$ such that for every vertex $r \in V$, every subset $\Lambda \subseteq V \setminus\{r\}$, and every pair of external fields $\bm{\lambda},\bm{\lambda}' \in \R_{\geq0}^{\Lambda \times [q]}$ which differ on $\partial(\bm{\lambda},\bm{\lambda}') \defeq \{v \in \Lambda : \bm{\lambda}_{v,\cdot} \neq \bm{\lambda}'_{v,\cdot}\} \subseteq \Lambda$, we have the inequality
\begin{align*}
    \norm{\wrapp{\bm{\lambda} \star \mu}_{r} - \wrapp{\bm{\lambda}' \star \mu}_{r}}_{\TV} \leq C \cdot (1-\delta)^{\dist_{G}(r,\partial(\bm{\lambda},\bm{\lambda}'))}.
\end{align*}
where $\|\cdot\|_{\TV}$ denotes the total variation distance, and $\dist_{G}(\cdot,\cdot)$ denotes the shortest path distance in $G$.
\end{defn}
\begin{remark}
Technically, the above definition is a strengthening of the usual notion of strong spatial mixing which only considers the case where $\bm{\lambda},\bm{\lambda}'$ correspond to \emph{pinnings}, i.e. there exist $\tau, \tau' : \Lambda \to [q]$ such that for every $v \in \Lambda$, $\bm{\lambda}_{v,\tau(v)} = \bm{\lambda}'_{v,\tau'(v)} = \infty$ and $\bm{\lambda}_{v,\mfc} = 0$ for all $\mfc \neq \tau(v)$ (resp. $\bm{\lambda}'_{v,\mfc} = 0$ for all $\mfc \neq \tau'(v)$).
\end{remark}

\begin{theorem}
Fix a symmetric interaction matrix $A \in \R_{\geq0}^{q \times q}$. Suppose $A$ is up-to-$\Delta$ contracting with respect to a norm $\norm{\cdot}_{K}$ on $\R^{q}$ and a bounded potential $\varphi$, and further assume that \cref{q:main} holds for $A$ with respect to the potential $\varphi$. Then the Gibbs distribution of the spin system associated to $A$ on any graph of maximum degree $\Delta$ exhibits strong spatial mixing.
\end{theorem}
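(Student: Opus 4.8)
The plan is to establish strong spatial mixing by constructing, for each vertex $r$, a computation tree whose root marginal equals $\mu_{G,r}$, and then use the contraction hypothesis to show that the influence of far-away boundary perturbations decays geometrically along this tree. First I would recall the general-graph recursion \cref{lem:recursion-general}: for $r$ with neighborhood $N(r) = \{u_1,\dots,u_d\}$, we have $\mu_{G,r} = F_A(\mu_{G-r,N(r)})$, where the argument is an \emph{arbitrary} (not product) measure on $[q]^d$. Invoking the hypothesis that \cref{q:main} holds for $A$ with potential $\varphi$, there is a universal mixture $\xi$ of external-field tuples such that $\varphi(F_A(\mu_{G-r,N(r)})) = \E_{(\bm{\lambda}^{(1)},\dots,\bm{\lambda}^{(d)})\sim\xi}[\varphi(F_A(\bigotimes_{i=1}^d (\bm{\lambda}^{(i)}\star\mu_{G-r,N(r)})_i))]$. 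Each tilted-and-marginalized measure $(\bm{\lambda}^{(i)}\star\mu_{G-r,N(r)})_i$ is again a marginal of a spin-system distribution (on $G - r$ with an additional external field), so the construction can be iterated: recursing on each $u_i$ expresses $\mu_{G,r}$ as the root value of a (randomized) tree recursion in which each internal node has at most $\Delta - 1$ children and every node carries an external field that does not depend on the pinnings/fields being compared.

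Next I would set up the perturbation analysis. Given two boundary conditions $\bm{\lambda},\bm{\lambda}'$ differing on a set $\partial(\bm{\lambda},\bm{\lambda}')$ at graph-distance $\geq \ell$ from $r$, both $(\bm{\lambda}\star\mu)_r$ and $(\bm{\lambda}'\star\mu)_r$ are obtained by running the \emph{same} randomized computation tree (the fields $\xi$ are universal, hence identical in both runs), differing only in the leaf data coming from the boundary. Because the potential $\varphi$ is monotone and strictly bounded (constants $L, L'$ from \cref{defn:bdd-potential}), $\|\cdot\|_{\mathrm{TV}}$ at the root is comparable, up to constants, to $\|\varphi(\text{root}) - \varphi'(\text{root})\|_K$; passing to $\varphi$-coordinates, the averaging identity from \cref{q:main} lets one bound the root discrepancy by $\E_{\xi}$ of the discrepancy of $F_A$ on product measures, which is exactly the object controlled by the contraction inequality~\cref{eq:contraction}: $\|\varphi(F_A(\mu)) - \varphi(F_A(\nu))\|_K \leq (1-\delta)\max_i \|\varphi(\mu_i) - \varphi(\nu_i)\|_K$. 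Applying this at every level, the discrepancy contracts by a factor $(1-\delta)$ per level of the tree, so after $\ell$ levels it is at most $(1-\delta)^\ell$ times the trivial bound $2L$ at the leaves (leaves within distance $\ell - O(1)$ of the perturbation carry zero discrepancy, by an induction on the tree that mirrors the standard SSM argument: a subtree not touching $\partial(\bm{\lambda},\bm{\lambda}')$ produces identical values in both runs). Converting back through $\varphi^{-1}$ and $\|\cdot\|_K \asymp \|\cdot\|_{\mathrm{TV}}$ costs only universal constants, yielding the claimed bound $C(1-\delta)^{\dist_G(r,\partial(\bm{\lambda},\bm{\lambda}'))}$.

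The main obstacle, and the step requiring care, is the bookkeeping that makes the recursion genuinely a \emph{tree} with the right depth-vs-distance relationship. Each application of \cref{q:main} replaces a measure on the $d$-neighborhood of the current root by product measures whose factors are themselves neighborhood-marginals in the (field-modified) graph $G$ minus the current root; one must verify that these new "pendant" vertices are the graph-neighbors $u_i$, that recursing preserves the property of being a spin-system marginal with a nonnegative external field, and that a node's descendants in the computation tree lie at genuinely larger graph-distance from the original boundary — this is the analogue of the unwinding of Weitz's self-avoiding-walk tree and is where one uses that we delete the current vertex before recursing, preventing the perturbation from being "seen" too early. A secondary subtlety is that the contraction hypothesis \cref{eq:contraction} is stated for product measures with at most $\Delta - 1$ factors, so one needs that every internal node of the computation tree indeed has degree $< \Delta$; this follows because the root $r$ has $\le \Delta$ neighbors while every subsequent node, having one edge used to attach it to its parent, has $\le \Delta - 1$ remaining neighbors. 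Once this combinatorial scaffolding is in place, the quantitative decay is a routine induction on tree depth using~\cref{eq:contraction} and the boundedness of $\varphi$.
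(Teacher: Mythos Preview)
Your proposal is correct and takes essentially the same approach as the paper. The paper packages the argument slightly more directly by first deriving a single one-step inequality $\|\varphi(F_A(\mu)) - \varphi(F_A(\nu))\|_K \le (1-\delta)\max_i\sup_{\bm{\lambda}}\|\varphi((\bm{\lambda}\star\mu)_i) - \varphi((\bm{\lambda}\star\nu)_i)\|_K$ (combining the averaging identity, convexity of $\|\cdot\|_K$, and contraction on product measures) and then iterating it, rather than explicitly constructing the randomized computation tree, but both presentations hinge on exactly the observation you identify: the universal fields drawn from $\xi$ do not alter $\partial(\bm{\lambda},\bm{\lambda}')$, so one can merge fields and recurse until boundedness of $\varphi$ terminates the induction after $\dist_G(r,\partial(\bm{\lambda},\bm{\lambda}'))$ steps.
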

\begin{remark}
It is known that in general, enforcing ``up-to-$\Delta$ contraction'' as opposed to only contraction for $d = \Delta - 1$ is necessary for strong spatial mixing on general graphs of maximum degree $\Delta$; see \cite{LLY13}.
\end{remark}
\begin{proof}
Observe that by combining contraction, convexity of the norm $\norm{\cdot}_{K}$, and the fact that the averaging distribution $\xi$ does not depend on the choice of input distribution in \cref{q:main}, we have that for any $d \in \N$ and any pair of measures $\mu,\nu$ on $[q]^{d}$
\begin{align}\label{eq:contract-to-iterate}
    \norm{\varphi\wrapp{F_{A}(\mu)} - \varphi\wrapp{F_{A}(\nu)}}_{K} \le (1 - \delta) \max_{i=1,\dots,d} \sup_{\bm{\lambda} \in \R_{\geq0}^{d \times q}} \norm{\varphi\wrapp{\wrapp{\bm{\lambda} \star \mu}_{i}} - \varphi\wrapp{\wrapp{\bm{\lambda} \star \nu}_{i}}}_{K}.
\end{align}
Now fix a vertex $r \in V$ with neighbors $N(r) = \{u_{1},\dots,u_{d}\}$ and fix generalized boundary conditions $\bm{\lambda},\bm{\lambda}' \in \R_{\geq0}^{\Lambda \times [q]}$ for $\Lambda \subseteq V \setminus \{r\}$. We have % Writing $K = \dist\wrapp{r,\partial(\bm{\lambda},\bm{\lambda}')}$, we have
\begin{align*}
    &\norm{\wrapp{\bm{\lambda} \star \mu_{G}}_{r} - \wrapp{\bm{\lambda}' \star \mu_{G}}_{r}}_{\TV} \\
    &\leq C_{\varphi,K} \cdot \norm{\varphi\wrapp{\wrapp{\bm{\lambda} \star \mu_{G}}_{r}} - \varphi\wrapp{\wrapp{\bm{\lambda}' \star \mu_{G}}_{r}}}_{K} \tag{Boundedness of $\varphi$} \\
    &\leq C_{\varphi,K}(1 - \delta) \max_{i=1,\dots,d} \sup_{\bm{\lambda}^{(i)} \in \R_{\geq0}^{N(r) \times [q]}} \norm{\varphi\wrapp{\bm{\lambda}^{(i)} \star \wrapp{\bm{\lambda} \star \mu_{G-r}}_{u_{i}}} - \varphi\wrapp{\bm{\lambda}^{(i)} \star \wrapp{\bm{\lambda}' \star \mu_{G-r}}_{u_{i}}}}_{K} \tag{Using \cref{eq:contract-to-iterate} with the measures $(\bm{\lambda} \star \mu_{G})_{r}, (\bm{\lambda}' \star \mu_{G})_{r}$ over $[q]^{d}$} \\
    &\leq C_{\varphi,K}(1 - \delta) \max_{i=1,\dots,d} \sup_{\bm{\lambda}^{(i)} \in \R_{\geq0}^{N(r) \times [q]}} \norm{\varphi\wrapp{\wrapp{\bm{\lambda}^{(i)} \star \bm{\lambda} \star \mu_{G-r}}_{u_{i}}} - \varphi\wrapp{\wrapp{\bm{\lambda}^{(i)} \star \bm{\lambda}' \star \mu_{G-r}}_{u_{i}}}}_{K} \tag{Merging external fields} \\
    &\leq \dotsb \tag{Induction} \\
    &\leq C_{\varphi,K}' (1 - \delta)^{\dist(r,\partial(\bm{\lambda},\bm{\lambda}'))}. \tag{Invoking boundedness again}
%    &\leq C_{\varphi,p}(1 - \delta)^{\dist(r,\partial(\bm{\lambda},\bm{\lambda}'))} \cdot \max_{v \in \partial(\bm{\lambda},\bm{\lambda}')} \max_{S \subseteq B(r,K)} \sup_{\bm{\lambda}^{(S)} \in \R_{\geq0}^{S \times [q]}} \norm{\varphi\wrapp{\wrapp{\bm{\lambda}^{(S)} \star \bm{\lambda} \star \mu_{G[V\setminus S]}}_{v}} - \varphi\wrapp{\wrapp{\bm{\lambda}^{(S)} \star \bm{\lambda}' \star \mu_{G[V\setminus S]}}_{v}}}_{K} \\
\end{align*}
Crucially, throughout the induction, we used the fact that the new external fields we introduce through \cref{eq:contract-to-iterate} do not change the set of vertices $\partial(\bm{\lambda},\bm{\lambda}')$ with disagreeing fields. For instance, we have $\partial\wrapp{\bm{\lambda}^{(i)} \sqcup \bm{\lambda}, \bm{\lambda}^{(i)} \sqcup \bm{\lambda}'} = \partial(\bm{\lambda},\bm{\lambda}')$ for the fourth line, where $\sqcup$ denotes concatenation.
\end{proof}

\subsection{Contraction Implies Spectral Independence and Optimal Mixing}\label{sec:main-conj-implies-SI}
In this subsection, we establish that contraction on trees plus a positive answer to \cref{q:main} implies optimal mixing of Glauber dynamics via the spectral independence framework introduced in \cite{ALO21}. We work with the notation from \cite{CLMM23}.
\begin{defn}[Influence Matrices]
Let $\mu$ be the Gibbs distribution of a $q$-spin system on a finite graph $G=(V,E)$. Let $\Lambda \subseteq V$ be a subset of vertices, and let $\tau : \Lambda \to [q]$ be a valid pinning. For vertices $r,v \in V \setminus \Lambda$ and colors $\mfb,\mfc \in [q]$, we define the \emph{influence} of the vertex-color pair $(r,\mfb)$ on the vertex-color pair $(v,\mfc)$ by the quantity
\begin{align*}
    \infl_{\mu^{\tau}}((r,\mfb) \to (v,\mfc)) &\defeq \mu_{v}^{\tau, r \gets \mfb}(\mfc) - \mu_{v}^{\tau}(\mfc) \\
    &= \Pr_{\sigma \sim \mu}[\sigma(v) = \mfc \mid \sigma_{\Lambda} = \tau, \sigma(r) = \mfb] - \Pr_{\sigma \sim \mu}[\sigma(v) = \mfc \mid \sigma_{\Lambda} = \tau].
\end{align*}
If $r=v$, or if the assignments $r \gets \mfb$ or $v \gets \mfc$ are not feasible in $\mu^{\tau}$, then we define this quantity to be zero. We collect all these influences into a single \emph{influence matrix} $\infl_{\mu^{\tau}}$. It will also be convenient to introduce vertex-to-vertex influence submatrices $\infl_{\mu^{\tau}}^{r \to v} \in \R^{q \times q}$ given by
\begin{align}\label{eq:vtx-to-vtx-infl-submat}
    \infl_{\mu^{\tau}}^{r \to v}(\mfb,\mfc) \defeq \infl_{\mu^{\tau}}((r,\mfb) \to (v,\mfc)), \qquad \forall \mfb,\mfc \in [q].
\end{align}
\end{defn}
It is well-known that $\infl_{\mu^{\tau}}$ has real eigenvalues; see \cite{AL20,ALO21,CGSV21,FGYZ21}.

\begin{defn}[Spectral Independence]
We say the Gibbs distribution $\mu$ of a spin system on a finite $n$-vertex graph $G=(V,E)$ is \emph{spectrally independent with constant $\eta$} if for every $\Lambda \subseteq V$ with $|\Lambda| \leq n - 2$ and every valid pinning $\tau$ on $\Lambda$, we have the inequality $\eigval_{\max}(\infl_{\mu^{\tau}}) \leq \eta$, where $\eigval_{\max}$ denotes the maximum eigenvalue of a square matrix with real eigenvalues.
\end{defn}

It was established in \cite{CLV21a} that $O(1)$-spectral independence for the Gibbs distribution of a spin system on a bounded-degree graph implies optimal $O(n\log n)$ mixing time for Glauber dynamics. We now show that contraction of belief propagation in the sense of \cref{def:contraction}, combined with \cref{q:main}, implies $O(1)$-spectral independence. The following theorem mirrors a result of \cite{CLMM23}, which shows that contraction implies $O(1)$-spectral independence in the case of trees. It also generalizes strategies used previously to establish spectral independence via correlation decay and self-avoiding walk trees \cite{ALO21, CLV20, FGYZ21, CGSV21}.

\begin{theorem}
Fix a symmetric interaction matrix $A \in \R_{\geq0}^{q \times q}$. Suppose $A$ is up-to-$\Delta$ contracting with respect to a norm $\norm{\cdot}_{K}$ on $\R^{q}$ and a bounded potential $\varphi$, and further assume that \cref{q:main} has an affirmative answer for $A$ with respect to the potential $\varphi$. Then the Gibbs distribution $\mu$ of the spin system associated to $A$ on any graph of maximum degree $\Delta$ is $O_{A,\Delta,\varphi}(1/\delta)$-spectrally independent, where the $O_{A,\Delta,\varphi}(\cdot)$ hides constants depending on $A,\Delta$ and the boundedness of $\varphi$ (and $\delta$ is as in~\cref{def:contraction}). In particular, Glauber dynamics with respect to $\mu$ mixes in $O(n\log n)$-steps.
\end{theorem}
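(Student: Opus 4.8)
The plan is to follow the blueprint of \cite{CLMM23}, which establishes the corresponding statement on trees, and to use a positive answer to \cref{q:main} as the device that lets the neighborhood marginals of a general graph play the role product measures play on a tree. Fix a graph $G$ of maximum degree $\Delta$, a subset $\Lambda$ with $|\Lambda| \le n - 2$, a valid pinning $\tau$ on $\Lambda$, and a vertex $r \notin \Lambda$; the goal is to bound $\eigval_{\max}(\infl_{\mu^{\tau}})$ by $O_{A,\Delta,\varphi}(1/\delta)$. I would first reduce from influences to increments of belief propagation: by \cref{lem:recursion-general}, $\mu^{\tau}_{G,v} = F_{A}\wrapp{\mu^{\tau}_{G-v,N(v)}}$ and $\mu^{\tau,r\gets\mfb}_{G,v} = F_{A}\wrapp{\mu^{\tau,r\gets\mfb}_{G-v,N(v)}}$ for every $v \notin \Lambda$ and $\mfb \in [q]$, so, since $\varphi$ is a bounded potential with constants $L,L'$ (see \cref{defn:bdd-potential}) and all norms on $\R^{q}$ are equivalent, the operator norm of each vertex-to-vertex block $\infl^{r\to v}_{\mu^{\tau}}$ is controlled — up to a factor depending only on $A,\varphi$, and $\norm{\cdot}_{K}$ — by $\max_{\mfb \in [q]} \norm{\varphi\wrapp{\mu^{\tau,r\gets\mfb}_{G,v}} - \varphi\wrapp{\mu^{\tau}_{G,v}}}_{K}$. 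It therefore suffices to control these $\varphi$-transformed increments for all $v$ and then reassemble them into a spectral bound.

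The crux is a tree-style recursion for these increments, and this is the only place \cref{q:main} is genuinely needed. I would apply \cref{e:round-to-product} to both $\mu^{\tau}_{G-v,N(v)}$ and $\mu^{\tau,r\gets\mfb}_{G-v,N(v)}$ with the \emph{same} universal $\xi$; universality — independence from the input measure, hence from the pinning — is exactly what legitimizes comparing the two. This rewrites $\varphi\wrapp{\mu^{\tau}_{G,v}}$ and $\varphi\wrapp{\mu^{\tau,r\gets\mfb}_{G,v}}$ as $\xi$-averages of terms $\varphi\wrapp{F_{A}\wrapp{\bigotimes_{i=1}^{d}\wrapp{\bm{\lambda}^{(i)}\star\mu}_{i}}}$. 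Pulling the difference inside the expectation by convexity of $\norm{\cdot}_{K}$; bounding each such term by $(1-\delta)$ times a single-neighbor increment via the Mean Value Theorem and the up-to-$\Delta$ contraction of $F_{A,d}^{\circ\varphi}$ in the hybrid norm $\norm{\cdot}_{\infty,K}$ (exactly as in \cref{eq:contraction}); and merging the freshly introduced external fields into $\tau$ — the move used in the induction of \cref{sec:main-conj-implies-SSM} — one obtains that single-neighbor increment in the same form, one step closer to $r$. The ``$\max$'' in $\norm{\cdot}_{\infty,K}$ is crucial here: the $d$ neighbors contribute a maximum rather than a sum, so the perturbation emanating from $r$ stays concentrated on a single branch at each level of the induced (random) computation tree, with no $\Delta$-blowup, just as on an honest tree. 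Unwinding the recursion to $r$, where $\mu^{\tau,r\gets\mfb}_{G,r} = \delta_{\mfb}$ and the source increment $\norm{\varphi(\delta_{\mfb}) - \varphi\wrapp{\mu^{\tau}_{G,r}}}_{K}$ is bounded in terms of $L$, shows that the $\varphi$-transformed influence of $r$ on a vertex at distance $k$ is $O\wrapp{(1-\delta)^{k}}$, with the implicit constant controlled by $L$, on average over this computation tree.

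To pass from this layered geometric decay to a bound on $\eigval_{\max}$, I would invoke the same bookkeeping \cite{ALO21, CLMM23} use to convert hybrid-norm contraction into spectral independence on trees: the influence matrix $\infl_{\mu^{\tau}}$ has a Neumann-type expansion over walks issuing from $r$, and the single-branch concentration established in the previous step makes the layer-$k$ contribution to the relevant operator norm (a hybrid norm adapted to distance from $r$) behave like $O(1)\cdot(1-\delta)^{k}$ rather than $\Delta^{k}(1-\delta)^{k}$. Summing the geometric series and translating back through $L,L'$ and norm equivalence yields $\eigval_{\max}(\infl_{\mu^{\tau}}) = O_{A,\Delta,\varphi}(1/\delta)$, using that $\infl_{\mu^{\tau}}$ has real eigenvalues, so that its top eigenvalue is at most any induced operator norm. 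Since $\delta$ is a fixed constant this is $O(1)$-spectral independence, and then \cite{CLV21a} gives $O(n\log n)$ mixing of Glauber dynamics, as claimed.

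\textbf{The main obstacle} is the recursion above: one must verify that iterating \cref{e:round-to-product} genuinely simulates a computation tree — in particular that the external fields introduced at successive levels can be absorbed into $\tau$ without relocating the ``source'' vertex $r$ or distorting combinatorial distances, and that the Jacobian and operator-norm estimates survive the tilt-and-marginalize operations $\bm{\lambda}^{(i)}\star(\cdot)$. This is the analogue, at the level of influence submatrices rather than scalar total-variation distances, of the ``merging external fields'' induction in \cref{sec:main-conj-implies-SSM}, and it is the only step where the full strength of the universality of $\xi$ is used; granting it, the passage to the eigenvalue bound is essentially the tree argument of \cite{CLMM23}.
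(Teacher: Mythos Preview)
Your high-level plan is right, but there is a genuine gap between your second and third paragraphs, and it is precisely the step that carries all the weight.

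In paragraph 2 you run the \emph{backward} (target-to-source) recursion: you express the marginal at the target $v$ via $F_A$ applied to $\mu_{G-v,N(v)}$, apply \cref{e:round-to-product} there, and contract toward $r$. This is the strong-spatial-mixing argument of \cref{sec:main-conj-implies-SSM}, and what it yields is a per-pair bound of the form $\norm{\infl^{r\to v}_{\mu^{\tau}}}_K \le C(1-\delta)^{\dist_G(r,v)}$. But to bound $\eigval_{\max}(\infl_{\mu^{\tau}})$ via the induced $\norm{\cdot}_{\infty,K}$ operator norm, you must control $\max_{r}\norm{\sum_{v}\infl^{r\to v}_{\mu^{\tau}}\,\bm{x}_v}_K$. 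Passing the norm inside the sum and using your per-pair bounds gives at best $\sum_{k\ge 0}\abs{\{v:\dist(r,v)=k\}}(1-\delta)^k$, which on a degree-$\Delta$ graph is $\sum_k \Theta(\Delta)^k(1-\delta)^k$ and diverges unless $\delta>1-1/(\Delta-1)$. Your ``single-branch concentration'' is real, but it is concentration \emph{for a fixed target $v$}; distinct targets select distinct branches, and the sum over $v$ reintroduces exactly the $\Delta$-blowup you were hoping to avoid. The Neumann-type expansion ``over walks issuing from $r$'' you appeal to in paragraph 3 is the \emph{forward} (source-to-target) recursion, and it is not delivered by what you did in paragraph 2.

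What the paper does instead is establish an \emph{exact} Jacobian-level identity by differentiating \cref{e:round-to-product} in external-field variables (this is where universality of $\xi$ is used, to push $\partial/\partial t_{v,\mfb}$ through the $\xi$-expectation):
\[
\infl^{r\to v}_{\mu_G} \;=\; \E_{(\bm{\lambda}^{(1)},\dots,\bm{\lambda}^{(d)})\sim\xi}\Bigl[\sum_{i=1}^{d} D_r^{-1}\,(J_iF^{\circ\varphi})(\overrightarrow{\bm{m}})\,D_i\,\infl^{u_i\to v}_{\bm{\lambda}^{(i)}\star\mu_{G-r}}\Bigr],
\]
where $u_1,\dots,u_d$ are the neighbors of $r$. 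Because the prefactors $D_r^{-1}(J_iF^{\circ\varphi})D_i$ are the same for every target $v$ and the residual graph is the single graph $G-r$, one can write $\sum_v \infl^{r\to v}\bm{x}_v = D_r^{-1}\sum_i (J_iF^{\circ\varphi})\bigl(D_i\sum_v \infl^{u_i\to v}_{G-r}\bm{x}_v\bigr)$ and then apply the \emph{matrix} contraction bound $\norm{JF^{\circ\varphi}}_{K\to\infty,K}\le 1-\delta$ to convert the sum over children $i$ into a $\max$. That sum-to-max conversion is exactly what kills the $\Delta^k$ factor at every layer; it requires the forward Jacobian identity, not a finite-difference inequality. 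Your finite-difference step produces only an upper bound, per target, with a $\max$ you cannot later exchange with the sum over $v$. If you replace your paragraph 2 with this derivative-of-\cref{e:round-to-product} lemma and recurse from $r$ outward through a (random) computation tree, the rest of your outline goes through.
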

\begin{proof}
As was previously mentioned, $O(n\log n)$-mixing of Glauber dynamics follows from $O(1)$-spectral independence and \cite{CLV21a}. To establish $O(1)$-spectral independence, we suitably modify the strategy from \cite{CLMM23}, who showed that up-to-$\Delta$ contraction of belief propagation implies spectral independence in the special case of trees with maximum degree $\Delta$. Consider the hybrid vector norm $\norm{\bm{x}}_{\infty,K} = \max_{v \in V} \norm{\bm{x}_{v}}_{K}$ on $\R^{V \times [q]} \cong \wrapp{\R^{q}}^{V}$. We will establish
\begin{align*}
    \norm{\infl_{\mu^{\tau}}} \leq O_{A,\Delta,\varphi}(1/\delta), \qquad \forall \Lambda \subseteq V \text{ with } \abs{\Lambda} \leq \abs{V} - 2, \forall \tau : \Lambda \to [q],
\end{align*}
where the matrix norm $\|\cdot\|$ is induced by the hybrid vector norm $\norm{\cdot}_{\infty,K}$. This is enough since it is a standard linear algebraic fact that $\eigval_{\max}(M) \leq \norm{M}$ for any square matrix $M$ with real eigenvalues, where $\norm{\cdot}$ is any matrix norm induced by a vector norm. For cleanliness of notation, we will assume $\Lambda = \emptyset$; the same argument works mutatis mutandis when $\Lambda \neq \emptyset$. We will need the following lemma, which expresses the vertex-to-vertex influence submatrices as an average of Jacobians of the belief propagation functional. We provide a proof at the end of this section.
\begin{lemma}\label{lem:infl-decomp-jac}
Let $\mu$ be the Gibbs distribution of a $q$-spin system on a finite $n$-vertex graph $G=(V,E)$ with symmetric interaction matrix $A \in \R_{\geq0}^{q \times q}$. Let $r \in V$ be any vertex with neighbors $N(r) = \{u_{1},\dots,u_{d}\}$. If \cref{q:main} holds for $\xi$ with respect to a bounded potential $\varphi$ with $\Phi = \varphi'$, then
\begin{align*}
    \infl_{\mu_{G}}^{r \to v} = \E_{(\bm{\lambda}^{(1)},\dots,\bm{\lambda}^{(d)}) \sim \xi}\wrapb{\sum_{i=1}^{d} D_{r}^{-1} \cdot \wrapp{J_{i}F^{\circ \varphi}}\wrapp{\overrightarrow{\bm{m}}} \cdot D_{i} \cdot \infl_{\bm{\lambda}^{(i)} \star \mu_{G-r}}^{u_{i} \to v}},
\end{align*}
where
\begin{itemize}
    \item $\wrapp{J_{i}F^{\circ \varphi}}$ is the $i$th $q \times q$ submatrix of the Jacobian $JF^{\circ \varphi}$ corresponding to $u_{i}$,
    \item $D_{r} = \diag\wrapp{\mu_{G,r} \odot \Phi\wrapp{\mu_{G,r}}}$ is a diagonal matrix (where $\odot$ denotes entrywise multiplication),
    \item $\bm{p}_{i} = \wrapp{\bm{\lambda}^{(i)} \star \mu_{G-r}}_{u_{i}}$ and $\bm{m}_{i} = \varphi\wrapp{\bm{p}_{i}}$ for every $i=1,\dots,d$, and
    \item $D_{i} = \diag\wrapp{\bm{p}_{i} \odot \Phi\wrapp{\bm{p}_{i}}}$ is a diagonal matrix for each $i=1,\dots,d$.
\end{itemize}
(Note that throughout, to avoid cluttering the notation, we suppress the dependence of the diagonal matrices $D_{i}$ and the vectors $\bm{p}_{i},\bm{m}_{i}$ on the fields $\bm{\lambda}^{(i)}$.)
\end{lemma}

With this lemma in hand, we will bound the norm of the influence matrix $\infl_{\mu_{G}}$ by decomposing it using \cref{lem:infl-decomp-jac} and bounding the norms of various pieces in the decomposition. Since nested applications of \cref{lem:infl-decomp-jac} can quickly become notationally cumbersome, we introduce a bookkeeping gadget which can be viewed as a randomized analog of Weitz's self-avoiding walk tree in the $q = 2$ setting.

An \emph{instance} is a triple $\mathcal{I} = (S,v,\bm{\lambda})$, where $S \subseteq V$ is a subset of vertices, $v \in S$ is a distinguished vertex of the instance, and $\bm{\lambda}$ is a collection of external fields applied to a subset of vertices in $S$. The ``purpose'' of this instance is to ``compute'' the marginal distribution of $v$ with respect to the Gibbs distribution on the induced subgraph $G[S]$ after the external fields $\bm{\lambda}$ are applied. We write $\bm{p}_{\mathcal{I}} = \wrapp{\bm{\lambda} \star \mu_{G[S]}}_{v}$ for this distinguished marginal distribution. We also write $\mu_{\mathcal{I}} = \bm{\lambda} \star \mu_{G[S]}$ for convenience.

Now fix a root vertex $r \in V$ of $G$. \cref{q:main} naturally leads to a \emph{random computation tree $\mathcal{T}$} where the nodes of $\mathcal{T}$ are labeled by instances, and the root node is labeled by $\mathcal{I}^{(0)} = (V,r,\emptyset)$. The edges of this random tree $\mathcal{T}$ are determined as follows: A node labeled by an instance $\mathcal{I} = (S,v,\bm{\lambda})$ spawns $1 \leq d \leq \Delta - 1$ child nodes, each labeled by the instance $\wrapp{S - v, u_{i}, \bm{\lambda}^{(i)} \sqcup \bm{\lambda}}$ where $u_{i}$ is the $i$th neighbor of $v$ in $S$ (ordered arbitrarily), and we concatenate the external fields $\bm{\lambda}^{(i)},\bm{\lambda}$ (written as $\bm{\lambda}^{(i)} \sqcup \bm{\lambda}$). The new fields $\bm{\lambda}^{(1)},\dots,\bm{\lambda}^{(d)}$ are drawn randomly from $\xi$, and so these child instances are indeed random. Nonetheless, the resulting tree $\mathcal{T}$ has branching factor $\leq \Delta - 1$ with probability one, and if one only tracks the distinguished vertex $v$ in each instance of the tree, then one recovers the tree of self-avoiding walks of $G$. 

With this notation in hand, for each pair of vertices $r,v \in V$, we may decompose the vertex-to-vertex influence submatrix $\infl_{\mu_{G}}^{r \to v}$ as
\begin{align}\label{eq:infl-rv-decomp}
    \E_{\mathcal{T}}\wrapb{\sum_{k=0}^{\infty} \sum_{\substack{\mathcal{I}^{(0)} \to \dotsb \to \mathcal{I}^{(k)} \\ \text{distinguished vertex of } \mathcal{I}^{(k)} \text{ is } v}} \wrapp{\prod_{j=1}^{k} D_{\mathcal{I}^{(j-1)}}^{-1} \cdot \wrapp{J_{\mathcal{I}^{(j)}}F^{\circ \varphi}}\wrapp{\overrightarrow{\bm{m}}_{\mathcal{I}^{(j-1)}}} \cdot D_{\mathcal{I}^{(j)}}} \cdot \infl_{\mu_{\mathcal{I}^{(k)}}}^{v \to v}}
\end{align}
where $D_{\mathcal{I}} = \diag\wrapp{\bm{p}_{\mathcal{I}} \odot \Phi\wrapp{\bm{p}_{\mathcal{I}}}}$ is a diagonal matrix for any instance $\mathcal{I}$, $\overrightarrow{\bm{m}}_{\mathcal{I}^{(j-1)}}$ is the tuple of $\varphi\wrapp{\bm{p}_{\mathcal{I}}}$ over all child instances $\mathcal{I}$ of $\mathcal{I}^{(j-1)}$, and the Jacobian $J_{\mathcal{I}^{(j)}}F^{\circ \varphi}$ of course is with respect to the distinguished vertex of $\mathcal{I}^{(j)}$, which is a neighbor of the distinguished vertex of $\mathcal{I}^{(j-1)}$. Note that $\infl_{\mu_{\mathcal{I}^{(k)}}}^{v \to v}$ is a ``self-influence'' which is given by $I_{q} - \allone\bm{p}_{\mathcal{I}^{(k)}}^{\top}$.

With the decomposition \cref{eq:infl-rv-decomp} in hand, we now proceed to bound $\norm{\infl_{\mu_{G}}}$. Fix a vector $\bm{x} \in \R^{nq}$. In the following summations, if $\mathcal{I}^{(k)}$ is an instance, then we write $v^{(k)}$ for its corresponding distinguished vertex. We have
\begin{align*}
    &\norm{\infl_{\mu_{G}}\bm{x}}_{\infty,K} \\
    &= \max_{r \in V} \norm{\sum_{v \in V} \infl_{\mu_{G}}^{r \to v} \bm{x}_{v}}_{K} \\
    &= \max_{r \in V} \norm{\E_{\mathcal{T}}\wrapb{\sum_{k=0}^{\infty} \sum_{\substack{\mathcal{I}^{(0)} \to \dotsb \to \mathcal{I}^{(k)} \\ \text{in } \mathcal{T}}} \wrapp{\prod_{j=1}^{k} D_{\mathcal{I}^{(j-1)}}^{-1} \cdot \wrapp{J_{\mathcal{I}^{(j)}}F^{\circ \varphi}}\wrapp{\overrightarrow{\bm{m}}_{\mathcal{I}^{(j-1)}}} \cdot D_{\mathcal{I}^{(j)}}} \cdot \infl_{\mu_{\mathcal{I}^{(k)}}}^{v^{(k)} \to v^{(k)}} \cdot \bm{x}_{v^{(k)}}}}_{K} \tag{\cref{eq:infl-rv-decomp}} \\
    &\leq \sup_{\mathcal{T}} \max_{r \in V} \sum_{k=0}^{\infty} \norm{\sum_{\substack{\mathcal{I}^{(0)} \to \dotsb \to \mathcal{I}^{(k)} \\ \text{in } \mathcal{T}}} \wrapp{\prod_{j=1}^{k} D_{\mathcal{I}^{(j-1)}}^{-1} \cdot \wrapp{J_{\mathcal{I}^{(j)}}F^{\circ \varphi}}\wrapp{\overrightarrow{\bm{m}}_{\mathcal{I}^{(j-1)}}} \cdot D_{\mathcal{I}^{(j)}}} \cdot \infl_{\mu_{\mathcal{I}^{(k)}}}^{v^{(k)} \to v^{(k)}} \cdot \bm{x}_{v^{(k)}}}_{K} \tag{Convexity} \\
    &= \sup_{\mathcal{T}} \max_{r \in V} \sum_{k=0}^{\infty} \norm{D_{\mathcal{I}^{(0)}}^{-1} \sum_{\substack{\mathcal{I}^{(0)} \to \dotsb \to \mathcal{I}^{(k)} \\ \text{in } \mathcal{T}}} \wrapp{\prod_{j=1}^{k} \wrapp{J_{\mathcal{I}^{(j)}}F^{\circ \varphi}}\wrapp{\overrightarrow{\bm{m}}_{\mathcal{I}^{(j-1)}}}} D_{\mathcal{I}^{(k)}}\infl_{\mu_{\mathcal{I}^{(k)}}}^{v^{(k)} \to v^{(k)}}\bm{x}_{v^{(k)}}}_{K} \\
    &\leq O_{A,\Delta,\varphi}(1) \cdot \sup_{\mathcal{T}} \max_{r \in V} \sum_{k=0}^{\infty} \norm{\sum_{\substack{\mathcal{I}^{(0)} \to \dotsb \to \mathcal{I}^{(k)} \\ \text{in } \mathcal{T}}} \wrapp{\prod_{j=1}^{k} \wrapp{J_{\mathcal{I}^{(j)}}F^{\circ \varphi}}\wrapp{\overrightarrow{\bm{m}}_{\mathcal{I}^{(j-1)}}}} \bm{y}_{\mathcal{I}^{(k)}}}_{K}.
\end{align*}
In the last step, we used boundedness of $\varphi$ to pull out a multiplicative factor of $\norm{D_{\mathcal{I}^{(0)}}^{-1}} = O_{A,\Delta,\varphi}(1)$, where the matrix norm is induced by the vector norm $\norm{\cdot}_{K}$ on $\R^{q}$; we also write $$\bm{y}_{\mathcal{I}^{(k)}} = D_{\mathcal{I}^{(k)}}\infl_{\mu_{\mathcal{I}^{(k)}}}^{v^{(k)} \to v^{(k)}}\bm{x}_{v^{(k)}}.$$
Again by boundedness of $\varphi$, we have
\begin{align*}
    \norm{\bm{y}_{\mathcal{I}^{(k)}}}_{K} \leq \norm{D_{\mathcal{I}^{(k)}}\infl_{\mu_{\mathcal{I}^{(k)}}}^{v^{(k)} \to v^{(k)}}} \cdot \norm{\bm{x}_{\mathcal{I}^{(k)}}}_{K} \leq O_{A,\Delta,q}(1) \cdot \norm{\bm{x}_{\mathcal{I}^{(k)}}}_{K}.
\end{align*}
Hence, if we can show that for any $r \in V$, any realization of $\mathcal{T}$, and any $k \in \N$, the $k$th term of the summation is upper bounded by $(1 - \delta)^{k} \cdot \max_{\mathcal{I}^{(k)}} \norm{\bm{y}_{\mathcal{I}^{(k)}}}_{K}$, then we would be done. We will establish this by leveraging contraction. Observe that
\begin{align*}
    &\norm{\sum_{\substack{\mathcal{I}^{(0)} \to \dotsb \to \mathcal{I}^{(k)} \\ \text{in } \mathcal{T}}} \wrapp{\prod_{j=1}^{k} \wrapp{J_{\mathcal{I}^{(j)}}F^{\circ \varphi}}\wrapp{\overrightarrow{\bm{m}}_{\mathcal{I}^{(j-1)}}}} \bm{y}_{\mathcal{I}^{(k)}}}_{K} \\
    &\leq (1 - \delta) \cdot \max_{\mathcal{I}^{(1)}} \norm{\sum_{\substack{\mathcal{I}^{(1)} \to \dotsb \to \mathcal{I}^{(k)} \\ \text{in } \mathcal{T}}} \wrapp{\prod_{j=2}^{k} \wrapp{J_{\mathcal{I}^{(j)}}F^{\circ \varphi}}\wrapp{\overrightarrow{\bm{m}}_{\mathcal{I}^{(j-1)}}}} \bm{y}_{\mathcal{I}^{(k)}}}_{K} \tag{Contraction} \\
    &\leq \dotsb \tag{Induction} \\
    &\leq (1 - \delta)^{k} \cdot \max_{\mathcal{I}^{(k)}} \norm{\bm{y}_{\mathcal{I}^{(k)}}}_{K}.
\end{align*}
This completes the proof.
\end{proof}

\begin{proof}[Proof of \cref{lem:infl-decomp-jac}]
Again, we follow the proof in \cite{CLMM23}, which established an analogous identity in the case where the graph is a tree (so that there is no outer expectation with respect to $\xi$). We establish the claim analytically by expressing the marginal distribution as the derivative of the log-partition function, where the variables correspond to the external fields. We stress that the proof is actually \emph{agnostic} to the specific form of belief propagation; all that matters is that \cref{q:main} gives a way to exactly compute the marginals of a vertex $r$ given the marginals of its neighbors.

For variables $\bm{t} \in \R^{V \times [q]}$, define
\begin{align*}
    \mathcal{F}_{G}(\bm{t}) &\defeq \log \sum_{\sigma : V \to [q]} \wrapp{\prod_{uv \in E} A_{\sigma(u),\sigma(v)}} \cdot \exp\wrapp{\sum_{v \in V} \bm{t}_{v,\sigma(v)}} \\
    \mathcal{P}_{G,r,\mfc}(\bm{t}) &\defeq \frac{\partial}{\partial t_{r,\mfc}} \mathcal{F}_{G}(\bm{t}).
\end{align*}
Note that $\mathcal{F}_{G}(\bm{t})$ is the log-partition function for the tilted measure $e^{\bm{t}} \star \mu$, and $\mathcal{P}_{G,r,\mfc}(\bm{0}) = \mu_{G,r}(\mfc)$. Now observe that
\begin{align*}
    \varphi\wrapp{\mathcal{P}_{G,r}(\bm{t})} &= \E_{\wrapp{\bm{\lambda}^{(1)},\dots,\bm{\lambda}^{(d)}} \sim \xi}\wrapb{\varphi\wrapp{F_{A}\wrapp{\bigotimes_{i=1}^{d} \wrapp{\bm{\lambda}^{(i)} \star e^{\bm{t}} \star \mu}_{i}}}}, \qquad \forall \bm{t} \in \R^{V \times [q]}.
\end{align*}
Crucially, universality of the mixture $\xi$ means that \emph{$\xi$ does not depend on $\bm{t}$ whatsoever}. Hence, for each $\mfb,\mfc \in [q]$, if we differentiate entry $\mfc$ of the left-hand side with respect to the variable $t_{v,\mfb}$ for some $v \in V$ and $\mfb \in [q]$, we may push the differential inside the expectation by linearity. In particular,
\begin{align*}
    \frac{\partial}{\partial t_{v,\mfb}} \varphi\wrapp{\mathcal{P}_{G,r,\mfc}(\bm{t})} &= \E_{\wrapp{\bm{\lambda}^{(1)},\dots,\bm{\lambda}^{(d)}} \sim \xi}\wrapb{\frac{\partial}{\partial t_{v,\mfb}} F_{A,\mfc}^{\circ \varphi}\wrapp{\bm{m}_{1}\wrapp{\bm{\lambda}^{(1)},\bm{t}},\dots,\bm{m}_{d}\wrapp{\bm{\lambda}^{(d)},\bm{t}}}} \\
    &= \E_{\wrapp{\bm{\lambda}^{(1)},\dots,\bm{\lambda}^{(d)}} \sim \xi}\wrapb{\sum_{i=1}^{d} \sum_{\mfa \in [q]} \wrapp{\partial_{i,\mfa}F_{A,\mfc}^{\circ \varphi}}\wrapp{\overrightarrow{\bm{m}}} \cdot \frac{\partial}{\partial t_{v,\mfb}} \varphi\wrapp{\bm{p}_{i,\mfa}\wrapp{\bm{\lambda}^{(i)},\bm{t}}}}
\end{align*}
where $\bm{p}_{i,\mfa}\wrapp{\bm{\lambda}^{(i)},\bm{t}} = \wrapp{\bm{\lambda}^{(i)} \star e^{\bm{t}} \star \mu_{G-r}}_{u_{i}}(\mfa)$ and $\bm{m}_{i,\mfa}\wrapp{\bm{\lambda}^{(i)},\bm{t}} = \varphi\wrapp{\bm{p}_{i}\wrapp{\bm{\lambda}^{(i)},\bm{t}}_{\mfa}}$ for each $i \in [d]$ and $\mfa \in [q]$. On the other hand,
\begin{align*}
    \frac{\partial}{\partial t_{v,\mfb}} \log \mathcal{P}_{G,r,\mfc}(\bm{t}) \Big|_{\bm{t} = \bm{0}} = \infl_{\mu_{G}}^{r \to v}(\mfc,\mfb)
\end{align*}
and so simultaneously, we have
\begin{align*}
    \frac{\partial}{\partial t_{v,\mfb}} \varphi\wrapp{\mathcal{P}_{G,r,\mfc}(\bm{t})} \Big|_{\bm{t} = \bm{0}} &= \frac{\partial}{\partial t_{v,\mfb}} \wrapp{\varphi \circ \exp}\wrapp{\log \mathcal{P}_{G,r,\mfc}(\bm{t})} \Big|_{\bm{t} = \bm{0}} \\
    &= \Phi\wrapp{\mathcal{P}_{G,r,\mfc}(\bm{0})} \cdot \mathcal{P}_{G,r,\mfc}(\bm{0}) \cdot \frac{\partial}{\partial t_{v,\mfb}} \log \mathcal{P}_{G,r,\mfc}(\bm{t}) \Big|_{\bm{t} = \bm{0}} \\
    &= D_{r}(\mfc,\mfc) \cdot \infl_{\mu_{G}}^{r \to v}(\mfc,\mfb).
\end{align*}
By the same token,
\begin{align*}
    \frac{\partial}{\partial t_{v,\mfb}} \varphi\wrapp{\bm{p}_{i,\mfa}\wrapp{\bm{\lambda}^{(i)},\bm{t}}} \Big|_{\bm{t} = \bm{0}} = D_{i}(\mfa,\mfa) \cdot \infl_{\bm{\lambda}^{(i)} \star \mu_{G-r}}^{u_{i} \to v}(\mfa,\mfb).
\end{align*}
Substituting these calculations back in, aggregating them into a matrix over all $\mfb,\mfc \in [q]$, and writing everything as matrix-matrix products then finishes the proof.
\end{proof}

\section{Signature analysis of \texorpdfstring{$B$}{B}}
We show that our class of counterexamples, namely interaction matrices $B = B(\beta,A)$ as defined in~\cref{e:b} for $\beta \geq 1$ and $A$ satisfying the assumptions of \cref{t:technical}, includes all possible signatures apart from the antiferromagnetic signature (i.e. exactly one positive eigenvalue). Furthermore, no such interaction matrix can have the antiferromagnetic signature.

\begin{lemma}\label{lem:atleasttwo-poseig}
Fix $q \in \N$ with $q \geq 2$. Then for every $\beta \geq 1$ and every interaction matrix $A \in \R_{\geq0}^{q \times q}$ satisfying $A \geq \allone_{q}\allone_{q}^{\top}$ entrywise with $A \neq \allone_{q}\allone_{q}^{\top}$, the matrix $B = B(\beta,A)$ defined as in \cref{e:b} has at least two positive eigenvalues.
\end{lemma}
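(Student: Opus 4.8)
The plan is to produce an explicit two-dimensional subspace $W \subseteq \R^{q+1}$ on which the quadratic form $x \mapsto x^\top B x$ is positive definite. Once this is done, the Courant--Fischer min-max characterization immediately gives $\lambda_{2}(B) \geq \min_{0 \neq x \in W} x^\top B x / \norm{x}_{2}^{2} > 0$, where $\lambda_{1}(B) \geq \lambda_{2}(B) \geq \dots$ denote the eigenvalues of $B$ in decreasing order, so $B$ has at least two positive eigenvalues. Note that one cannot simply invoke Cauchy interlacing with the principal submatrix $A$: the entrywise bound $A \geq \ind_{q}\ind_{q}^{\top}$ does \emph{not} imply $A \succeq \ind_{q}\ind_{q}^{\top}$ as quadratic forms, so it is not even clear that $A$ has two positive eigenvalues.

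First I would use the hypothesis to pick colors $\mfa,\mfb \in [q]$ with $A_{\mfa,\mfb} > 1$ (and hence $A_{\mfb,\mfa} > 1$ by symmetry), which exist precisely because $A \geq \ind_q \ind_q^\top$ entrywise while $A \neq \ind_q \ind_q^\top$. The key idea is to test $B$ only against a \emph{nonnegative} vector, since for such vectors the entrywise comparison \emph{does} transfer to the quadratic form. Concretely, set $w \defeq e_{\mfa} + e_{\mfb} \in \R_{\geq0}^{[q]}$ (so $w = 2e_{\mfa}$ if $\mfa = \mfb$), which is nonzero, entrywise nonnegative, and has $w_{\mfa}, w_{\mfb} > 0$. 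Because $w \geq 0$ and $A \geq \ind_{q}\ind_{q}^{\top}$ with strict inequality at entry $(\mfa,\mfb)$,
\[
w^\top A w \;=\; \sum_{\mfc,\mfd \in [q]} A_{\mfc,\mfd}\, w_{\mfc} w_{\mfd} \;>\; \sum_{\mfc,\mfd \in [q]} w_{\mfc} w_{\mfd} \;=\; \wrapp{\ind_{q}^\top w}^{2} \;\geq\; 0 ,
\]
and since $\beta \geq 1$ this yields $\beta \wrapp{w^\top A w} - \wrapp{\ind_{q}^\top w}^{2} \geq w^\top A w - \wrapp{\ind_{q}^\top w}^{2} > 0$.

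Finally, writing $e_{0}$ for the coordinate of the special spin $\mfc^{*}$ and viewing $w$ as supported on the $[q]$-block, I would take $W \defeq \operatorname{span}(e_{0}, w)$, which is two-dimensional since $w \neq 0$ lies off the $e_0$ axis. From the block form of $B$ in~\cref{e:b}, for $x = s e_{0} + t w$ we get
\[
x^\top B x \;=\; \beta s^{2} + 2 s t \wrapp{\ind_{q}^\top w} + t^{2} \wrapp{w^\top A w},
\]
whose associated symmetric matrix $\begin{bmatrix} \beta & \ind_{q}^\top w \\ \ind_{q}^\top w & w^\top A w\end{bmatrix}$ has positive trace (as $\beta, w^\top A w > 0$) and positive determinant by the previous paragraph, hence is positive definite; thus $x^\top B x > 0$ for all nonzero $x \in W$, and the Courant--Fischer bound above finishes the proof. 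The only point requiring care is the one flagged above — choosing the nonnegative test vector $w$ so that the entrywise hypothesis becomes usable — after which everything reduces to a routine $2\times 2$ positive-definiteness check.
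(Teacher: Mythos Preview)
Your proof is correct and follows essentially the same approach as the paper: exhibit a two-dimensional subspace of nonnegative test vectors on which the quadratic form of $B$ is positive definite, then invoke the variational characterization of eigenvalues. The only cosmetic difference is the choice of second direction: the paper uses $w = \ind_{q}$ on the $[q]$-block (so that $\ind_{q}^{\top}A\ind_{q} > q^{2}$ directly from $A \geq \ind_{q}\ind_{q}^{\top}$, $A \neq \ind_{q}\ind_{q}^{\top}$), whereas you localize to $w = e_{\mfa} + e_{\mfb}$ at a strict entry; both exploit the same key point that nonnegativity of $w$ converts the entrywise hypothesis into the needed quadratic-form inequality.
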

\begin{proof}
For $\gamma, \delta \in \RR$, note that
\begin{align*}
\left[
\begin{array}{c}
\gamma \\
\delta \ind_q
\end{array}
\right]^\top
B
\left[
\begin{array}{c}
\gamma \\
\delta \ind_q
\end{array}
\right]
&= \beta \gamma^2 + 2 \gamma \delta \langle \ind_q, \ind_q \rangle + \delta^2 \ind_q^{\top} A \ind_q \\
&> \gamma^2 + 2 \gamma \delta q + \delta^2 q^2 \tag{Using $\beta \ge 1, A \ge \ind_q \ind_q^{\top}, A \neq \ind_q \ind_q^{\top}$} \\
&= (\gamma + \delta q)^2 \\
&\geq 0.
\end{align*}
Thus $B$ is positive definite on the two-dimensional subspace spanned by the vectors $\allone_{\mfc^{*}}, \allone_{[q]} \in \R^{q+1}$. In particular, $B$ must have at least two strictly positive eigenvalues.
\end{proof}

\begin{proposition}\label{prop:signature}
For every $d,q \in \N$ with $d,q \geq 2$, every $\beta > 1$ and every integer $0 \leq k \leq q-1$, there exists an interaction matrix $A$ satisfying the conditions of~\cref{t:technical} such that $B(\beta,A)$ (defined as in \cref{e:b}) has exactly $k + 2$ positive eigenvalues and $q - k - 1$ negative eigenvalues.
\end{proposition}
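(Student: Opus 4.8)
The plan is to reduce the signature of $B=B(\beta,A)$ to that of a Schur complement and then exhibit an explicit two‑parameter family of interaction matrices $A$ that realizes every admissible signature.

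\textbf{Reduction.} Since $\beta>1>0$, the $1\times 1$ leading block of $B$ is invertible, so by the Haynsworth inertia additivity formula (the inertia of a symmetric matrix with an invertible leading principal block is the sum of the inertias of that block and of its Schur complement) the numbers of positive, negative, and zero eigenvalues of $B$ are obtained by adding $(1,0,0)$ to the corresponding counts for the Schur complement $A-\tfrac1\beta\ind_q\ind_q^\top$. Thus it suffices, for each $0\le k\le q-1$, to construct a symmetric $A$ satisfying (a) and (b) of \cref{t:technical} for which $A-\tfrac1\beta\ind_q\ind_q^\top$ has exactly $k+1$ positive and $q-k-1$ negative eigenvalues and no zero eigenvalue. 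Writing $\gamma:=1-\tfrac1\beta\in(0,1)$, note $A-\tfrac1\beta\ind_q\ind_q^\top=\gamma\ind_q\ind_q^\top+(A-\ind_q\ind_q^\top)$, and condition (a) is exactly the requirement that $A-\ind_q\ind_q^\top$ be entrywise nonnegative and nonzero.

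\textbf{The construction.} Split $[q]=S_1\sqcup S_2$ with $S_1=\{1,\dots,k\}$ and $S_2=\{k+1,\dots,q\}$, so $|S_2|=q-k\ge 1$; let $\ind_S$ denote the $0/1$ indicator vector of $S\subseteq[q]$. If $k\le q-2$ (so $|S_2|\ge 2$), set $A:=\ind_q\ind_q^\top+(\rho-1)\diag(\ind_{S_1})+\tau\bigl(\ind_{S_2}\ind_{S_2}^\top-\diag(\ind_{S_2})\bigr)$ for parameters $\rho>1$ and $\tau>\rho-1$; if $k=q-1$, instead take the ferromagnetic Potts matrix $A:=\ind_q\ind_q^\top+(\rho-1)I_q$ with $\rho>1$. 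In both cases $A\ge\ind_q\ind_q^\top$ entrywise, with strict inequality in at least one entry (a diagonal entry on $S_1$ equals $\rho$, or, when $S_1=\emptyset$, an off‑diagonal entry on $S_2$ equals $1+\tau$), so (a) holds. To read off the signature when $k\le q-2$, write $M:=A-\tfrac1\beta\ind_q\ind_q^\top$ in block form along $S_1,S_2$; the $S_1$‑block is $M_{11}=(\rho-1)I_k+\gamma\ind_k\ind_k^\top\succ 0$, so a second application of Haynsworth additivity gives that the inertia of $M$ is $(k,0,0)$ plus the inertia of $S=(\gamma+\tau-c)\,\ind_{S_2}\ind_{S_2}^\top-\tau I_{q-k}$, where $c:=\gamma^2\,\ind_{S_1}^\top M_{11}^{-1}\ind_{S_1}$ satisfies $0\le c<\gamma$ by a Sherman--Morrison computation. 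Diagonalizing $S$ on $\ind_{S_2}$ and its orthocomplement, its eigenvalues are $(q-k)(\gamma+\tau-c)-\tau=(q-k)(\gamma-c)+(q-k-1)\tau>0$ and $-\tau<0$ with multiplicity $q-k-1$, so $S$ has inertia $(1,q-k-1,0)$ and hence $M$ has inertia $(k+1,q-k-1,0)$, giving $B$ exactly $k+2$ positive and $q-k-1$ negative eigenvalues. For $k=q-1$ this is immediate: $M=\gamma\ind_q\ind_q^\top+(\rho-1)I_q\succ 0$.

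\textbf{Condition (b) and the main obstacle.} For $k=q-1$, $A$ is invariant under permutations of $[q]$, so $\mcM(d,A)=[q]$ and $A_{\mfc,\mfc}=\rho\ne 1=A_{\mfc,\mfc'}$ (for $\mfc\ne\mfc'$) supplies the second part of (b). For $k\le q-2$, the column sum $\sum_{\mfc}A_{\mfc,\mfb}^d$ equals $\rho^d+(q-1)$ for $\mfb\in S_1$ and $(k+1)+(q-k-1)(1+\tau)^d$ for $\mfb\in S_2$; the choice $\tau>\rho-1$ makes $(1+\tau)^d>\rho^d$, so the latter strictly dominates and $\mcM(d,A)=S_2$, which has $q-k\ge 2$ elements. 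Taking distinct $\mfa_1,\mfa_2\in S_2$ and $\mfc=\mfa_1$ yields $A_{\mfa_1,\mfa_1}=1\ne 1+\tau=A_{\mfa_1,\mfa_2}$, verifying (b). I expect the one genuinely delicate point to be threading condition (b) through simultaneously with the exact signature: one cannot make $\mcM(d,A)$ have $\ge 2$ elements by small generic diagonal perturbations (those collapse $\mcM$ to a single point), which is precisely why the construction uses a \emph{symmetric} clique boost on the block $S_2$, and treats the singleton case $|S_2|=1$ via a separate fully symmetric matrix; everything else is a routine Schur‑complement eigenvalue computation.
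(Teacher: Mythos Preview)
Your proof is correct and takes a genuinely different route from the paper's. The paper builds $A$ from two Potts-like diagonal blocks (ferromagnetic on the first $k$ spins, antiferromagnetic on the last $q-k$, the latter scaled by a parameter $t$ tuned so that \emph{every} column sum $\sum_{\mfc}A_{\mfc,\mfb}^d$ is equal, forcing $\mcM(d,A)=[q]$); it then computes the characteristic polynomial of $B$ explicitly via the block determinant formula, obtaining a factorization $(x-(\beta-1))^{k-1}(x-t(\gamma-1))^{q-k-1}q(x)$ with $q$ cubic, and finishes by combining Cauchy interlacing with a sign analysis of the constant term of $q$. Your approach instead applies Haynsworth inertia additivity twice to reduce everything to the trivial spectrum of a rank-one shift of a scalar matrix, and your construction (diagonal boost on $S_1$, off-diagonal clique boost on $S_2$) only needs $\mcM(d,A)=S_2$ rather than all of $[q]$. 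This is more elementary---no cubic root analysis, no interlacing---and handles the boundary case $k=0$ transparently, whereas the paper's factorization formally develops a negative exponent there. One small point worth tightening: the sentence ``$(1+\tau)^d>\rho^d$, so the latter strictly dominates'' elides the actual inequality $(k+1)+(q-k-1)(1+\tau)^d>\rho^d+(q-1)$, which needs one more line (split off one copy of $(1+\tau)^d$ to beat $\rho^d$ and use $(1+\tau)^d>1$ on the remaining $q-k-2$ copies).
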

\begin{proof}
Fix $d,q,\beta,k$ as in the claim. Consider the matrix
\begin{align*}
    A = \begin{bmatrix}
        \allone_{k}\allone_{k}^{\top} + (\beta - 1)\id_{k} & \allone_{k}\allone_{q-k}^{\top} \\
        \allone_{q-k}\allone_{k}^{\top} & t\wrapp{\allone_{q-k}\allone_{q-k}^{\top} + (\gamma - 1)\id_{q-k}}
    \end{bmatrix},
\end{align*}
where $0 < \gamma < 1$ is a parameter to be chosen later, and $t > 1$ is chosen to satisfy
\begin{align*}
    t^{d} = \frac{\beta^{d} + (q - k - 1)}{\gamma^{d} + (q - k - 1)}.
\end{align*}
Since $\beta > 1$, we have $t > 1$ even if we set $\gamma = 1$. Hence, by continuity, there exists $\epsilon = \epsilon(\beta,d) > 0$ such that for all $1 - \epsilon < \gamma < 1$, we have $t\gamma > 1$. These facts imply that $A \geq \allone_{q}\allone_{q}^{\top}$ with strict inequality for at least one entry. Furthermore, our choice of $t > 1$ implies that the function $\mfc \mapsto \sum_{\mfb \in [q]} A_{\mfc,\mfb}^{d}$ is constant, independent of $\mfc \in [q]$. Hence, $\mathscr{M}(d,A) = [q]$. Combining this with the fact that $A$ is not proportional to $\allone_{q}\allone_{q}^{\top}$ implies there exists some $\mfc \in [q]$ and $\mfa_{1},\mfa_{2} \in \mathscr{M}(d,A) = [q]$ such that $A_{\mfc,\mfa_{1}} \neq A_{\mfc,\mfa_{2}}$. This establishes that $A$ satisfies the conditions of~\cref{t:technical}.

Let us now compute the signature of
\begin{align*}
    B(\beta,A) = \begin{bmatrix}
        \allone_{k+1}\allone_{k+1}^{\top} + (\beta - 1)\id_{k+1} & \allone_{k+1}\allone_{q-k}^{\top} \\
        \allone_{q-k}\allone_{k+1}^{\top} & t\wrapp{\allone_{q-k}\allone_{q-k}^{\top} + (\gamma - 1)\id_{q-k}}
    \end{bmatrix}.
\end{align*}
Using the block determinant formula, its characteristic polynomial may be computed as
\begin{align*}
    &\det\wrapp{x\id_{q+1} - B(\beta,A)} \\
    &= \det\begin{bmatrix}
        (x - (\beta - 1))\id_{k+1} - \allone_{k+1}\allone_{k+1}^{\top} & -\allone_{k+1}\allone_{q-k}^{\top} \\
        -\allone_{q-k}\allone_{k+1}^{\top} & (x - t(\gamma - 1))\id_{q-k} - t\allone_{q-k}\allone_{q-k}^{\top}
    \end{bmatrix} \\
    &= \det\wrapp{(x - (\beta - 1))\id_{k+1} - \allone_{k+1}\allone_{k+1}^{\top}} \cdot \det\wrapp{(x - t(\gamma - 1))\id_{q-k} - (t + \delta)\allone_{q-k}\allone_{q-k}^{\top}}.
\end{align*}
where
\begin{align*}
    \delta \defeq \allone_{k+1}^{\top}\wrapp{(x - (\beta - 1))\id_{k+1} - \allone_{k+1}\allone_{k+1}^{\top}}^{-1}\allone_{k+1}.
\end{align*}
We have
\begin{align*}
    \det\wrapp{(x - (\beta - 1))\id_{k+1} - \allone_{k+1}\allone_{k+1}^{\top}} = \wrapp{x - (\beta - 1)}^{k} \cdot \wrapp{x - (\beta + k)},
\end{align*}
and
\begin{align*}
    \delta &= \allone_{k+1}^{\top}\wrapp{\frac{1}{x-(\beta-1)} \id_{k+1} + \frac{1}{\wrapp{x - (\beta - 1)}^{2}} \cdot \frac{1}{1 - \frac{k+1}{x - (\beta - 1)}} \cdot \allone_{k+1}\allone_{k+1}^{\top}}\allone_{k+1} \tag{Sherman--Morrison Formula} \\
    &= \frac{k+1}{x - (\beta - 1)} \wrapp{1 + \frac{k+1}{x - (\beta + k)}}.
\end{align*}
Hence, the second determinant term above equals
\begin{align*}
    \wrapp{x - t(\gamma - 1)}^{q-k-1} \cdot \wrapp{x - t(\gamma - 1) - (q-k)\wrapp{t + \frac{k+1}{x - (\beta - 1)} \wrapp{1 + \frac{k+1}{x - (\beta + k)}}}}.
\end{align*}
Putting together all these calculations, the full characteristic polynomial is given by
\begin{align*}
    \wrapp{x - (\beta - 1)}^{k-1}\wrapp{x - t(\gamma - 1)}^{q-k-1} \cdot q(x)
\end{align*}
where $q$ is the cubic polynomial
\begin{align*}
    q(x) &= \wrapp{x - t(\gamma - 1)}\wrapp{x - (\beta - 1)}\wrapp{x - (\beta + k)} \\
    &- (q - k)\wrapp{t\wrapp{x - (\beta - 1)}\wrapp{x - (\beta + k)} + (k+1)\wrapp{x - (\beta + k)} + (k+1)^{2}}.
\end{align*}
We will show that for $1 - \epsilon < \gamma < 1$ sufficiently close to $1$, $A$ has exactly $k + 2$ positive eigenvalues and $q - k - 1$ negative eigenvalues regardless of our choice of $t > 1$. The negative eigenvalues are already furnished by the $q - k - 1$ copies of $t(\gamma - 1)$, using our assumption that $0 < \gamma < 1$. We also have $k - 1$ copies of the eigenvalue $\beta - 1 > 0$. Hence, we must show that all $3$ roots of our degree-$3$ polynomial $q$ are all positive. Note that by the fact that $\allone_{k+1}\allone_{k+1}^{\top} + (\beta - 1)\id_{k+1}$ is positive definite, the Cauchy Interlacing Theorem implies that $B(\beta,A)$ has at least $k+1$ positive eigenvalues. In particular, we already know that $q$ has at least two positive roots. Hence, to show that all three roots of $q$ are positive, we just need to show that the constant term of $q$ is negative (since the leading coefficient of $q$ is positive). For this, observe that the constant term is given by
\begin{align*}
    &-t(\gamma - 1)(\beta - 1)(\beta + k) - (q-k)\wrapp{t(\beta - 1)(\beta + k) - (k+1)(\beta + k) + (k+1)^{2}} \\
    &= (q-k)(k+1)(\beta - 1) - t(\beta-1)(\beta + k)(q-k-1+\gamma).
\end{align*}
Since $0 \leq k \leq q - 1$, we have $q - k - 1 + \gamma > 0$ and $q - k > 0$. If we choose $1 - \epsilon < \gamma < 1$ sufficiently close to $1$ (depending on $\beta > 1$), we can guarantee that
\begin{align*}
    (q-k)(k+1) < (\beta + k)(q - k - 1 + \gamma).
\end{align*}
For such a $1 - \epsilon < \gamma < 1$, the constant term is strictly negative for every $t > 1$, and so we are done.
\end{proof}

\section{Verifying Numerical Inequalities}\label{sec:numerical}
\begin{proof}[Proof of \cref{claim:weird}]
Write $f(q,\beta) = \frac{q-1}{\log \frac{1}{\beta}} \log\frac{q-1}{(q - 1) - (1 - \beta)}$. First, we claim that for all $0 \leq \beta \leq 1$, $f$ is monotone decreasing in $q$ on $[1,\infty)$ with $\lim_{q \to \infty} f(q,\beta) = \frac{1 - \beta}{\log \frac{1}{\beta}}$. If we can show this, then we are done, since
\begin{align*}
    \frac{q-1}{\log \frac{1}{\beta}} \log\frac{q-1}{(q - 1) - (1 - \beta)} \geq \frac{1 - \beta}{\log \frac{1}{\beta}} \geq \sqrt{\beta},
\end{align*}
where in the final step we used the inequality $\log x \leq \frac{x - 1}{\sqrt{x}}$ for $x \geq 1$.

Now, observe that
\begin{align*}
    \frac{\partial}{\partial q} f(q,\beta) &= \frac{1}{\log \frac{1}{\beta}} \log \frac{1}{1 - \frac{1 - \beta}{q-1}} - \frac{q-1}{\log \frac{1}{\beta}} \cdot \frac{1-\beta}{(q-1)^{2} \cdot \wrapp{1 - \frac{1-\beta}{q-1}}} \\
    &= \frac{1}{\log \frac{1}{\beta}} \cdot \wrapp{\log \frac{1}{1 - \frac{1-\beta}{q-1}} + 1 - \frac{1}{1 - \frac{1-\beta}{q-1}}}.
\end{align*}
This is negative by applying the inequality $\log x \leq x - 1$ with $x = \frac{1}{1 - \frac{1-\beta}{q-1}}$, and by using $0 \leq \beta \leq 1$. To compute the limit, we apply L'H\^{o}pital's Rule to obtain
\begin{align*}
    \lim_{q \to \infty} f(q,\beta) = \frac{1}{\log \frac{1}{\beta}} \cdot \lim_{q \to \infty} \frac{1-\beta}{(q-1)^{2} \cdot \wrapp{1 - \frac{1-\beta}{q-1}}} \cdot \frac{1}{\frac{-1}{(q-1)^{2}}} = \frac{1-\beta}{\log \frac{1}{\beta}}
\end{align*}
as desired.
%The claim is immediate from the inequalities $\log \frac{1}{\beta} \leq \frac{1}{\beta} - 1$ and $\log\frac{q-1}{(q - 1) - (1 - \beta)} = \log \frac{1}{1 - \frac{1 - \beta}{q - 1}} \geq \frac{1-\beta}{q-1}$, which follow from $1 + x \leq e^{x} \leq \frac{1}{1 - x}$ on $[0,1]$, respectively.
\end{proof}
\begin{proof}[Proof of \cref{claim:insane-function}]
Write
\begin{align*}
    \Xi_{d,q,\epsilon}(\beta) &= (q-1) \frac{f(\beta)}{g(\beta)} \qquad \text{where} \\
    f(\beta) &= \frac{1}{d} \log\wrapp{\epsilon \beta^{-\frac{d}{10q}} + (1 - \epsilon)} - \log\wrapp{\epsilon \beta^{\frac{1}{10q(q-1)}} + (1 - \epsilon)} \\
    g(\beta) &= \log \frac{1}{\beta}.
\end{align*}
We will first show that $\Xi_{d,q,\epsilon}'(\beta) \leq 0$, or equivalently $f'(\beta)g(\beta) - f(\beta)g'(\beta) \leq 0$, on $\wrapb{1 - \frac{q}{d+1}, 1}$. To achieve this, we will show that $h(\beta) \defeq \beta\wrapp{f'(\beta)g(\beta) - f(\beta)g'(\beta)}$ evaluates to $0$ at $\beta = 1$, and that its derivative is nonnegative on $\wrapb{1 - \frac{q}{d+1}, 1}$. We have
\begin{align*}
    f'(\beta) &= -\frac{1}{d} \cdot \frac{\frac{d\epsilon}{10q} \beta^{-\frac{d}{10q}}}{\epsilon \beta^{-\frac{d}{10q}} + (1 - \epsilon)} \cdot \frac{1}{\beta} - \frac{\frac{\epsilon}{10q(q-1)}\beta^{\frac{1}{10q(q-1)}}}{\epsilon \beta^{\frac{1}{10q(q-1)}} + (1 - \epsilon)} \cdot \frac{1}{\beta} \qquad \text{and} \qquad g'(\beta) = -\frac{1}{\beta}.
\end{align*}
Hence
\begin{align*}
    h(\beta) = -\frac{1}{d} \cdot \frac{\frac{d\epsilon}{10q} \beta^{-\frac{d}{10q}}}{\epsilon \beta^{-\frac{d}{10q}} + (1 - \epsilon)} \cdot \log \frac{1}{\beta} - \frac{\frac{\epsilon}{10q(q-1)}\beta^{\frac{1}{10q(q-1)}}}{\epsilon \beta^{\frac{1}{10q(q-1)}} + (1 - \epsilon)} \cdot \log \frac{1}{\beta} + f(\beta).
\end{align*}
It is straightforward to check that $h(1) = 0$. Differentiating again, we obtain a miraculous cancellation, and arrive at
\begin{align*}
    h'(\beta) &= \underset{=0}{\underbrace{f'(\beta) + \frac{1}{\beta} \wrapp{\frac{1}{d} \cdot \frac{\frac{d\epsilon}{10q} \beta^{-\frac{d}{10q}}}{\epsilon \beta^{-\frac{d}{10q}} + (1 - \epsilon)} + \frac{\frac{\epsilon}{10q(q-1)}\beta^{\frac{1}{10q(q-1)}}}{\epsilon \beta^{\frac{1}{10q(q-1)}} + (1 - \epsilon)}}}} \\
    &- \log \frac{1}{\beta} \cdot \frac{\partial}{\partial \beta} \wrapp{\frac{1}{d} \cdot \frac{\frac{d\epsilon}{10q} \beta^{-\frac{d}{10q}}}{\epsilon \beta^{-\frac{d}{10q}} + (1 - \epsilon)} + \frac{\frac{\epsilon}{10q(q-1)}\beta^{\frac{1}{10q(q-1)}}}{\epsilon \beta^{\frac{1}{10q(q-1)}} + (1 - \epsilon)}}.
\end{align*}
In particular, we just need to verify that
\begin{align*}
    &\frac{1}{d} \cdot \frac{\frac{d\epsilon}{10q} \beta^{-\frac{d}{10q}}}{\epsilon \beta^{-\frac{d}{10q}} + (1 - \epsilon)} + \frac{\frac{\epsilon}{10q(q-1)}\beta^{\frac{1}{10q(q-1)}}}{\epsilon \beta^{\frac{1}{10q(q-1)}} + (1 - \epsilon)} \\
    &= \text{const.} - \frac{1-\epsilon}{10q} \cdot \Bigg(\underset{\defeq \ell(\beta)}{\underbrace{\frac{1}{\epsilon \beta^{-\frac{d}{10q}} + (1 - \epsilon)} + \frac{1}{q-1} \cdot \frac{1}{\epsilon \beta^{\frac{1}{10q(q-1)}} + (1 - \epsilon)}}}\Bigg).
\end{align*}
is a decreasing function in $\beta$ on $\wrapb{1 - \frac{q}{d+1}, 1}$, or equivalently that $\ell(\beta)$ is increasing on $\wrapb{1 - \frac{q}{d+1}, 1}$. For this, we have
\begin{align*}
    &\ell'(\beta) = \frac{\epsilon d}{10q} \cdot \frac{\beta^{-\frac{d}{10q}}}{\wrapp{\epsilon \beta^{-\frac{d}{10q}} + (1 - \epsilon)}^{2}} \cdot \frac{1}{\beta} - \frac{\epsilon}{10q(q-1)^{2}} \frac{\beta^{\frac{1}{10q(q-1)}}}{\wrapp{\epsilon \beta^{\frac{1}{10q(q-1)}} + (1 - \epsilon)}^{2}} \cdot \frac{1}{\beta} \geq 0 \\
    &\iff d(q-1)^{2} \geq \beta^{\frac{1}{10q(q-1)} + \frac{d}{10q}} \wrapp{\frac{\epsilon \beta^{-\frac{d}{10q}} + (1 - \epsilon)}{\epsilon \beta^{\frac{1}{10q(q-1)}} + (1 - \epsilon)}}^{2}.
\end{align*}
Since $1 - \frac{q}{d+1} \leq \beta \leq 1$, this is implied by the inequality
\begin{align*}
    d(q-1)^{2}(1-\epsilon)^{2} \geq \wrapp{\epsilon \wrapp{1 - \frac{q}{d+1}}^{-\frac{d}{10q}} + (1 - \epsilon)}^{2}.
\end{align*}
To see that this holds, observe that since $d \geq 2q$, we have
\begin{align}\label{eq:numer}
    \wrapp{1 - \frac{q}{d+1}}^{-\frac{d}{10q}} \leq \wrapp{\wrapp{1 - \frac{q}{d+1}}^{-\frac{d+1}{q}}}^{1/10} \leq \wrapp{\frac{3e}{2}}^{1/10} \leq 1.15.
\end{align}
Hence, the right-hand side above is upper bounded by
\begin{align*}
    \wrapp{1 + 0.15\epsilon}^{2} \leq 4(1 - \epsilon)^{2} \leq d(q-1)^{2}(1 - \epsilon)^{2},
\end{align*}
which holds since $q \geq 2, d \geq 2q \geq 4$, and $\epsilon \leq 1/4$.

Now, we bound $\Xi_{d,q,\epsilon}\wrapp{1 - \frac{q}{d+1}}$. We have
\begin{align*}
    \frac{q-1}{\log \frac{1}{\beta}} \leq \frac{q-1}{1 - \beta} \leq (d+1)
\end{align*}
where in the last step we used $\beta \geq 1- \frac{q}{d+1}$. Also have that for $\beta = 1 - \frac{q}{d+1}$
\begin{align*}
    \frac{\wrapp{\epsilon \beta^{-\frac{d}{10q}} + (1 - \epsilon)}^{1/d}}{\epsilon \beta^{\frac{1}{10q(q-1)}} + (1 - \epsilon)} &\leq \frac{\wrapp{1 + 0.15\epsilon}^{1/d}}{1 - \epsilon + \epsilon (3e/2)^{-\frac{1}{10(d+1)(q-1)}}} \tag{Using \cref{eq:numer}} \\
    &\leq \frac{\wrapp{1 + 0.15\epsilon}^{1/d}}{1 - \frac{\epsilon}{6(d+1)(q-1)}} \tag{Using $1.15^{-x} \geq 1 - \frac{x}{6}$ for $x \geq 0$} \\
    &\leq \wrapp{\frac{1 + 0.15\epsilon}{1 - \frac{\epsilon}{6(q-1)}}}^{1/d} \tag{Bernoulli Inequality} \\
    &\leq \wrapp{1 + \frac{\epsilon}{2}}^{1/d}.
\end{align*}
Combining these ingredients, we obtain
\begin{align*}
    \Xi_{d,q,\epsilon}\wrapp{1 - \frac{q}{d+1}} \leq \epsilon
\end{align*}
as desired.
\end{proof}

\end{document}